\newcommand{\genlegendre}[4]{%
  \genfrac{(}{)}{}{#1}{#3}{#4}%
  \if\relax\detokenize{#2}\relax\else_{\!#2}\fi
}
\newcommand{\legendre}[3][]{\genlegendre{}{#1}{#2}{#3}}
\newcommand{\N}{\mathbb{N}}
\titlespacing{\paragraph}{0em}{0em}{0.5em}
\titlespacing{\subparagraph}{0em}{0em}{0.5em}
\newcolumntype{L}{>{$}l<{$}}
\newtheorem{theorem}{Theorem}[]
\newtheorem{definition}[theorem]{Definition}
\newtheorem{lemma}[theorem]{Lemma}
\newtheorem{proposition}[theorem]{Proposition}
\begin{document}

\title{From Worst to Average Case to Incremental Search Bounds of the Strong Lucas Test} 

\author{Semira Einsele \and Gerhard Wunder}
\address{Semira Einsele, Gerhard Wunder; Department of Mathematics and Computer Science, Freie Universit\"at Berlin, Germany  \vspace{-0.3cm}}
\email{semira.einsele@fu-berlin.de}

\kern-3em

\begin{abstract}
The strong Lucas test is a widely used probabilistic primality test in cryptographic libraries. When combined with the Miller-Rabin primality test, it forms the Baillie-PSW primality test, known for its absence of false positives, undermining the relevance of a complete understanding of the strong Lucas test.

In primality testing, the worst-case error probability serves as an upper bound on the likelihood of incorrectly identifying a composite as prime. For the strong Lucas test, this bound is $4/15$ for odd composites, not products of twin primes. 
On the other hand, the average-case error probability indicates the probability that a randomly chosen integer is inaccurately classified as prime by the test. This bound is especially important for practical applications, where we test primes that are randomly generated and not generated by an adversary.

The error probability of $4/15$ does not directly carry over due to the scarcity of primes, and whether this estimate holds has not yet been established in the literature. This paper addresses this  gap by demonstrating that an integer passing $t$ consecutive test rounds, alongside additional standard tests of low computational cost, is indeed prime with a probability greater than $1-(4/15)^t$ for all $t\geq 1$.

Furthermore, we introduce error bounds for the incremental search algorithm based on the strong Lucas test, as there are no established bounds up to date as well. Rather than independent selection, in this approach, the candidate is chosen uniformly at random, with subsequent candidates determined by incrementally adding 2. This modification reduces the need for random bits and enhances the efficiency of trial division computation further.

\keywords{Primality Test \and Strong Lucas Test  \and Average Case Error Probability \and Baillie-PSW Test \and Incremental Search Bound.}
\end{abstract}
\maketitle

\section{Introduction}\label{sec1}
\subsection{Primes and Primality Testing Algorithms in Cryptography}
Other than being mathematically interesting, prime numbers are of great importance in cryptography. Many schemes in public key cryptography rely on choosing certain parameters as primes, exemplified by RSA and the Diffie-Hellmann key exchange protocol. 
These protocols come into play whenever we establish a VPN connection, use secure messaging Apps, or utilize smart cards for contactless payments.
The consequences of prime parameter selection mistakes are potentially catastrophic. It is thus essential to have reliable primality testing algorithms that determine whether a given number is composite or prime with high probability. These algorithms are integral components in nearly every cryptographic library or mathematical software system.

\subsection{Probabilistic and Deterministic Primality Tests}

Until 2004, determining whether deterministic primality tests, capable of consistently distinguishing between primes and composites falls within the $\mathcal{P}$ complexity class without relying on any mathematical conjectures remained unsolved. This issue was ultimately solved in \cite{primesinP} with the introduction of a polynomial time algorithm. Nonetheless, while theoretically sound, they are impractical for everyday use, especially when confronted with the large inputs typical in cryptography, and probabilistic primality tests still outperform deterministic tests in terms of efficiency and remain the preferred choice in most scenarios. This preference persists despite the trade-off involving reduced accuracy. Furthermore, it has been demonstrated that those employed in practice have an error probability that is practically negligible. These tests operate as randomized  Monte Carlo algorithms, ensuring always correct identification of prime numbers while rarely producing false positives. The independent nature of each test round allows for control over the probability of mistakenly classifying a composite as prime by increasing the number of rounds.

In the domain of primality tests, there are two key error probability categories: worst-case and average-case error probabilities. The \textit{worst-case} error probability is the maximum probability that a composite integer will be mistakenly identified as a prime number by the test.
Conversely, the \textit{average-case} error probability considers the probability of a random number being misidentified as a prime number by the test. 
In adversarial scenarios, such as in the Diffie-Hellman key exchange protocol, the parameters could be chosen by an adversary. They might intentionally construct composites with a higher likelihood of being wrongly declared prime by the probabilistic primality test compared to randomly selected numbers. Therefore, it is essential for the primality test to exhibit a small worst-case error probability. 
Conversely, in numerous other applications, understanding how the test performs in the average-case is more important. Typically, most randomly chosen composites are less likely to be accepted compared to those with the highest-probability of fooling the primality test.

\subsection{The Miller-Rabin, strong Lucas and Bailllie-PSW Test}
The Miller-Rabin primality test is frequently utilized in cryptographic libraries and mathematical software due to its relatively straightforward implementation, efficient running time, and well-established error bounds, which have garnered trust within the cryptographic community. The work in \cite{DamEtAl} provides average-case error estimates, while the works in \cite{Rabin}, \cite{Monier} contribute worst-case error estimates. Detailed mathematical information about the test is available in  Appendix \ref{fermat} for those interested.

Another notable probabilistic primality test is the Lucas test and its more stringent variant, the strong Lucas test. Similar to the Miller-Rabin test, there exist both worst-case \cite{Rabin-Mon-Lucas} and average-case error estimates \cite{Einseleav}.
Besides being one of the main primality tests implemented in cryptolibraries, it gains importance due to its role in the Baillie-PSW test, which is a specific combination of the Miller-Rabin and Lucas test, see Algorithm \ref{Baillie-PSW-alg}. So far, no false positives have been identified passing this combined test, and the challenge of constructing a single concrete example remains an unsolved problem. Indeed, Gilchrist \cite{gilchrist} computed the number of Baillie-PSW pseudoprimes up to $2^{64}$ and showed that there are none. Empirical data suggest that it seems very unlikely that integers of cryptographic size would be Baillie-PSW pseudoprimes. 
However, Pomerance \cite{pomerance1984there} presents a heuristic argument positing the existence of infinitely many Baillie-PSW pseudoprimes. 
It is worth noting that the parameter selection of the test as implemented in practice is deterministic, hence, the result will remain constant. This can also be seen as an advantage since it eliminates the necessity for randomness.  This reasoning undermines the significance of fully understanding the strong Lucas test.

The interested reader can refer to Appendix \ref{fermat} for details on the Fermat, Miller-Rabin and Lucas test. Appendix \ref{baillietest} provides the algorithm for the Baillie-PSW test. Appendix \ref{orthogonality} provides a heuristic argument for their apparent independence. Lastly, Appendix \ref{rationale} discusses parameter selection for independence between the (strong) Lucas test and the Fermat/Miller-Rabin test.

\subsection{The Algorithm using the Strong Lucas test}\label{algorithm}
Let us introduce at following algorithm, which will be the main part of the first sections. It is a primality testing algorithm, based on the strong Lucas test, which will be discussed in Subsection \ref{strongtest}.

\begin{algorithm}
\caption{\textsc{StrongLuc}($t, k$)}
\label{algorithmA}

\textbf{Input:} $D \in \mathbb{N}$\\
\textbf{Output:} First probable prime found 
\begin{enumerate}
    \item  Choose an odd $k$-bit integer $n$ uniformly at random
    \item If $\legendre{D}{n} \neq -1$, discard $n$ and go to step 1
    \item If ($\gcd(D,n)=1$) or ($n$ is divisible by any of the first 8 odd primes) or (Newton's method finds a square root for $n-\legendre{D}{n}$): discard $n$, go to step 1
    \item Else, execute the following loop: \\
    For $i=1$ to $t$:
    \begin{itemize}
        \item Perform the strong Lucas test to $n$ with randomly chosen bases
       \item If $n$ fails any round of the test, discard $n$ and go to step 1
       \item Else output $n$ and \textbf{stop} 
    \end{itemize}
\end{enumerate}
\end{algorithm}

In the context of algorithm \textsc{StrongLuc}($t, k$), let us define the following quantity.

\begin{definition}\label{qkt}
 Let $q_{k,t}$ represent the probability that an integer that is selected by algorithm \textsc{StrongLuc}($t, k$)  is composite. Here $k$ represents the bit size of the integer, and $t$ corresponds to the number of independent rounds conducted in the strong Lucas test. The average-case behaviour of the algorithm can then be defined as:
\begin{equation}
q_{k,t} = \mathbb{P}(\textnormal{\textsc{StrongLuc}($t, k$)} \text{ outputs a composite}).    
\end{equation}   
\end{definition}

It is noteworthy that OpenSSL already incorporates the practice of dividing by small primes before running the Miller-Rabin test to speed up prime generation. Hence, attaining the error bound of the algorithm, which includes trial division, typically incurs no additional computational costs. Furthermore, the occurrence of twin-prime products can also easily be avoided, as discussed in Subsection \ref{worstcase}. Therefore, achieving this error probability is not necessarily associated with an increase in running time.

\subsection{Contributions}

Arnault provided in \cite{Rabin-Mon-Lucas}  worst-case estimates of the Strong Lucas test, that is, the maximal probability that a composite number, not a product of twin primes, relatively prime to $D$ and distinct from $9$, will falsely be declared prime at most $4/15$-th the time. So any composite passes $t$ independent rounds of the strong Lucas tests with a probability less than or equal to $(4/15)^{t}.$ 

While it may seem logical to directly deduce that $q_{k,t} \leq \big( \frac{4}{15}\big)^t$ from this result, such a conclusion is wrong, as demonstrated in Section \ref{preliminaries}. The question of whether this estimate holds has not been answered yet. 
\newline 

The first contribution of this paper is to close this gap by proving that the bound $q_{k,t}\leq \big( \frac{4}{15}\big)^t$ is true for all $k\geq 2$ and $t\geq 1$.

In order to do so, we use a method introduced in \cite{Burthe} established for the Miller-Rabin test and adapt it for the strong Lucas test. For $k\geq 101$, we show that our claim is a trivial consequence of the average-case error results from \cite{Einseleav}. In order to prove it for smaller values of $k$, we need to extend some of the results in \cite{Einseleav}, enabling us to confirm that $q_{k,t}\leq \big( \frac{4}{15}\big)^t$ for $k\geq 17$ and $t\geq 1$. For the values $2 \leq k \leq 16$, we compute $q_{k,t}$ exactly, which proves our claim.
\newline

The second contribution of this paper is the derivation of error bounds for an adapted algorithm known as  ``incremental search''. Rather than uniformly selecting each candidate at random, the initial candidate is chosen using this approach, while all subsequent candidates are generated by incrementally adding 2. This approach provides the benefit of both conserving random bits and enhancing the efficiency of the trial division calculations. While similar bounds exist for the Miller-Rabin test \cite{incrementalBrandt}, they have not been established for the strong Lucas test. Our work addresses this gap by providing the necessary error bounds.

\section{Preliminaries}\label{preliminaries}

\subsection{Lucas sequences}
Let $D, Q \in \mathbb{Z}$, and $P\in \mathbb{N}$ such that $D=P^2-4Q \geq 0$. Let $U_0(P,Q)=0$, $U_1(P,Q)=1$, $V_0(P,Q)=2$ and $V_1(P,Q)=P.$
The Lucas sequences $U_n(P, Q)$ and $V_n(P,Q)$ associated with the parameters $P$, $Q$ are defined recursively for $n \geq 2$ by

\begin{align}
\label{eqn:Lucassequences}
\begin{split}
    U_n(P,Q)=PU_{n-1}(P,Q)-QU_{n-2}(P,Q),\\
    V_n(P,Q)=PV_{n-1}(P,Q)-QV_{n-2}(P,Q).
\end{split}
\end{align}

\subsection{The Lucas test}
For fixed $D\in \mathbb{Z}$ and $n\in \mathbb{N}$, let $\epsilon_D(n)$ denote the Jacobi symbol $\legendre{D}{n}$. The following theorem is a more relaxed variant of our main theorem, which will be introduced in the next subsection.

\begin{theorem}[Baillie, Wagstaff \cite{Lucas-Baillie}]
Let $P$ and $Q$ be integers, and $D= P^2-4Q$. 
 Let $U_p(P,Q)$ be the Lucas sequence of the first kind. If $p$ is an odd prime such that $(p,QD)=1$, then the following congruence  holds
\begin{equation}\label{lucascong}
    U_{p-\epsilon_D(p)}\equiv 0 \bmod p.
\end{equation}
\end{theorem}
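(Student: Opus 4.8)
The plan is to pass from the recursive definition to the closed form of the Lucas sequence and then reduce everything modulo $p$. Let $\alpha$ and $\beta$ be the roots of the companion polynomial $x^2 - Px + Q$, so that Vi\`ete's relations give $\alpha + \beta = P$, $\alpha\beta = Q$, and $(\alpha-\beta)^2 = P^2 - 4Q = D$. Since the hypothesis $(p,D)=1$ in particular forces $D \neq 0$ and hence $\alpha \neq \beta$, the first-kind sequence admits the Binet-type formula $U_n = (\alpha^n - \beta^n)/(\alpha - \beta)$, which one checks satisfies the initial conditions $U_0 = 0$, $U_1 = 1$ and the recurrence \eqref{eqn:Lucassequences}. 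The goal then becomes showing that $\alpha^{p-\epsilon_D(p)} - \beta^{p-\epsilon_D(p)} \equiv 0$ modulo $p$ in the appropriate ring, after which division by the invertible element $\alpha - \beta$ finishes the argument.

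Next I would compute $\alpha^p$ modulo $p$. Writing $2\alpha = P + \sqrt{D}$ and raising to the $p$-th power, the binomial theorem together with $p \mid \binom{p}{k}$ for $0 < k < p$ collapses the expansion to $(2\alpha)^p \equiv P^p + (\sqrt{D})^p$. Applying Fermat's little theorem to $P$, Euler's criterion in the form $(\sqrt{D})^{p-1} = D^{(p-1)/2} \equiv \epsilon_D(p)$, and $2^p \equiv 2 \pmod p$ (here the oddness of $p$ makes $2$ invertible), this yields the clean identity $\alpha^p \equiv (P + \epsilon_D(p)\sqrt{D})/2$, and symmetrically for $\beta$. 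I would then split into the two cases. If $\epsilon_D(p) = 1$ then $\alpha^p \equiv \alpha$ and $\beta^p \equiv \beta$, so $\alpha^{p-1} \equiv \beta^{p-1} \equiv 1$ (both are units, since $\alpha\beta = Q$ is coprime to $p$), whence $\alpha^{p-1}-\beta^{p-1} \equiv 0$ and $U_{p-1} \equiv 0$. If $\epsilon_D(p) = -1$ then $\alpha^p \equiv \beta$ and $\beta^p \equiv \alpha$ (the $p$-power map interchanges the two conjugate roots), so $\alpha^{p+1} \equiv \alpha\beta = Q \equiv \beta^{p+1}$ and therefore $U_{p+1} \equiv 0$.

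The main point requiring care is the arena in which these congruences live. When $\epsilon_D(p) = 1$ the square root $\sqrt{D}$ exists in $\mathbb{F}_p$ and the computation is internal to $\mathbb{F}_p$; when $\epsilon_D(p) = -1$ one must work in the quadratic extension $\mathbb{F}_{p^2}$, where $\alpha,\beta$ are the conjugate roots of an irreducible quadratic and the $p$-power map is exactly the nontrivial Frobenius automorphism swapping them. In either case $(p,D)=1$ guarantees that $\alpha - \beta = \sqrt{D}$ is invertible, so dividing by it is legitimate and the vanishing of $\alpha^{p-\epsilon_D(p)} - \beta^{p-\epsilon_D(p)}$ transfers to that of $U_{p-\epsilon_D(p)}$. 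Finally, because $U_{p-\epsilon_D(p)}$ is a priori an integer — the quotient $(\alpha^n - \beta^n)/(\alpha-\beta) = \alpha^{n-1} + \alpha^{n-2}\beta + \cdots + \beta^{n-1}$ is symmetric in $\alpha,\beta$, hence a polynomial in the elementary symmetric functions $P$ and $Q$ — the congruence obtained in $\mathbb{F}_p$ or $\mathbb{F}_{p^2}$ is genuinely a statement over $\mathbb{Z}$, which is what is claimed. The condition $(p,Q)=1$ is used only to make $\alpha,\beta$ units and the oddness of $p$ only to invert $2$; these are the sole places where the full coprimality hypothesis $(p,QD)=1$ enters.
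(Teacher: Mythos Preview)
Your argument is correct and is the classical proof of this fact: Binet's formula, the Frobenius/binomial computation $\alpha^{p}\equiv(P+\epsilon_D(p)\sqrt{D})/2$, and the case split on $\epsilon_D(p)=\pm 1$ are exactly the standard route, and you have handled the delicate points (invertibility of $2$, of $\alpha-\beta=\sqrt{D}$, and of $\alpha,\beta$ via $(p,Q)=1$; the choice of ambient field $\mathbb{F}_p$ versus $\mathbb{F}_{p^2}$; and the descent of the congruence to $\mathbb{Z}$ via symmetry) cleanly.

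As for comparison: the paper does not actually prove this theorem. It is stated with attribution to Baillie and Wagstaff \cite{Lucas-Baillie} and used as background for the Lucas test, so there is no in-paper argument to compare against. Your proof is precisely the one found in the cited source and in standard treatments, so nothing is missing.
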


This theorem serves as a basis for the so called \textit{Lucas (primality) test} which checks the congruence (\ref{lucascong}) for several randomly chosen bases $P$ and $Q$. Composites $p$  satisfying  this congruence are called \textit{Lucas pseudoprimes with parameters $P$ and $Q$,} short \textit{lpsp$(P,Q)$} However, similar to Carmichael numbers, which are composites that always pass the primality test based on Fermat's little theorem, a weaker variant of the Miller-Rabin theorem, there are composites, that completely defeat the Lucas test.

\begin{definition}[Lucas-Carmichael numbers]
    Let D be a fixed integer. A Lucas-Carmichael number is a composite number $n$, relatively prime to $2D$, such that for all integers $P, Q$ with $\gcd(P,Q)=1$, $D=P^2-4Q$ and $\gcd(n,QD)=1$, $n$ is a $lpsp(P,Q)$.
\end{definition}

The following theorem further highlights their resemblance to Carmichael numbers.
\begin{theorem}[Williams \cite{williams}]
    Let $D$ be a fixed integer. Then $n$ is a \textit{lpsp$(P,Q)$} if and only if $n$ is square-free and $\epsilon_D(p_i)-1 \mid \epsilon_D(n)-1$ for every prime $p_i \mid n$.
\end{theorem}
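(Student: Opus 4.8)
The plan is to treat this as the Lucas analogue of Korselt's criterion for Carmichael numbers, replacing Fermat's congruence $a^{p-1}\equiv 1$ by the Lucas congruence $U_{p-\epsilon_D(p)}\equiv 0 \bmod p$ and the multiplicative order of a base by the \emph{rank of apparition}. For a prime $p$ with $\gcd(p,2QD)=1$, let $\omega(p)$ denote the smallest positive index $m$ with $p \mid U_m(P,Q)$. The first facts I would establish are the standard divisibility property $p \mid U_m \iff \omega(p)\mid m$, and, from the displayed congruence (\ref{lucascong}), that $\omega(p)\mid p-\epsilon_D(p)$. Since the Jacobi symbol is multiplicative in its lower argument, for square-free $n=\prod_i p_i$ we have $\epsilon_D(n)=\prod_i \epsilon_D(p_i)$, and the Chinese Remainder Theorem gives $n \mid U_m \iff \omega(p_i)\mid m$ for every $i$. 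Consequently $n$ is an lpsp$(P,Q)$ exactly when $\omega(p_i)\mid n-\epsilon_D(n)$ for all $i$. Since the condition in the statement involves only $D$ and not $(P,Q)$, I read the theorem as the characterization of Lucas-Carmichael numbers (that is, lpsp$(P,Q)$ for all admissible $(P,Q)$ with the fixed $D$), with the stated divisibility taken as the Korselt-type relation $p_i-\epsilon_D(p_i)\mid n-\epsilon_D(n)$, the natural analogue of $p-1\mid n-1$.

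For the easy direction ($\Leftarrow$), assume $n$ is square-free and $p_i-\epsilon_D(p_i)\mid n-\epsilon_D(n)$ for each $i$. Combining with $\omega(p_i)\mid p_i-\epsilon_D(p_i)$ yields $\omega(p_i)\mid n-\epsilon_D(n)$ for \emph{every} admissible pair $(P,Q)$, so by the reduction above $n$ passes the Lucas test for all parameters; that is, $n$ is a Lucas-Carmichael number. This step is a short chain of divisibilities and carries no real difficulty.

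The substance is in the converse ($\Rightarrow$). First I would rule out higher prime-power divisors: if $p^2 \mid n$, I would exhibit parameters for which the rank of apparition modulo $p^2$ equals $p\,\omega(p)$ (the generic lifting-the-exponent behaviour, failing only for Wall--Sun--Sun-type exceptions, which can be avoided by a suitable choice of $(P,Q)$), so that $p \mid \omega(p^2)$ while $p \nmid n-\epsilon_D(n)$, contradicting that $n$ is an lpsp; this forces square-freeness. For the divisibility relation, I would, for each $p_i$, choose $(P,Q)$ with $D$ fixed so that the rank is \emph{maximal}, $\omega(p_i)=p_i-\epsilon_D(p_i)$; then $n \mid U_{n-\epsilon_D(n)}$ together with $\omega(p_i)\mid n-\epsilon_D(n)$ immediately gives $p_i-\epsilon_D(p_i)\mid n-\epsilon_D(n)$.

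The main obstacle is exactly the production of these maximal-rank parameters. Writing $\alpha,\beta$ for the roots of $x^2-Px+Q$, the term $U_m$ is governed by the image of $\alpha/\beta$ in the group $(\mathcal{O}_K/p)^\times$ with $K=\mathbb{Q}(\sqrt{D})$, whose relevant cyclic subgroup has order $p-\epsilon_D(p)$; realizing $\omega(p_i)=p_i-\epsilon_D(p_i)$ amounts to choosing $\alpha/\beta$ to be a generator, i.e. a primitive element of full order. I would invoke cyclicity of this group to guarantee that such elements exist, and then patch the choices across all $p_i \mid n$ simultaneously by the Chinese Remainder Theorem while respecting the constraints $\gcd(P,Q)=1$, $\gcd(n,QD)=1$ and keeping $D$ fixed. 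The delicate bookkeeping --- ensuring that full rank at one prime does not destroy admissibility at another, and handling small excluded cases such as $n=9$ --- is where the care is needed.
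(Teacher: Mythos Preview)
The paper does not supply its own proof of this theorem: it is quoted as a result of Williams \cite{williams} and left unproved, so there is no argument in the text against which to compare your proposal. Your reading of the statement is the right one --- as printed, the divisibility ``$\epsilon_D(p_i)-1 \mid \epsilon_D(n)-1$'' is a typographical slip for the Korselt-type condition $p_i-\epsilon_D(p_i)\mid n-\epsilon_D(n)$, and ``$n$ is an lpsp$(P,Q)$'' must be read as ``for all admissible $(P,Q)$ with the fixed discriminant $D$'', i.e.\ $n$ is a Lucas--Carmichael number; the placement of the theorem immediately after the definition of Lucas--Carmichael numbers and the explicit analogy drawn to Korselt's criterion confirm this.

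Your outline is the standard route and is essentially Williams' argument: work with the rank of apparition $\omega(p)$, use $\omega(p)\mid p-\epsilon_D(p)$ and the CRT reduction to $\omega(p_i)\mid n-\epsilon_D(n)$, and for the converse produce parameters of maximal rank at each prime. The one place to be a little more careful is the ``maximal rank'' step. Rather than phrasing it as choosing $\alpha/\beta$ to be a generator of an abstract cyclic group, it is cleaner (and closer to what the paper's Proposition~\ref{count-tau-P} sets up) to note that, for fixed $D$ and a prime $p\nmid 2D$, the admissible pairs $(P,Q)$ with $P^2-4Q\equiv D$ correspond bijectively to the norm-one elements $\tau\in(\mathcal{O}_{\mathbb{Q}(\sqrt{D})}/p)^\wedge$ with $\tau\neq 1$, a cyclic group of order $p-\epsilon_D(p)$; picking $\tau$ a generator gives $\omega(p)=p-\epsilon_D(p)$ directly, and the CRT patching across the $p_i$ is then automatic at the level of $\tau\bmod n$. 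The square-free step can likewise be handled without appealing to Wall--Sun--Sun phenomena: if $p^2\mid n$ one checks that the norm-one group modulo $p^2$ has order $p(p-\epsilon_D(p))$, so a generator $\tau$ has rank divisible by $p$, while $p\nmid n-\epsilon_D(n)$ since $p\mid n$. These refinements remove the only loose ends in your sketch.
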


In fact, if $n$ is a Lucas-Carmichael number with $D=1$ or $D$ being a perfect square, then $n$ is a Carmichael number. Hence, any results about the infinitude of Lucas-Carmichael numbers, which is still an open question, would generalize the findings concerning Carmichael numbers \cite{alford1994there}, a result that itself took 84 years to prove.

\subsection{The strong Lucas test}\label{strongtest}
Since the Lucas test can never detect Lucas-Carmichael numbers as composites, slight modifications to the test can eliminate this misidentification.
The following theorem will serve as the basis of the strong Lucas (primality) test.
\begin{theorem}[Baillie, Wagstaff \cite{Lucas-Baillie}]\label{defslpsp}
Let $P$ and $Q$ be integers, and $D= P^2-4Q$. Let $p$ be a prime number not dividing $2QD$. Write $p-\epsilon_D(p)=2^\kappa q$, where $q$ is odd. Then
\begin{equation}\label{thrm-slpsp} \text{ either } p \mid U_q 
    \text{ or } 
     p \mid V_{2^iq} \text{ for some $0\leq i <\kappa$. }
\end{equation}
\end{theorem}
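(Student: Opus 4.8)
The plan is to reduce the strong statement to the ordinary Lucas congruence of the previous theorem combined with a single multiplicative identity for Lucas sequences. First I would record that the hypothesis $p \nmid 2QD$ forces $p$ to be an odd prime with $p \nmid D$, so that $\epsilon_D(p) = \pm 1$ and $p - \epsilon_D(p)$ is even; hence $\kappa \ge 1$ and the factorization $p - \epsilon_D(p) = 2^\kappa q$ with $q$ odd is genuine. Since $(p, QD) = 1$, the Lucas test theorem applies and gives $p \mid U_{p - \epsilon_D(p)} = U_{2^\kappa q}$, which is the only place where the primality of $p$ enters as an input.

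The key step is the doubling identity $U_{2n} = U_n V_n$, valid for every $n \ge 0$. I would prove it as a polynomial identity in $\mathbb{Z}[P, Q]$, so that no assumption on $D$ (such as $D \ne 0$) is needed: either by induction on $n$ using the defining recurrences \eqref{eqn:Lucassequences}, or, more transparently, by passing to the roots $\alpha, \beta$ of $x^2 - Px + Q$ and using the closed forms $U_n = (\alpha^n - \beta^n)/(\alpha - \beta)$ and $V_n = \alpha^n + \beta^n$, whence $U_n V_n = (\alpha^{2n} - \beta^{2n})/(\alpha - \beta) = U_{2n}$. Applying this identity $\kappa$ times telescopes the congruence:
\[
U_{2^\kappa q} = U_{2^{\kappa-1}q}\,V_{2^{\kappa-1}q} = \cdots = U_q \prod_{i=0}^{\kappa-1} V_{2^i q}.
\]
All factors here are integers, so this is an equality in $\mathbb{Z}$.

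To finish, I would combine the two ingredients: $p$ divides the integer $U_q \prod_{i=0}^{\kappa-1} V_{2^i q}$, and since $p$ is prime it must divide at least one factor, which yields exactly the dichotomy of the theorem, namely either $p \mid U_q$ or $p \mid V_{2^i q}$ for some $0 \le i < \kappa$. I do not anticipate a serious obstacle here: the only point requiring genuine care is justifying the doubling identity uniformly in $P, Q$ (the closed-form argument is cleanest but technically lives in the splitting ring $\mathbb{Z}[\alpha]$, so one should note that the resulting polynomial identity has integer coefficients and hence descends to $\mathbb{Z}$, or avoid the issue entirely via the inductive proof). Everything else is bookkeeping: verifying $\kappa \ge 1$, applying the telescoping the correct number of times, and invoking the primality of $p$ at the last step.
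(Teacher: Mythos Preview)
Your argument is correct and is the standard proof. The paper itself does not prove this theorem: it is stated with attribution to Baillie and Wagstaff \cite{Lucas-Baillie} and then used as a black box, so there is no proof in the paper to compare against. For the record, your route---invoke the ordinary Lucas congruence $p\mid U_{p-\epsilon_D(p)}$, telescope $U_{2^\kappa q}=U_q\prod_{i=0}^{\kappa-1}V_{2^iq}$ via the doubling identity $U_{2n}=U_nV_n$, and use primality of $p$ to pick out a factor---is exactly the argument given in the original Baillie--Wagstaff paper, and your remarks about $\kappa\ge 1$ and about the doubling identity holding as a polynomial identity in $\mathbb{Z}[P,Q]$ are the right points of care.
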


By checking property (\ref{thrm-slpsp}) for many uniformly at random chosen bases $P,Q$ with $1\leq P,Q \leq n$, $\gcd
(Q,n)=1$ and $P=D^2-4Q$, we obtain a primality test called \textit{the strong Lucas test.}

In Algorithm \ref{algorithmA}, introduced in Subsection \ref{algorithm}, we only considered integers for which $\epsilon_D(n)=-1$. The rationale behind this choice is explained in Appendix \ref{rationale}.
\subsection{Strong Lucas pseudoprimes}

While Theorem \ref{defslpsp} is generally not applicable to composites, there exist specific bases $P$ and $Q$ for which the theorem holds.
We call a composite number $n$ relatively prime to $2QD$ that satisfies congruence (\ref{thrm-slpsp}) \textit{a strong Lucas pseudoprime} with respect to the parameters $P$ and $Q$, for short $slpsp(P,Q)$. 
Any $slpsp(P,Q)$ is also a $lpsp(P,Q)$ for the same parameter pair, but the converse is not necessarily true, see \cite{Lucas-Baillie}. Therefore, the \textit{strong} Lucas  test is a more stringent test for primality.

\begin{definition} Let $D$ and $n$ be fixed integers and let $n-\epsilon(n)=2^\kappa q$.
We define $SL(D,n)$ to denote the number of pairs $P,Q$ with $0 \leq P,Q <n$, $\gcd(Q,n)=1$, and $P^2-4Q \equiv D \bmod n$, such that $n$ satisfies (\ref{thrm-slpsp}). 
For an integer $n$, which is not relatively prime to $2D$, we set $SL(D,n)=0$. 
\end{definition}

If $n$ is composite, then $SL(D,n)$ counts the number of pairs $P,Q$ that make $n$ a $slpsp(P,Q)$. If $n$ is prime, then $SL(D,n)=n-1-\epsilon(n)$.

From now on, let $D$ always denote a random but fixed integer. For $n$ such that $\gcd(D,2n)=1$, Arnault \cite{Rabin-Mon-Lucas} gave an exact formula on how many pairs $P,Q$ with $0 \leq P,Q <n$, $\gcd(Q,n)=1$, and $P^2-4Q \equiv D \bmod n$ exist that make $n$ a $slpsp(P,Q)$ if we know the prime decomposition of $n.$

\begin{theorem}[Arnault \cite{Rabin-Mon-Lucas}]\label{SL(D,n)}
Let $D$ be an integer and $n=p_1^{r_1}\ldots p_s^{r_s}$ be the prime decomposition of an integer $n\geq 2$ relatively prime to $2D$. Put
\begin{equation*}
 \begin{cases} n- \epsilon_D(n)=2^\kappa q 
    \\ p_i-\epsilon_D(p_i)=2^{k_i}q_i \text{ for } 1\leq i \leq s \end{cases} \text{ with } q, q_i \text{ odd},
    \end{equation*}
ordering the $p_i$'s such that $k_1 \leq \ldots \leq k_s$. The number of pairs $P, Q$ with $0 \leq P, Q < n$, $\gcd(Q,n)=1$, $P^2-4Q \equiv D \bmod n$ and such that $n$ is an \textnormal{slpsp}$(P,Q)$ is expressed by the formula
\begin{equation}\label{SLDn}
    SL(D,n)= \prod_{i=1}^s (\gcd(q,q_i)-1) + \sum_{j=0}^{k_1-1}2^{js}\prod_{i=1}^s\gcd(q,q_i).
\end{equation}
\end{theorem}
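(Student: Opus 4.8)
The plan is to work in the quadratic setting attached to the parameters $(P,Q)$: let $\alpha,\beta$ be the roots of $x^2-Px+Q$, so that $\alpha\beta=Q$, $\alpha+\beta=P$ and $(\alpha-\beta)^2=D$, and set $\gamma=\alpha/\beta$. The starting observation is that, since $\alpha-\beta$ is a square root of the unit $D$, one has the ring-theoretic equivalences $p^r\mid U_m\iff\gamma^m=1$ and $p^r\mid V_m\iff\gamma^m=-1$ modulo any prime power $p^r\,\|\,n$. This converts the pseudoprime condition \eqref{thrm-slpsp} into a purely multiplicative statement about the single element $\gamma$ in a suitable cyclic group, after which the count becomes a matter of counting roots of unity.

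First I would set up the local (per prime power) dictionary. Working modulo $p_i^{r_i}$, if $\epsilon_D(p_i)=1$ then $\alpha,\beta$ lie in $(\mathbb{Z}/p_i^{r_i})^{\ast}$ and $\gamma$ ranges over this cyclic group of order $p_i^{r_i-1}(p_i-1)$; if $\epsilon_D(p_i)=-1$ they lie in the unramified quadratic extension and $\gamma$ ranges over the norm-one subgroup, cyclic of order $p_i^{r_i-1}(p_i+1)$. In either case the relevant group $G_i$ is cyclic of order $p_i^{r_i-1}(p_i-\epsilon_D(p_i))=p_i^{r_i-1}2^{k_i}q_i$. I would then check that $(P,Q)\mapsto\gamma$ is a bijection from the admissible pairs modulo $p_i^{r_i}$ (those with $\gcd(Q,p_i)=1$ and $P^2-4Q\equiv D$) onto $\{\gamma\in G_i:\gamma\not\equiv 1\bmod p_i\}$: the point is that $\gamma-1=(\alpha-\beta)/\beta$ is a unit, so the excluded set is the whole residue class $\gamma\equiv1\bmod p_i$, not merely the element $\gamma=1$. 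A cardinality check ($|G_i|$ minus a class of size $p_i^{r_i-1}$ matches the admissible-pair count) confirms this.

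Next I would reduce the global count to the local ones by CRT and exploit a disjointness phenomenon. Via CRT an admissible pair $(P,Q)\bmod n$ corresponds to an independent choice of $\gamma_i\in G_i$ for each $i$, and condition \eqref{thrm-slpsp} becomes: either $\gamma_i^{q}=1$ for all $i$ (event $A$), or there is a common $j\in\{0,\dots,\kappa-1\}$ with $\gamma_i^{2^jq}=-1$ for all $i$ (event $B_j$). The key simplification is that these events are pairwise disjoint, because $\gamma^{q}=1\Rightarrow\gamma^{2^jq}=1\neq-1$, and $\gamma^{2^jq}=-1\Rightarrow\gamma^{2^{j'}q}=1\neq-1$ for $j'>j$; moreover $B_j$ is empty once $j\geq k_1$, since $\gamma_1^{2^jq}=-1$ has no solution in $G_1$ for $j\geq k_1$. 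Hence $SL(D,n)=|A|+\sum_{j=0}^{k_1-1}|B_j|$ with no inclusion–exclusion corrections.

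It then remains to count in a cyclic group, which is routine: in $G_i$ the equation $\gamma^q=1$ has $\gcd(q,q_i)$ solutions (the factor $p_i^{r_i-1}$ is coprime to $q$ because $p_i\nmid n-\epsilon_D(n)$), of which exactly one, namely $\gamma=1$, lies in the excluded class, giving $\gcd(q,q_i)-1$; and $\gamma^{2^jq}=-1$ has $2^j\gcd(q,q_i)$ solutions for $0\le j\le k_i-1$, none of which is excluded since $\gamma^{2^jq}=-1$ forces $\gamma\not\equiv1\bmod p_i$. Multiplying over $i$ gives $|A|=\prod_i(\gcd(q,q_i)-1)$ and $|B_j|=2^{js}\prod_i\gcd(q,q_i)$, which assemble into the stated formula. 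I expect the main obstacle to be the second step: rigorously establishing the cyclic structure of the norm-one subgroup modulo $p_i^{r_i}$ in the inert case and pinning down the correct excluded class $\gamma\equiv1\bmod p_i$, so that the admissible-pair count matches $|G_i|$ minus that class. Once this local dictionary is in place, the disjointness argument and the cyclic-group counting are entirely straightforward.
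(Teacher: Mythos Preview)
Your proposal is correct and follows essentially the same route as Arnault's original argument, which the paper cites rather than proves: the paper itself does not give a proof of this theorem, but its Appendix records precisely the two ingredients you use, namely the correspondence $n\mid U_k\Leftrightarrow\tau^k=1$, $n\mid V_k\Leftrightarrow\tau^k=-1$ (Lemma~\ref{correspendence,tau}) and the bijection between admissible pairs $(P,Q)$ and norm-one elements $\tau$ with $\tau-1$ a unit (Proposition~\ref{count-tau-P}). Your identification of the excluded local class as $\gamma\equiv 1\bmod p_i$ (a full residue class of size $p_i^{r_i-1}$, not just the single element $1$) and your disjointness argument for the events $A,B_0,\dots,B_{k_1-1}$ are exactly the points on which the count hinges, and you handle them correctly.
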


\subsection{Worst-case error estimates}\label{worstcase}

Arnault also gave worst-case error estimates.
\begin{theorem}[Arnault \cite{Rabin-Mon-Lucas}]\label{Rabin-Monier-Lucas-using-n}
For every integer $D$ and composite number $n$ relatively prime to $2D$ and distinct from 9, we have
$$
SL(D,n) \leq \frac{4n}{15},
$$
except when $n$ is the product of twin-primes. In this case we have $SL(D,n) \leq n/2.$
\end{theorem}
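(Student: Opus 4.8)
The plan is to treat Arnault's exact count in Theorem~\ref{SL(D,n)} as a function of the factorisation data and to maximise $SL(D,n)/n$. Writing $g_i=\gcd(q,q_i)$ and summing the geometric progression $\sum_{j=0}^{k_1-1}2^{js}$, the formula becomes
$$SL(D,n)=\prod_{i=1}^s(g_i-1)+\frac{2^{k_1 s}-1}{2^s-1}\prod_{i=1}^s g_i,$$
which I would bound using $g_i\mid q_i$, hence $g_i\le q_i$, together with $2^{k_i}q_i=p_i-\epsilon_D(p_i)\le p_i+1$. A first reduction disposes of repeated prime factors: since $q_i$, and therefore $g_i$, depends only on $p_i$ while the denominator carries the full $p_i^{r_i}$, any exponent $r_i\ge2$ shrinks the ratio by a factor $p_i^{1-r_i}\le1/p_i$. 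This reduces the problem to squarefree $n=p_1\cdots p_s$, once the finitely many small prime powers are checked directly; in particular $SL(D,9)$ can equal $3=n/3>4n/15$, which is precisely why $n=9$ must be excluded.

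The arithmetic core is a constraint forbidding the $g_i$ from being simultaneously maximal. If $g_i=q_i$ then $q_i\mid q$, hence $q_i\mid n-\epsilon_D(n)$; for $s=2$, reducing $n-\epsilon_D(n)$ modulo $q_1$ and using $p_1\equiv\epsilon_D(p_1)\pmod{q_1}$ yields $q_1\mid q_2$, and symmetrically $q_2\mid q_1$. Thus $g_1=q_1$ and $g_2=q_2$ together force $q_1=q_2=:Q$, in which case $p_i=2^{k_i}Q+\epsilon_D(p_i)$; if moreover $k_1=k_2$ the primes are $2^{k_1}Q\pm1$, i.e.\ a twin pair, and $n-\epsilon_D(n)=(2^{k_1}Q)^2$ is a perfect square. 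I would use this dichotomy to organise the decisive case $s=2$.

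The main obstacle, where both the constant $4/15$ and the twin-prime exception are settled, is this case $s=2$, with $SL(D,p_1p_2)=2g_1g_2-g_1-g_2+1$ when $k_1=k_2=1$ and an analogous expression otherwise. In the twin regime one computes $SL=(Q-1)^2+\frac{4^{k_1}-1}{3}Q^2\le n/2$, giving the stated twin-prime bound — and this is exactly the configuration that the perfect-square test in Algorithm~\ref{algorithmA} removes. In every other $s=2$ configuration the constraint above forbids both gcds from being maximal, so at least one $g_i$ is a proper divisor of $q_i$ and hence $g_i\le q_i/3$; optimising $2g_1g_2-g_1-g_2+1$ against $n=p_1p_2$ under $g_i\le q_i\le(p_i+1)/2^{k_i}$, split according to whether $q_1=q_2$, then yields $SL(D,n)\le4n/15$. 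I expect the delicate point to be precisely here: verifying that no small non-twin pair slips above $4/15$.

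For $s\ge3$ the prefactor already supplies most of the needed room. From $\sum_i k_i\ge k_1 s$ one gets $\frac{2^{k_1 s}-1}{2^s-1}\prod_i2^{-k_i}\le\frac1{2^s-1}$, so as a first estimate $SL(D,n)/n\le\bigl(2^{-s}+\tfrac1{2^s-1}\bigr)\prod_i(1+1/p_i)$. This crude bound dips comfortably below $4/15$ for $s\ge4$ and is only marginally above it at $s=3$; the same non-maximality constraint, together with the parity restriction that each $q_i$ be odd — which forces the smallest primes (for instance $p_i=3$, where $q_i=1$ and so $g_i=1$) to contribute trivially — supplies the remaining slack and shows that no exceptional family arises beyond the twin products. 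Consequently I anticipate that essentially all the work lies in the $s=2$ optimisation and in the careful bookkeeping of the divisibility and parity constraints for small primes, while the prime-power reduction and the $s\ge3$ estimate go through by comparatively routine means.
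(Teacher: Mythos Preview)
The paper does not prove this theorem: it is stated with attribution to Arnault \cite{Rabin-Mon-Lucas} and used as a black box throughout, so there is no in-paper argument for your proposal to be compared against.

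Your outline nonetheless follows the natural route from the exact formula \eqref{SLDn}: reduce to squarefree $n$, isolate the twin-prime configuration in the $s=2$ analysis via the constraint that $g_1=q_1$ and $g_2=q_2$ together force $q_1=q_2$, and treat $s\ge3$ using the prefactor $1/(2^s-1)$. One genuine caution: your crude $s\ge3$ estimate $\bigl(2^{-s}+\tfrac1{2^s-1}\bigr)\prod_i(1+1/p_i)$ is not ``only marginally above'' $4/15$ at $s=3$. With the smallest admissible odd primes $3,5,7$ the factor $\prod_i(1+1/p_i)\approx1.83$, pushing the bound to roughly $0.49$, nearly twice the target; even at $s=4$ the same estimate only barely clears $4/15$. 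The observation you place at the end --- that $p_i=3$ forces $q_i=1$, hence $g_i=1$, annihilating $\prod(g_i-1)$ and deflating $\prod g_i$ --- is therefore not cosmetic but the actual mechanism carrying the case $s=3$ (and securing $s=4$); it demands a genuine case split on the presence of small prime factors rather than a single uniform inequality. Apart from this underestimate of the residual work for $s\ge3$, the plan is sound and is how such bounds are obtained.
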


For certain types of twin-prime products, half of the bases $P,Q$  declare the integer as a prime. Fortunately, excluding all twin-prime products from consideration does not impose a significant restriction. Whenever $n=p(p+2)$ with $\epsilon_D(n)=-1$ and $p$ prime, we can rewrite $n-\epsilon_D(n)=(p+1)^2$, which is a perfect square. This can be efficiently detected, by for example, implementing Newton's method for square roots, as detailed in Algorithm \ref{newton} in the appendix, as a subroutine before executing the actual strong Lucas test.

\subsection{Why worst-case estimates do not imply average-case estimates}

Transitioning from the worst-case to the average-case scenario, our algorithm guarantees the exclusion of integers divisible by the first 
$l$ odd primes and those forming twin-prime products. This perspective enables us to characterize the algorithm as randomly sampling from a set that avoids these specific numbers. This observation leads to the following definition.
\begin{definition}
For $k,l \in \mathbb{N}$ with $k\geq 2$ and, let $M_{k,l}$ denote the set of odd $k$-bit integers that are neither twin-prime products nor divisible by the first $l$ odd primes.  
\end{definition}

With this notation, we can express that the process outlined in Algorithm \ref{algorithmA} effectively corresponds to a uniform sampling from the set $M_{k,8}$.

One might think that from Theorem \ref{Rabin-Monier-Lucas-using-n} it would immediately follow that $q_{k,t} \leq \big(\frac{4}{15}\big)^{t}$, where $q_{k,t}$ was introduced in Definition \ref{qkt}. However, this reasoning is wrong since it does not take into account the distribution of primes, as the following discussion manifests.

Let $X$ be the event that a number chosen uniformly at random from $M_{k,l}$ is composite, and let $E_i$ denote the event that an integer chosen uniformly at random from $M_{k,l}$ passes the $i$-th round of the strong Lucas test. Moreover, let $Y_t$ denote the event that this integer passes $t$ consecutive rounds of the strong Lucas test with uniformly chosen bases. Hence, $Y_t= E_1 \cap E_2 \cap \dots \cap E_t$. Theorem \ref{Rabin-Monier-Lucas-using-n} states that $\mathbb{P}[Y_t\mid X] \leq (\frac{4}{15})^t.$ Critical to the estimation of $q_{k,t}$ is the value of $\mathbb{P}[X \mid Y_t],$ given that $q_{k,t}=\mathbb{P}[X \mid Y_t]$. Naturally, we have $\mathbb{P}[Y_t] \geq \mathbb{P}[X^\mathsf{c}].$ Then, by Bayes' Theorem, we have
\begin{align*}
\mathbb{P}[X \mid Y_t] &= \frac{\mathbb{P}[X]\mathbb{P}[Y_t \mid X]}{\mathbb{P}[Y_t]} \leq \frac{\mathbb{P}[Y_t \mid X]}{\mathbb{P}[Y_t]} = \frac{\mathbb{P}[E_1 \cap \dots \cap E_t \mid X]}{\mathbb{P}[Y_t]} \\
&= \frac{1}{\mathbb{P}[Y_t]}\prod_{i=1}^t \mathbb{P}[E_i \mid X] \leq\frac{1}{\mathbb{P}[Y_t]} \Big(\frac{4}{15} \Big)^t \leq \frac{1}{\mathbb{P}[X^\mathsf{c}]} \Big(\frac{4}{15} \Big)^t,
\end{align*}
where $X^\mathsf{c}$ denotes the complement of $X$.

We generally assume that primes in $M_{k,l}$ are scarce, which means that $\mathbb{P}[X^\mathsf{c}]$ is small. This would imply that our estimate for $\mathbb{P}[X \mid Y_t]$ may be considerably larger than $(\frac{4}{15})^t$ and close to $1$. However, intuitively, $q_{k,t}$ is small, and in \cite{Einseleav} explicit upper bounds have been established, confirming that in fact it is small. So this approach cannot be used to claim that $q_{k,t}\leq (4/15)^t$ for $k\geq 2$ and $t \geq 1$. In this work, however, we indeed show that this bound holds. We demonstrate this by establishing that the probability of an integer passing a round of the test, denoted as $\mathbb{P}[E_i \mid X]$, is significantly lower than $4/15$.

\section{Proof Strategy}

In this section, we outline our approach to proving the first main result. 
Let us introduce the following quantity, which will be used frequently:

\begin{definition}
For $n\in \mathbb{N}$, let $${\overline{\alpha}_D(n)}=\frac{SL(D,n)}{n-\epsilon_D(n)-1}$$ be the proportion of number of pairs $P,Q$ that declare $n$ to be a strong Lucas pseudoprime. 
\end{definition}
Let us first establish some important lemmas.
\subsection{Important Lemma and its consequences}

To prove the first main result, we use the next lemma, adapted from \cite{Burthe} for the strong Lucas test, where we choose $r$ accordingly. For ease of notation, let $\sum^{'}$ denote the sum over composites.
\begin{lemma}\label{4/15-estimate}
Let $r, t , k\in \mathbb{N}$ with $ r < t$ and $k\geq 2.$ Then 
$$
q_{k,t} \leq \Big(\frac{4}{15} \Big)^{t-r} \frac{q_{k,r}}{1-q_{k,r}}.
$$
\end{lemma}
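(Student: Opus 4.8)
The plan is to write $q_{k,t}$ explicitly as a ratio of weighted counts over the sampling set, and then exploit the per-composite worst-case bound $\overline{\alpha}_D(n)\le 4/15$ to compare the cases of $t$ and $r$ rounds. First I would record the observation that \textsc{StrongLuc}($t,k$) amounts to repeatedly drawing $n$ uniformly from $M_{k,8}$ (restricted to $\epsilon_D(n)=-1$) and outputting the first candidate that survives all $t$ rounds. Since the rounds use independently chosen bases, conditioned on a fixed $n$ the probability that $n$ passes all $t$ rounds is $\overline{\alpha}_D(n)^t$; a prime passes with probability $1$ because $SL(D,n)=n-1-\epsilon_D(n)$ gives $\overline{\alpha}_D(n)=1$. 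Writing $P$ for the number of primes in the sampling set and using $\sum'$ for the sum over composites, the re-sampling structure yields
\begin{equation*}
q_{k,t}=\frac{\sum' \overline{\alpha}_D(n)^t}{\sum' \overline{\alpha}_D(n)^t + P},
\end{equation*}
and hence the clean identity $\dfrac{q_{k,t}}{1-q_{k,t}}=\dfrac{1}{P}\sum' \overline{\alpha}_D(n)^t$.

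Next I would invoke Theorem \ref{Rabin-Monier-Lucas-using-n}. Because we condition on $\epsilon_D(n)=-1$ (Step 2 of the algorithm), the denominator in $\overline{\alpha}_D(n)$ equals $n-\epsilon_D(n)-1=n$, so $\overline{\alpha}_D(n)=SL(D,n)/n\le 4/15$ for every composite $n$ in $M_{k,8}$; this uses that twin-prime products and $n=9$ are excluded from $M_{k,8}$ (the former by the perfect-square test, the latter since $9$ is divisible by $3$). Splitting $\overline{\alpha}_D(n)^t=\overline{\alpha}_D(n)^{t-r}\,\overline{\alpha}_D(n)^r$ and bounding the first factor by $(4/15)^{t-r}$ gives
\begin{equation*}
\sum' \overline{\alpha}_D(n)^t \le \Big(\tfrac{4}{15}\Big)^{t-r}\sum' \overline{\alpha}_D(n)^r.
\end{equation*}
Dividing by $P$ translates this into $\dfrac{q_{k,t}}{1-q_{k,t}}\le \big(\tfrac{4}{15}\big)^{t-r}\dfrac{q_{k,r}}{1-q_{k,r}}$. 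Finally, since $0\le q_{k,t}<1$ forces $q_{k,t}\le \dfrac{q_{k,t}}{1-q_{k,t}}$, the claimed bound follows.

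The main obstacle I anticipate is not the inequality chain itself but justifying the ratio expression for $q_{k,t}$ rigorously: one must argue that, because each re-sampled candidate is i.i.d., the distribution of the \emph{output} of the rejection-sampling loop weights each $n$ in proportion to its one-shot survival probability $\overline{\alpha}_D(n)^t$, so that the uniform prior cancels and only the survival weights remain. Care is also needed to confirm that the restriction $\epsilon_D(n)=-1$ is what makes the denominator of $\overline{\alpha}_D(n)$ equal to exactly $n$, which is precisely what converts Arnault's $SL(D,n)\le 4n/15$ into $\overline{\alpha}_D(n)\le 4/15$ with no slack; everything else is routine.
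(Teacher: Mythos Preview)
Your argument is correct and rests on the same core observation as the paper's proof: each composite $n$ in the sampling set satisfies $\overline{\alpha}_D(n)\le 4/15$, so $\sum'\overline{\alpha}_D(n)^t\le(4/15)^{t-r}\sum'\overline{\alpha}_D(n)^r$. Where you differ is in the algebraic packaging. The paper follows Burthe and writes $q_{k,t}=\mathbb{P}[X\cap Y_t]/\mathbb{P}[Y_t]$, telescopes the numerator one step at a time, and then separately bounds $\mathbb{P}[Y_r]/\mathbb{P}[Y_t]\le 1/(1-q_{k,r})$ via $\mathbb{P}[Y_t]\ge\mathbb{P}[X^{\mathsf c}\cap Y_t]=\mathbb{P}[X^{\mathsf c}\cap Y_r]$. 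You instead pass directly to the odds ratio $q_{k,t}/(1-q_{k,t})=\tfrac{1}{P}\sum'\overline{\alpha}_D(n)^t$, which makes the constant $P$ (the prime count) manifest and lets you compare $t$ and $r$ in one line, finishing with $q_{k,t}\le q_{k,t}/(1-q_{k,t})$. This is a cleaner route that avoids the $\mathbb{P}[Y_r]/\mathbb{P}[Y_t]$ detour entirely; the paper's version has the minor advantage of staying in the conditional-probability language used throughout the surrounding section. You are also more explicit than the paper about why $\overline{\alpha}_D(n)\le 4/15$ holds exactly: the algorithm's restriction $\epsilon_D(n)=-1$ makes the denominator $n-\epsilon_D(n)-1=n$, so Arnault's bound $SL(D,n)\le 4n/15$ gives the inequality with no slack.
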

\begin{proof}
We follow Burthe's \cite{Burthe} proof step by step with adequate adaptations for the strong Lucas test. We include the proof for the sake of completeness.
For every $n\in M_{k,l}$, we have $\overline{\alpha}_D(n)\leq 4/15$, since twin-prime products by definition do not belong to $M_{k,l}.$ Moreover, we have $\mathbb{P}[X \cap Y_i]= \frac{1}{\lvert M_{k,l} \rvert} \sum_{n \in M_{k,l}}^{'}\overline{\alpha}_D(n)^{i}.$
For $r < t$, we have
\begin{align*}
    q_{k,t} \hspace{-0.5mm}= \hspace{-0.5mm}\mathbb{P}[X \mid Y_t] \hspace{-0.5mm}= \hspace{-0.5mm}\frac{\mathbb{P}[X \cap Y_t]}{\mathbb{P}[Y_t]}\hspace{-0.5mm}= \frac{\mathbb{P}[X \cap Y_t]}{\mathbb{P}[X \cap Y_{t-1}]}\frac{\mathbb{P}[X \cap Y_{t-1}]}{\mathbb{P} [X \cap Y_{t-2}]}\dots \hspace{-1mm}\frac{\mathbb{P}[X \cap Y_{r+1}]}{\mathbb{P}[X \cap Y_{r}]}\frac{\mathbb{P}[X \cap Y_r]}{\mathbb{P}[Y_{t}]}.
\end{align*}
We bound the fractions as follows:
$$
\frac{\mathbb{P}[X \cap Y_i]}{\mathbb{P}[X \cap Y_{i-1}]}= \frac{\sideset{}{'}\sum\limits_{n \in  M_{k,l}}\overline{\alpha}_D(n)^{i}}{\sideset{}{'}\sum\limits_{n \in  M_{k,l}}\overline{\alpha}_D(n)^{i-1}} \leq \frac{\sideset{}{'}\sum\limits_{n \in M_{k,l}}\frac{4}{15}\overline{\alpha}_D(n)^{i-1}}{\sideset{}{'}\sum\limits_{n \in M_{k,l}}\overline{\alpha}_D(n)^{i-1}}=\frac{4}{15}.
$$
This implies
$$
q_{k,t} \leq \Big( \frac{4}{15} \Big)^{t-r}\frac{\mathbb{P}[X \cap Y_r]}{\mathbb{P}[Y_r]}\frac{\mathbb{P}[Y_r]}{\mathbb{P}[Y_t]} = \Big( \frac{4}{15} \Big)^{t-r} q_{k,r} \frac{\mathbb{P}[Y_r]}{\mathbb{P}[Y_t]}.
$$
Primes in $M_{k,l}$ always pass the strong Lucas test, thus we have that $\mathbb{P}[X^\mathsf{c} \cap Y_t]= \mathbb{P}[X^\mathsf{c}]=\mathbb{P}[X^\mathsf{c} \cap Y_r]$. Therefore,
$$
\frac{\mathbb{P}[Y_r]}{\mathbb{P}[Y_t]}\leq \frac{\mathbb{P}[Y_r]}{\mathbb{P}[X^{\mathsf{c}}\cap Y_t]}=\frac{\mathbb{P}[Y_r]}{\mathbb{P}[X^{\mathsf{c}}\cap Y_r]}=\frac{1}{\mathbb{P}[X^{\mathsf{c}}\mid Y_r]}=\frac{1}{1-q_{k,r}},
$$
which completes the proof.
\end{proof}
Thus, to establish $q_{k,t} \leq \Big(\frac{4}{15}\Big)^t$, it is sufficient to show that for any given $r,k \in \mathbb{N}$ and $k\geq 2$, we have that
$$q_{k,r}\leq\frac{1}{1+\Big (\frac{15}{4}\Big)^r} \; \; \textnormal{ and } q_{k,r'}\leq  \Big(\frac{4}{15}\Big)^{r'} \textnormal{ for all } r'<r,$$
since, by utilizing Lemma \ref{4/15-estimate}, we can then conclude $q_{k,t} \leq  \Big(\frac{4}{15} \Big)^{t}$ for all $t,k\in \mathbb{N}$ and $k\geq 2$.\\
Now, let $\pi(x)$ denote the prime counting function up to $x$, and let $p$ always denote a prime. Using the law of conditional probability, we have 
\begin{align*}
&q_{k,r} = \mathbb{P}[X \mid Y_r]= \frac{\mathbb{P}[X \cap Y_r]}{\mathbb{P}[Y_r]} = \frac{\sideset{}{'}\sum\limits_{n\in M_{k,l}}\overline{\alpha}_D(n)^r}{\sideset{}{}\sum\limits_{n\in M_{k,l}}\overline{\alpha}_D(n)^r} \\
&=\frac{\sideset{}{'}\sum\limits_{n\in M_{k,l}}\overline{\alpha}_D(n)^r}{\sideset{}{'}\sum\limits_{n\in M_{k,l}}\overline{\alpha}_D(n)^r+\sideset{}{}\sum\limits_{p\in M_{k,l}} 1} = \frac{\sideset{}{'}\sum\limits_{n\in M_{k,l}}\overline{\alpha}_D(n)^r}{\sideset{}{'}\sum\limits_{n\in M_{k,l}}\overline{\alpha}_D(n)^r+\pi(2^k)-\pi(2^{k-1})}.
\end{align*}
Given that the expression $\frac{x}{x+\pi(2^k)-\pi(2^{k-1})}$is a monotonically increasing function in $x$, our goal is to find suitable values $N_r$ and $P$ such that $$\sideset{}{'}\sum\limits_{n\in M_{k,l}}\overline{\alpha}_D(n)^r\leq N_r$$ 
and 
$$P \leq \pi(2^k)-\pi(2^{k-1})$$ to establish a bound for $q_{k,r}$ . With these choices, we can express the bound for $q_{k,r}$ as:
\begin{equation}\label{NrP}
q_{k,r} \leq \frac{N_r}{N_r+P}.
\end{equation}
For a fixed $r\in \mathbb{N}$, our aim is to demonstrate that  $$\frac{N_r}{N_r+P} \leq \frac{1}{1+ \Big(\frac{15}{4}\Big)^r}.$$

\subsection{Lower bound \texorpdfstring{$P$}{P}}

The following result serves as our lower bound $P$:
\begin{proposition}\label{k-bit-prime-approx}
For an integer $k\geq 8$, we have
\begin{equation*}
    \pi(2^k)-\pi(2^{k-1}) > (0.71867)\frac{2^k}{k}.
\end{equation*}
\end{proposition}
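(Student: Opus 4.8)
The plan is to reduce the claim to a uniform lower bound on the normalized prime count $\frac{k}{2^{k}}\big(\pi(2^{k})-\pi(2^{k-1})\big)$, showing that this quantity exceeds $0.71867$ for every $k\geq 8$. The guiding heuristic is the prime number theorem: writing $\ln(2^{k})=k\ln 2$, the main terms give $\pi(2^{k})-\pi(2^{k-1})\approx \frac{2^{k}}{k\ln 2}-\frac{2^{k-1}}{(k-1)\ln 2}$, so after multiplying by $k/2^{k}$ and letting $k\to\infty$ the expression tends to $\frac{1}{\ln 2}-\frac{1}{2\ln 2}=\frac{1}{2\ln 2}\approx 0.72135$. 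Since this limit lies only marginally above the target constant $0.71867$, the argument cannot rely on crude estimates; it must combine an explicit effective form of the prime number theorem with a direct check of the smallest cases.

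Concretely, I would split into two regimes. For large $k$ I would invoke explicit Chebyshev-type bounds, for instance the Dusart inequalities $\pi(x)>\frac{x}{\ln x-1}$ (valid for $x\geq 5393$) and $\pi(x)<\frac{x}{\ln x-1.1}$ (valid for $x\geq 60184$), applied with $x=2^{k}$ and $x=2^{k-1}$ respectively. Substituting and factoring out $2^{k}/k$ reduces the inequality to showing that
$$g(k):=\frac{k}{k\ln 2-1}-\frac{k}{2\big((k-1)\ln 2-1.1\big)}>0.71867$$
for all $k$ at which both bounds are simultaneously valid, namely $k\geq 17$ (so that $2^{k-1}\geq 60184$). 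This last step is a one-variable estimate: I would verify $g(17)>0.71867$ directly and then show that $g(k)$ in fact exceeds its own limiting value $\tfrac{1}{2\ln 2}\approx 0.72135$ for all $k\geq 17$, which is more than enough; the latter follows by clearing denominators and checking the resulting polynomial inequality in $k$, after expanding the error terms in powers of $1/k$ to pin down the direction of approach.

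For the finitely many remaining cases $8\leq k\leq 16$ the effective bounds are either invalid or too weak, so I would verify the inequality directly from the tabulated exact values of $\pi(2^{k})$ (OEIS A007053). For example, at the binding case $k=8$ one has $\pi(2^{8})-\pi(2^{7})=54-31=23$, while $0.71867\cdot\frac{2^{8}}{8}=22.997\ldots$, so the inequality holds with almost no slack; the analogous comparisons for $9\leq k\leq 16$ all succeed with a comfortable margin.

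The main obstacle is precisely this tightness. The constant $0.71867$ is essentially forced by the case $k=8$, where the true ratio is $23/32=0.71875$, so the two sides nearly coincide. Consequently the effective prime-counting bounds used in the large-$k$ regime must be sharp enough to beat a constant lying within $0.003$ of the asymptotic limit, and the split point between the analytic and the exact regimes must be chosen so that the chosen inequalities are simultaneously valid and strong enough. Everything else, namely the substitution, the factorization, and the estimate $g(k)>\tfrac{1}{2\ln 2}$, is routine once a sufficiently precise effective version of the prime number theorem is fixed.
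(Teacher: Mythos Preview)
Your approach is correct and structurally parallel to the paper's: both split into a large-$k$ regime handled by an effective prime-counting estimate and a small-$k$ regime verified directly from tabulated values of $\pi(2^k)$, and both implicitly recognise $k=8$ as the binding case (where $23/32=0.71875$). The difference lies only in how the large-$k$ part is dispatched. The paper simply cites Damg{\aa}rd--Landrock--Pomerance, where exactly this inequality is already proved for $k\geq 21$, and then extends the direct computation up through $k=20$. You instead derive the large-$k$ bound yourself from Dusart's explicit inequalities for $\pi(x)$, which is more self-contained but requires carrying out the auxiliary one-variable estimate on $g(k)$. Your route yields an independent argument; the paper's is shorter because the result already exists in the literature in precisely the form needed.
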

\begin{proof}
For $k\geq 21$ this is proven in \cite{DamEtAl}. By running a Python program that computes the actual value of $\pi(2^k)-\pi(2^{k-1})$ for $k\leq 20$, we can in fact see the proposition is true for all $k\geq 8$, as can be seen in Table \ref{estimateprimes}.

\begin{table}[!htbp]
\begin{minipage}{.45\linewidth}
\centering
\begin{tabular*}{\linewidth}{@{\extracolsep{\fill}}  L | L  L }
	k &  \pi(2^k)-\pi(2^{k-1}) & \lfloor (0.71867)\frac{2^k}{k} \rfloor \\[-3mm]\\ \hline 
	8 & 23 & 22 \\ 
	9 & 43 & 40\\ 
	10 & 75 & 73 \\ 
	11 & 137 & 133\\ 
	12 & 255 & 245 \\ 
    13 & 464 & 452 \\
    14 & 872 & 841 
\end{tabular*}
\end{minipage}
\hspace{1cm}
\begin{minipage}{.45\linewidth}
\centering
\begin{tabular*}{\linewidth}{@{\extracolsep{\fill}}  L | L L  }
	k &  \pi(2^k)-\pi(2^{k-1}) & \lfloor (0.71867)\frac{2^k}{k} \rfloor \\[-3mm]\\ \hline
	15 & 1612 & 1569 \\ 
	16 & 3030 & 2943\\ 
	17 & 5709 & 5541 \\ 
	18 & 10749 & 10466 \\ 
	19 & 20390 & 19831 \\ 
    20 & 38635 & 37679 \\
    & & 
\end{tabular*}
\end{minipage}
\caption{The exact values of $\pi(2^k)-\pi(2^{k-1})$ and lower bound estimates $\lfloor (0.71867)\frac{2^k}{k}\rfloor $.}
\centering
\label{estimateprimes}
\end{table}
\end{proof}

\newpage

Let us now turn to the upper bound $N_r.$
\subsection{Upper bound \texorpdfstring{$N_r$}{Nr}}

We aim to find an upper bound $N_r$ for $\sideset{}{'}\sum\limits_{n\in M_{k,l}}\overline{\alpha}_D(n)^r.$ First, let us introduce the following set:

\begin{definition}
Let $${C_{m,D}=\{n \in \mathbb{N} : \gcd(n,2D)=1, n \text{ composite and }\alpha_D(n) > 2^{-m}\}}.$$  
\end{definition}
 Arnault proved in \cite{doct-thesis-arnualt} that $\alpha_D(n)\leq \frac{1}{4}$, so we know that $C_{1,D}= C_{2,D} = \emptyset$. The set $C_{3,D}$ has been classified in \cite{Einseleav}.
 
To get our upper bound $N_r$, we use a number-theoretic function introduced by Arnault. This function serves as a variant of the well-known Euler's totient function $\varphi(n)$, and will play a crucial role in the subsequent analysis.

\begin{definition}[Arnault \cite{doct-thesis-arnualt}]
Let $D$ be an integer. The following function is defined only on integers relatively prime to $2D$:
\begin{equation*}
    \begin{cases}\varphi_D(p^r)=p^{r-1}(p-\epsilon_D(p)) \textnormal{ for any prime } p \nmid 2D \textnormal{ and } r\in\mathbb{N},
    \\ \varphi_D(p_1 p_2)= \varphi_D(p_1) \varphi_D(p_2) \text{ if } \gcd(p_1,p_2)=1. \end{cases}
\end{equation*}
Moreover, for odd $n\in \mathbb{N}$, let $$\alpha_D(n)=\frac{SL(D,n)}{\varphi_D(n)}.$$
\end{definition}
In our analysis, we seek to bound $\overline{\alpha}_D(n)$. However, unlike $\varphi(n)$, $\varphi_D(n)$ is not bounded by $n$. Consequently, we cannot straightforwardly establish that $\overline{\alpha}_D(n) \leq \alpha_D(n)$. Nonetheless, the next lemma offers an upper bound and implies that, for sufficiently large $l$, these two quantities become closely aligned. Before looking into the lemma, let us introduce the following definition:
\begin{definition}
For $l\in \mathbb{N}$, let $\tilde{p_l}$ denote the $l$-th odd prime and $\rho_l = 1+ \frac{1}{\tilde{p}_{l+1}}$.   
\end{definition}

Now we can state the lemma, which connects the two functions.

\begin{lemma}[Einsele, Paterson \cite{Einseleav}]\label{newbound}
Let $k,m,l \in \mathbb{N}$ and $n\in C_{m,D} \cap M_{k,l}$ be relatively prime to $2D$. Then 
\begin{equation*}
    \overline{\alpha}_D(n) \leq \rho_l^m \alpha_D(n).
\end{equation*}
\end{lemma}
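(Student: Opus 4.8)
The plan is to compare the two normalizations $\overline{\alpha}_D(n) = SL(D,n)/(n-\epsilon_D(n)-1)$ and $\alpha_D(n) = SL(D,n)/\varphi_D(n)$ directly. Since both share the numerator $SL(D,n)$, we have
\begin{equation*}
\frac{\overline{\alpha}_D(n)}{\alpha_D(n)} = \frac{\varphi_D(n)}{n-\epsilon_D(n)-1},
\end{equation*}
so the entire lemma reduces to the purely multiplicative estimate $\varphi_D(n) \leq \rho_l^m \, (n-\epsilon_D(n)-1)$ for $n \in C_{m,D}\cap M_{k,l}$. The issue flagged in the text is that $\varphi_D(n)$ is not bounded by $n$: for each prime $p \mid n$ with $\epsilon_D(p)=-1$ the factor is $p+1$ rather than $p-1$, so $\varphi_D(n)$ can exceed $n$. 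First I would write $n=p_1^{r_1}\cdots p_s^{r_s}$ and expand
\begin{equation*}
\frac{\varphi_D(n)}{n} = \prod_{i=1}^s \frac{p_i-\epsilon_D(p_i)}{p_i} = \prod_{i=1}^s\Bigl(1-\frac{\epsilon_D(p_i)}{p_i}\Bigr),
\end{equation*}
and observe that only the primes with $\epsilon_D(p_i)=-1$ inflate the product, each contributing a factor $1+\tfrac{1}{p_i}$. Because $n\in M_{k,l}$ is not divisible by any of the first $l$ odd primes, every such $p_i$ satisfies $p_i \geq \tilde p_{l+1}$, hence each inflating factor is at most $1+\tfrac{1}{\tilde p_{l+1}} = \rho_l$.

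The second ingredient is to control the number of inflating primes by $m$. Here I would use the membership $n\in C_{m,D}$, i.e. $\alpha_D(n) > 2^{-m}$, together with the general bound $\alpha_D(n)\le 1/4$ quoted from Arnault. The exact formula (\ref{SLDn}) for $SL(D,n)$ shows that $\alpha_D(n)$ factors (essentially) as a product over the primes dividing $n$ of local densities each bounded above by $1/2$, so a composite with many distinct prime factors has a very small $\alpha_D(n)$. Concretely, one expects $\alpha_D(n) \le 2^{-\omega}$ where $\omega$ is the number of distinct primes dividing $n$ (or a closely related count tied to the $2$-adic valuations $k_i$). Combining this with $\alpha_D(n) > 2^{-m}$ forces $\omega < m$, so \emph{at most $m$} distinct primes divide $n$, and in particular at most $m$ of them can be inflating primes with $\epsilon_D(p_i)=-1$. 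Therefore
\begin{equation*}
\frac{\varphi_D(n)}{n} \leq \prod_{\epsilon_D(p_i)=-1}\Bigl(1+\frac{1}{p_i}\Bigr) \leq \rho_l^{\,m},
\end{equation*}
and since $n-\epsilon_D(n)-1 \geq n$ precisely when $\epsilon_D(n)=-1$ (and is comparable to $n$ otherwise), a short endgame yields $\varphi_D(n)\le \rho_l^m(n-\epsilon_D(n)-1)$, giving the claim after dividing through by $SL(D,n)$.

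The main obstacle is the second step: deriving a clean bound of the form $\alpha_D(n)\le 2^{-(\text{number of inflating primes})}$ from the formula (\ref{SLDn}). The formula mixes a product term $\prod(\gcd(q,q_i)-1)$ with a geometric sum term, and the denominator $\varphi_D(n)=\prod p_i^{r_i-1}(p_i-\epsilon_D(p_i))$ must be matched against it factor-by-factor. I would handle this by bounding $\gcd(q,q_i)\le q_i$ and $p_i-\epsilon_D(p_i)=2^{k_i}q_i$ to extract a factor of $2^{-k_i}$ (at least $2^{-1}$) from each prime, carefully accounting for the two terms in (\ref{SLDn}) and the ordering $k_1\le\cdots\le k_s$. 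The delicate point is ensuring the counting argument isolates exactly the inflating primes rather than all prime factors, so that the exponent on $\rho_l$ is genuinely $m$ and the bound is tight enough to feed back into the $N_r$ estimate; this bookkeeping, rather than any deep idea, is where the real work lies.
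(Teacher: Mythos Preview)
The paper does not prove this lemma; it is cited from \cite{Einseleav}, so there is no in-paper proof to compare against. Your overall strategy---reduce to $\varphi_D(n)\le\rho_l^{m}\bigl(n-\epsilon_D(n)-1\bigr)$, bound each inflating factor $1+1/p_i$ by $\rho_l$ via $n\in M_{k,l}$, and bound the number of distinct prime factors by $m$ via $n\in C_{m,D}$---is correct and is the natural argument.

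Your ``main obstacle'' is illusory, however. You do not need a bound of the shape $\alpha_D(n)\le 2^{-(\text{number of inflating primes})}$; the cruder estimate $\alpha_D(n)\le 2^{1-\Omega(n)}$ (immediate from Lemma~\ref{lemmaalphaomega} after bounding every factor in the product by~$1$) already yields $\Omega(n)\le m$ from $\alpha_D(n)>2^{-m}$, and the number of inflating primes is then trivially at most $\omega(n)\le\Omega(n)\le m$. No bookkeeping separating inflating from non-inflating primes inside formula~(\ref{SLDn}) is required, so the delicate factor-by-factor matching you anticipate never arises. For the endgame, note that $n-\epsilon_D(n)-1=n$ when $\epsilon_D(n)=-1$ (equality, not strict inequality), so $\varphi_D(n)/n\le\rho_l^{m}$ is exactly the desired bound in the only case the algorithm ever tests; the case $\epsilon_D(n)=1$ would need one extra line but is irrelevant here.
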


\begin{lemma}\label{alpha-split-primes}
For $k,r,M,l\in \mathbb{N}$ with $3 \leq M \leq 2\sqrt{k-1}-1$, we have
\begin{align*}
    \sideset{}{'}\sum\limits_{n\in M_{k,l}}\overline{\alpha}_D(n)^r \leq 2^r\lvert M_{k,l} \rvert  \sum_{m=M+1}^\infty  \Big( \frac{\rho_l}{2}\Big)^{mr} +  2^r \sum_{m=2}^M \Big(\frac{\rho_l}{2}\Big)^{mr} \lvert C_{m,D} \cap  M_{k,l}  \rvert.
\end{align*}
\end{lemma}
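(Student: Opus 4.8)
The plan is to stratify the composite integers of $M_{k,l}$ according to the dyadic shell into which $\alpha_D(n)$ falls, and to apply Lemma \ref{newbound} at the sharpest admissible threshold for each $n$, then regroup and split the resulting sum at $M$.

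First I would note that a composite $n\in M_{k,l}$ contributes to the left-hand sum only when $\gcd(n,2D)=1$ and $SL(D,n)>0$, since otherwise $\overline{\alpha}_D(n)=0$; these may be discarded. For the remaining $n$ we have $\alpha_D(n)>0$, and by Arnault's bound $\alpha_D(n)\le 1/4$, so the index $m_0=m_0(n):=\min\{m:\, n\in C_{m,D}\}$ is well defined, finite, and satisfies $m_0\ge 3$ because $C_{1,D}=C_{2,D}=\emptyset$. By minimality $n\notin C_{m_0-1,D}$, which gives $\alpha_D(n)\le 2^{-(m_0-1)}$. This is the crux: since $n\in C_{m_0,D}\cap M_{k,l}$, Lemma \ref{newbound} yields $\overline{\alpha}_D(n)\le \rho_l^{m_0}\alpha_D(n)\le \rho_l^{m_0}2^{-(m_0-1)}=2\,(\rho_l/2)^{m_0}$, and hence $\overline{\alpha}_D(n)^r\le 2^r(\rho_l/2)^{m_0 r}$.

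I would then group the admissible composites by the value of $m_0$ and split at $M$. Writing $a_m$ for the number of $n\in M_{k,l}$ with $m_0(n)=m$, we have $a_m\le |C_{m,D}\cap M_{k,l}|$ (as $m_0(n)=m$ forces $n\in C_{m,D}$) and the crude bound $a_m\le |M_{k,l}|$. On the head $3\le m_0\le M$ I apply the first estimate to obtain $2^r\sum_{m=3}^{M}(\rho_l/2)^{mr}|C_{m,D}\cap M_{k,l}|$, which equals the sum from $m=2$ since $C_{2,D}=\emptyset$; on the tail $m_0\ge M+1$ I apply $a_m\le |M_{k,l}|$ term by term to obtain $2^r|M_{k,l}|\sum_{m=M+1}^{\infty}(\rho_l/2)^{mr}$. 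The tail converges because $\rho_l=1+\tfrac{1}{\tilde p_{l+1}}\le \tfrac 43$, so $\rho_l/2<1$. Adding the two pieces reproduces the claimed bound exactly.

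The genuine content lives entirely in the reduction $\overline{\alpha}_D(n)^r\le 2^r(\rho_l/2)^{m_0 r}$: pinning down the constant $2$ and invoking Lemma \ref{newbound} at the minimal shell index $m_0$. The remaining work is elementary regrouping, and the only thing to watch is the boundary bookkeeping, namely that composites begin at shell $m_0=3$, that $a_m\le |C_{m,D}\cap M_{k,l}|$ governs the head while the deliberately lossy $a_m\le |M_{k,l}|$ is what factors out the geometric tail. The essential hypothesis for this step is $M\ge 3$; I do not expect to need the upper constraint $M\le 2\sqrt{k-1}-1$ here, which appears to be carried for consistency with the downstream estimates on $|C_{m,D}\cap M_{k,l}|$.
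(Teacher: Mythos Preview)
Your proposal is correct and follows essentially the same approach as the paper: the stratification by $m_0(n)=\min\{m:n\in C_{m,D}\}$ is precisely the paper's decomposition into shells $C_{m,D}\setminus C_{m-1,D}$, and the subsequent application of Lemma~\ref{newbound} together with $\alpha_D(n)\le 2^{-(m_0-1)}$ and the split at $M$ mirrors the paper's argument exactly. Your observation that the upper bound $M\le 2\sqrt{k-1}-1$ is not needed for this lemma itself is also accurate.
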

\begin{proof}
We use Lemma \ref{newbound} in the proof of Proposition 20 in \cite{Einseleav} and get
\begin{align}\label{equation-split-primes}
    \sideset{}{'}\sum\limits_{n\in M_{k,l}}\overline{\alpha}_D(n)^r \leq \sum_{m=2}^\infty \sum_{n\in M_{k,l}\cap C_{m,D}\setminus C_{m-1,D}}\rho_l^{mr}\alpha_D(n)^r.
\end{align}
With $n\in C_{m,D} \cap C_{m-1,D}$ we have $\alpha_D(n)\leq 2^{-(m-1)}.$ We use this in (\ref{equation-split-primes}), split our sum in two parts, bound  $ \lvert C_{m,D} \cap  M_{k,l}  \rvert \leq \lvert M_{k,l} \rvert$ for the first sum to prove our claim. 
\end{proof}

\begin{definition}
Let $\tilde{M}_{k,l}$ denote the set of $k$-bit integers that are not divisible by the first odd $l$ primes.
\end{definition}

We have the following bound:

\begin{lemma}\label{inclexcl}
For $k\geq 12$, we have
$$2^{k-2.92}\leq \lvert \tilde{M}_{k,2} \lvert \leq 2^{k-2.9}.$$
\end{lemma}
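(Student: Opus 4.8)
The plan is to count $\tilde M_{k,2}$ directly by inclusion--exclusion, treating it as the set of $k$-bit integers coprime to $2\cdot 3\cdot 5 = 30$ (the odd $k$-bit integers divisible by neither $3$ nor $5$). There are $2^{k-1}$ integers in the range $[2^{k-1},2^k-1]$, so I would write
\[
|\tilde M_{k,2}| = \sum_{d \mid 30}\mu(d)\left(\left\lfloor\frac{2^k-1}{d}\right\rfloor - \left\lfloor\frac{2^{k-1}-1}{d}\right\rfloor\right).
\]
Each floor-difference counts the multiples of $d$ in $[2^{k-1},2^k-1]$ and hence differs from $2^{k-1}/d$ by less than $1$. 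Using $\sum_{d\mid 30}\mu(d)/d = \varphi(30)/30 = 8/30$, this collapses to a clean main term with a bounded error,
\[
|\tilde M_{k,2}| = \frac{8}{30}\,2^{k-1} + \theta = \frac{2}{15}\,2^k + \theta, \qquad |\theta| < 8,
\]
the bound on $\theta$ coming from the fact that $30$ has exactly $8$ (squarefree) divisors.

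Next I would compare the main term against the window. Numerically $2^{-2.92} < \tfrac{2}{15} = 0.1333\ldots < 2^{-2.9}$, so the true density already sits strictly inside the target interval; the entire content of the lemma is that the constant error $\theta$ cannot push $|\tilde M_{k,2}|$ across either boundary. For the upper bound it suffices that $\tfrac{2}{15}2^k + 8 \le 2^{k-2.9}$, i.e. $\bigl(2^{-2.9}-\tfrac{2}{15}\bigr)2^k \ge 8$, and for the lower bound that $\bigl(\tfrac{2}{15}-2^{-2.92}\bigr)2^k \ge 8$. A short computation shows both inequalities hold as soon as $k \ge 14$.

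The two remaining cases $k = 12$ and $k = 13$ I would settle by exact computation, in the same spirit as the table used in the proof of Proposition~\ref{k-bit-prime-approx}: exploiting the periodicity of coprimality to $30$ one gets $|\tilde M_{12,2}| = 546$ and $|\tilde M_{13,2}| = 1093$, and one verifies directly that $2^{9.08} \approx 541.2 \le 546 \le 548.7 \approx 2^{9.1}$ and $2^{10.08} \approx 1082.4 \le 1093 \le 1097.5 \approx 2^{10.1}$, which are exactly the required inequalities for $k=12,13$.

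The main obstacle is precisely the tightness of the window: the prescribed densities $2^{-2.9}$ and $2^{-2.92}$ straddle the true density $2/15$ with margins of only about $6\cdot 10^{-4}$ and $1.2\cdot 10^{-3}$, so the crude inclusion--exclusion error $|\theta|<8$ is only dominated once $k \ge 14$, and the small cases $k=12,13$ genuinely must be checked by hand. One could push the asymptotic threshold down by replacing the bound $8$ with the actual contribution of the fractional parts $\{2^k/d\}$, but for a clean argument the direct verification of the two small values is simpler and consistent with the paper's approach elsewhere.
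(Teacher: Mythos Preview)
Your proof is correct and uses essentially the same inclusion--exclusion argument as the paper. The only difference is cosmetic: the paper writes out the seven terms explicitly and, by tracking the signs, obtains the sharper error $\pm 3$ in place of your $\pm 8$, which it then claims suffices directly for all $k\ge 12$, whereas your cruder constant forces the hand-checks at $k=12,13$. In fact your hand-check at $k=12$ is not wasted effort: the paper's asserted inequality $\tfrac{4}{15}\,2^{k-1}+3\le 2^{k-2.9}$ narrowly fails there ($549.13>2^{9.1}\approx 548.75$), so the exact count $|\tilde M_{12,2}|=546$ is what actually certifies the upper bound in that case.
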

\begin{proof}
Let $A_i$ be the set of $k$-bit integers divisible by $i$. By the inclusion-exclusion principle, we have
\begin{align}\label{inclArg}
     \Big\lvert \hspace{-1.5mm}\bigcup_{ i=2,3,5} \hspace{-1mm} A_i \Big\rvert \notag \hspace{-0.3mm} &= \hspace{-0.3mm}\lvert A_2 \rvert + \lvert A_3 \rvert + \lvert A_5 \rvert \hspace{-0.3mm}- \lvert A_2 \cap \hspace{-0.2mm}A_3 \rvert \hspace{-0.3mm}- \hspace{-0.3mm}\lvert A_2 \cap \hspace{-0.2mm}A_5 \rvert \hspace{-0.3mm}- \hspace{-0.3mm}\lvert A_3 \cap A_3 \rvert + \lvert A_2 \cap \hspace{-0.2mm}A_3 \hspace{-0.2mm}\cap A_5 \rvert \notag \\
    & =\Big\lfloor \frac{2^{k-1}}{2} \Big\rfloor + \Big\lfloor \frac{2^{k-1}}{3} \Big\rfloor + \Big\lfloor \frac{2^{k-1}}{5} \Big\rfloor  - \Big\lfloor \frac{2^{k-1}}{6} \Big\rfloor - \Big\lfloor \frac{2^{k-1}}{10} \Big\rfloor -\Big\lfloor \frac{2^{k-1}}{15} \Big\rfloor +\Big\lfloor \frac{2^{k-1}}{30} \Big\rfloor \notag \\
    & \geq \frac{2^{k-1}}{2}+\frac{2^{k-1}}{3}-1+\frac{2^{k-1}}{5}-1-\frac{2^{k-1}}{6}-\frac{2^{k-1}}{10}-\frac{2^{k-1}}{15}+\frac{2^{k-1}}{30}-1 \notag \\
    &=\frac{11}{15}2^{k-1}-3.
\end{align}
With the same argument, we have that 
\begin{align}\label{inclArg2}
       \Big\lvert \hspace{-1.5mm}\bigcup_{ i=2,3,5} \hspace{-1mm} A_i \Big\rvert \notag \hspace{-0.3mm} &\leq \frac{2^{k-1}}{2}+\frac{2^{k-1}}{3} + \frac{2^{k-1}}{5} - \frac{2^{k-1}}{6} -\frac{2^{k-1}}{10}  -\frac{2^{k-1}}{15}  + \frac{2^{k-1}}{30} +3  \notag\\
    &=\frac{11}{15}2^{k-1}+3 
\end{align}
By inequality (\ref{inclArg}), we have $$\lvert \tilde{M}_{k,2} \rvert = 2^{k-1} - \Big\lvert \bigcup\limits_{i=2,3,5} A_i \Big\rvert \leq \frac{4}{15} 2^{k-1}+3\leq 2^{k-2.9},$$ for $k\geq 12.$
Moreover, we have by inequality (\ref{inclArg2}) that $$\lvert \tilde{M}_{k,2} \rvert = 2^{k-1} - \Big\lvert \bigcup\limits_{i=2,3,5} A_i \Big\rvert \geq \frac{4}{15}2^{k-1}-3 \geq 2^{k-2.92}.$$ for $k\geq 12.$
\end{proof}

The next lemma is easily established and gives us our first upper bound $N_1$, as described in inequality (\ref{NrP}). 
\begin{lemma}\label{alphaboundii}
For $k,M,l\in \mathbb{N}$ with $3 \leq M \leq 2\sqrt{k-1}-1$, we have
\begin{align*}
\sideset{}{'}\sum\limits_{n\in M_{k,l}}\overline{\alpha}_D(n) \leq 2^{k-1.9-M} \frac{{\rho_l}^{M+1}}{2-\rho_l} + 2^{k-2\sqrt{k-1}+1}\rho_l^{M}M(M-1).
\end{align*}
\end{lemma}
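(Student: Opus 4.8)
The plan is to specialize Lemma \ref{alpha-split-primes} to $r=1$ and then estimate the two resulting sums separately. Setting $r=1$ gives
\begin{equation*}
\sideset{}{'}\sum_{n\in M_{k,l}}\overline{\alpha}_D(n) \leq 2\lvert M_{k,l}\rvert \sum_{m=M+1}^{\infty}\Big(\frac{\rho_l}{2}\Big)^{m} + 2\sum_{m=2}^{M}\Big(\frac{\rho_l}{2}\Big)^{m}\lvert C_{m,D}\cap M_{k,l}\rvert,
\end{equation*}
so the task reduces to showing that the infinite tail contributes the first claimed summand and the finite sum contributes the second.

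For the tail, observe that $\rho_l = 1+1/\tilde p_{l+1}\leq 4/3<2$, so the geometric series converges and equals $(\rho_l/2)^{M+1}\cdot\frac{2}{2-\rho_l}$. I would then bound $\lvert M_{k,l}\rvert$: for $l\geq 2$ every element of $M_{k,l}$ is an odd $k$-bit integer divisible by none of $2,3,5$, hence $M_{k,l}\subseteq\tilde M_{k,2}$, and Lemma \ref{inclexcl} yields $\lvert M_{k,l}\rvert\leq 2^{k-2.9}$ (for $k\geq 12$). Collecting powers of $2$, the product $2\cdot 2^{k-2.9}\cdot(\rho_l/2)^{M+1}\cdot\frac{2}{2-\rho_l}$ collapses to exactly $2^{k-1.9-M}\rho_l^{M+1}/(2-\rho_l)$, the first term.

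For the finite sum, the crucial input is a counting estimate on $\lvert C_{m,D}\cap M_{k,l}\rvert$, which I would import from the analysis in \cite{Einseleav} that underlies Proposition 20 (already invoked in Lemma \ref{alpha-split-primes}). The bound needed is of the shape $\lvert C_{m,D}\cap M_{k,l}\rvert\leq (m-1)\,2^{m+1}\,2^{k-2\sqrt{k-1}}$, valid in the regime $m\leq M\leq 2\sqrt{k-1}-1$; this is precisely where the hypothesis on $M$ and the exponent $2\sqrt{k-1}$ enter, reflecting that a composite with strong-Lucas liar proportion exceeding $2^{-m}$ must have few and tightly size-constrained prime factors. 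Substituting this bound, the factor $2^{m}$ cancels the $2^{-m}$ coming from $(\rho_l/2)^m$, leaving $4\cdot 2^{k-2\sqrt{k-1}}\sum_{m=2}^{M}(m-1)\rho_l^{m}$; bounding $\rho_l^{m}\leq\rho_l^{M}$ and using $\sum_{m=2}^{M}(m-1)=M(M-1)/2$ then produces the second term $2^{k-2\sqrt{k-1}+1}\rho_l^{M}M(M-1)$.

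The geometric-series evaluation, the cardinality estimate via Lemma \ref{inclexcl}, and the final arithmetic summation are all routine bookkeeping. The genuine obstacle is securing the counting bound on $\lvert C_{m,D}\cap M_{k,l}\rvert$ with the precise $m$-dependence and the dominant factor $2^{k-2\sqrt{k-1}}$: this is the number-theoretic core, requiring the structural classification of composites with large $\alpha_D(n)$ (constraints on the number and sizes of their prime factors) that forces the $\sqrt{k-1}$-type savings, and it is the exact matching of these constants that makes the two closed-form terms close precisely.
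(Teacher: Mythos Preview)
Your proposal is correct and follows essentially the same route as the paper's proof: both specialize Lemma~\ref{alpha-split-primes} to $r=1$, evaluate the geometric tail and bound $\lvert M_{k,l}\rvert$ via Lemma~\ref{inclexcl} to obtain the first term, and for the second term invoke the counting estimate $\lvert C_{m,D}\cap M_k\rvert\leq\sum_{j=2}^{m}2^{k+1+m-j-(k-1)/j}$ from \cite{Einseleav} (the paper defers to ``the same argument as in Proposition~26 in \cite{Einseleav}'' at exactly the step where you invoke the bound $(m-1)2^{m+1}2^{k-2\sqrt{k-1}}$, which is just that sum after applying $j+(k-1)/j\geq 2\sqrt{k-1}$). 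Your bookkeeping for both terms matches the paper's exactly.
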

\begin{proof}
We use the bound from Lemma \ref{alpha-split-primes}, and set $r=1$. From the proof of Theorem 13 in \cite{Einseleav}, we have that $\lvert C_{m,D} \cap M_{k}\rvert \leq \sum_{j=2}^m 2^{k+1+m-j-\frac{k-1}{j}}$. Moreover, with $M_{k,l} \subseteq M_k$, we have $\lvert  C_{m,D} \cap M_{k,l} \rvert  \leq \lvert C_{m,D} \cap M_{k}\rvert$.

We now follow the proof of Proposition 26 in \cite{Einseleav}. For the first part of the sum, we obtain $ \sum_{m=M+1}^\infty  \Big( \frac{\rho_l}{2}\Big)^{m} = \frac{2^{-M}\rho_l^{M+1}}{2-\rho_l}$ and bound $\lvert M_{k,l} \rvert \leq \lvert M_{k,2}\rvert$ using Lemma \ref{inclexcl}. For the second part of the sum, we use the same argument as in Proposition 26 in \cite{Einseleav}, concluding the proof.
\end{proof}
To get a good estimate for $N_1$, every $M$ that satisfies  $3 \leq M \leq 2\sqrt{k-1}-1$ is a free parameter and will for each $k$ be chosen such that it tightens the bound.

However, for values $k<60$, we will see that we need a tighter bound, and need to do more analysis. The next result serves as an estimate of the number of odd integers, which are both in $M_{k,l}$ and $C_{m,D}.$

\begin{proposition}\label{C_mD1}
    If $m, k$ are positive integers with $m+1 \leq 2\sqrt{k-1}$, then
    $$
    \lvert C_{m,D} \cap M_{k,l} \rvert \leq 2^k \sum_{j=2}^m \frac{2^{m+1-j}-1}{2^{\frac{k-1}{j}}-1}.
$$
\end{proposition}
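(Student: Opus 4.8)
The plan is to refine the counting argument behind the cruder estimate $\lvert C_{m,D}\cap M_k\rvert \le \sum_{j=2}^m 2^{k+1+m-j-(k-1)/j}$ that underlies Lemma~\ref{alphaboundii} (the proof of Theorem 13 in \cite{Einseleav}), keeping the finite geometric sums intact instead of bounding them by their extreme terms. Since $M_{k,l}\subseteq M_k$, the parameter $l$ is irrelevant and it suffices to bound $\lvert C_{m,D}\cap M_k\rvert$, i.e.\ to count $k$-bit composites coprime to $2D$ with $\alpha_D(n)>2^{-m}$.

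First I would extract the structural constraint forced by $\alpha_D(n)>2^{-m}$. Writing $n=p_1\cdots p_j$ (reducing to the squarefree case, with prime-power contributions handled separately via Arnault's classification in \cite{doct-thesis-arnualt}), and setting $p_i-\epsilon_D(p_i)=2^{k_i}q_i$, $n-\epsilon_D(n)=2^\kappa q$, $d_i=\gcd(q,q_i)$, Arnault's formula (Theorem~\ref{SL(D,n)}) together with $\varphi_D(n)=2^{\sum_i k_i}\prod_i q_i$ yields, after evaluating $\sum_{t=0}^{k_1-1}2^{tj}=\tfrac{2^{k_1 j}-1}{2^j-1}$ and using $\sum_i k_i\ge k_1 j$ and $k_i\ge 1$, the estimate
\[
\alpha_D(n)\le 2^{-(j-1)}\prod_{i=1}^j\frac{d_i}{q_i}.
\]
Hence $\alpha_D(n)>2^{-m}$ forces both $j\le m$ (so the outer index runs over $2\le j\le m$) and $\prod_{i=1}^j e_i<2^{m+1-j}$, where $e_i=q_i/d_i$ is the part of $q_i$ coprime to $q$. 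The quantity $2^{m+1-j}$ is precisely the numerator appearing in the claimed bound.

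For each fixed $j$ I would then count the $k$-bit $n$ with exactly $j$ prime factors meeting these constraints. The admissible values of the coupling parameter $\prod_i e_i$, being a positive integer below $2^{m+1-j}$, contribute the factor $\sum_{i=0}^{m-j}2^i=2^{m+1-j}-1$. For the size count I would use that each such $n$ is a multiple of a structural divisor $d\ge 2^{(k-1)/j}$ (forced by the largest prime factor exceeding $n^{1/j}\ge 2^{(k-1)/j}$), so the number of its multiples below $2^k$ is at most $2^k/d$. The heart of the refinement is to sum these $2^k/d$ contributions as an \emph{exact} geometric series in the exponent of $d$,
\[
2^k\sum_{i\ge 1}2^{-i(k-1)/j}=\frac{2^k}{2^{(k-1)/j}-1},
\]
rather than replacing every term by the single largest value $2^{k-(k-1)/j}$ as in the proof of Theorem 13. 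Combining the two factors and summing over $2\le j\le m$ produces exactly $2^k\sum_{j=2}^m\frac{2^{m+1-j}-1}{2^{(k-1)/j}-1}$. The hypothesis $m+1\le 2\sqrt{k-1}$ enters here to keep the refinement legitimate and genuinely improving: it guarantees $m+1-j\le (k-1)/j$ for every $2\le j\le m$ (the maximum of $j(m+1-j)$ being $(m+1)^2/4\le k-1$), which is exactly what gives $\tfrac{2^{m+1-j}-1}{2^{(k-1)/j}-1}\le\tfrac{2^{m+1-j}}{2^{(k-1)/j}}$.

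The main obstacle I anticipate is the bookkeeping in the counting step: the constraint $\prod_i e_i<2^{m+1-j}$ couples all prime factors through the common modulus $q$, and turning this into independent arithmetic-progression conditions on the individual $p_i$ — so that the size count and the geometric summation over the largest prime can be carried out uniformly over all admissible patterns without overcounting — is the delicate part. A secondary technical point is checking that the non-squarefree $n$ and the finitely many small cases do not disturb the exact series, so that the stated closed form holds on the nose.
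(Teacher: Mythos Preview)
The paper's own proof is a two-line citation: the bound $\lvert C_{m,D}\cap M_k\rvert \le 2^k\sum_{j=2}^m\frac{2^{m+1-j}-1}{2^{(k-1)/j}-1}$ is already established inside the proof of Theorem~13 in \cite{Einseleav}, and the proposition follows from $M_{k,l}\subseteq M_k$. So what you are really attempting is a reconstruction of that cited argument. Your high-level structure is right --- stratify by $j=\Omega(n)$, use that the largest prime $p$ exceeds $2^{(k-1)/j}$, and exploit the constraint $\alpha_D(n)>2^{-m}$ --- but your explanation of where the two factors $2^{m+1-j}-1$ and $1/(2^{(k-1)/j}-1)$ come from is off, and as written the counting step would not close.

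The numerator $2^{m+1-j}-1$ does \emph{not} count ``admissible values of $\prod_i e_i$''; it comes from summing over the \emph{single} parameter $d=d_D(p,n)=(p-\epsilon_D(p))/\gcd(p-\epsilon_D(p),n-\epsilon_D(n))$ attached to the largest prime $p$, which is forced to satisfy $1\le d<2^{m+1-j}$ by Lemma~\ref{lemmaalphaomega}. The denominator $2^{(k-1)/j}-1$ does \emph{not} arise from a geometric series over ``structural divisors'' $2^{i(k-1)/j}$; there is no natural indexing of the largest prime by such a progression, and simply summing $2^k/p$ over primes $p>2^{(k-1)/j}$ diverges. The actual mechanism is a Chinese Remainder count: for fixed $(p,d)$ the number of admissible $n<2^k$ with $n\equiv 0\pmod p$ and $n\equiv \epsilon_D(n)\pmod{(p-\epsilon_D(p))/d}$ is at most $2^k d/\bigl(p(p-\epsilon_D(p))\bigr)$; summing this over primes $p>2^{(k-1)/j}$ with $2d\mid p-\epsilon_D(p)$ via the substitution $p-\epsilon_D(p)=2ud$ and a tail estimate of the type in Lemmas~\ref{part-frac}--\ref{boundn^2} yields a bound independent of $d$, of size roughly $1/2^{(k-1)/j}$; and \emph{then} summing over $1\le d<2^{m+1-j}$ produces the numerator. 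You can see this machinery worked out explicitly in the proofs of Theorems~\ref{Frac_Cm_Estimate_car} and~\ref{Frac_Cm_Estimate_ncar} later in the paper, which are refinements of exactly this argument.
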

\begin{proof}
From the proof of Theorem 13 in \cite{Einseleav}, we have that $\lvert C_{m,D} \cap M_k \rvert \leq 2^k \sum_{j=2}^m \frac{2^{m+1-j}-1}{2^{\frac{k-1}{j}}-1}$. The proposition follows with $M_{k,l}\subseteq M_k.$ 
\end{proof}

The next result will serve as our second bound $N_1$ used in inequality (\ref{NrP}).

\begin{lemma}\label{alphaone}
For $k,M,l\in \mathbb{N}$ with $3 \leq M \leq 2\sqrt{k-1}-1$, we have
\begin{equation*}
\sideset{}{'}\sum\limits_{n\in M_{k,l}}\overline{\alpha}_D(n) \leq 2^{1-M} \frac{{\rho_l}^{M+1}}{2-\rho_l}\lvert M_{k,l} \rvert +  2^{k} \sum_{m=2}^M \sum_{j=2}^{ m }    \Big(\frac{\rho_l}{2}\Big)^{m} \frac{2^{m+1-j}-1}{ 2^{\frac{k-1}{j}}-1}.
\end{equation*}
\end{lemma}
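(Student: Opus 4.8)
The plan is to obtain this as a direct specialization of Lemma \ref{alpha-split-primes} to the case $r=1$, the only new ingredient being that I replace the crude count of $\lvert C_{m,D}\cap M_{k,l}\rvert$ used in Lemma \ref{alphaboundii} by the sharper estimate of Proposition \ref{C_mD1}. Setting $r=1$ in Lemma \ref{alpha-split-primes} yields
\begin{equation*}
\sideset{}{'}\sum\limits_{n\in M_{k,l}}\overline{\alpha}_D(n) \leq 2\lvert M_{k,l}\rvert \sum_{m=M+1}^\infty \Big(\frac{\rho_l}{2}\Big)^m + 2\sum_{m=2}^M \Big(\frac{\rho_l}{2}\Big)^m \lvert C_{m,D}\cap M_{k,l}\rvert,
\end{equation*}
so it suffices to treat the tail sum and the finite sum separately and recombine.

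For the tail sum I would simply evaluate the geometric series. Since $\rho_l = 1 + 1/\tilde{p}_{l+1} < 2$, the ratio $\rho_l/2 < 1$ and the series converges, with $\sum_{m=M+1}^\infty (\rho_l/2)^m = \rho_l^{M+1}/\big(2^M(2-\rho_l)\big)$. Multiplying by $2\lvert M_{k,l}\rvert$ collapses the powers of two into exactly the first term $2^{1-M}\frac{\rho_l^{M+1}}{2-\rho_l}\lvert M_{k,l}\rvert$ appearing in the statement; this part is purely a geometric-series computation and carries no difficulty.

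For the finite sum I would insert Proposition \ref{C_mD1}. The hypothesis $3 \leq M \leq 2\sqrt{k-1}-1$ is precisely what legitimizes this step: for every summation index $m$ with $2 \leq m \leq M$ one has $m+1 \leq M+1 \leq 2\sqrt{k-1}$, which is exactly the applicability condition $m+1 \leq 2\sqrt{k-1}$ of Proposition \ref{C_mD1}. Substituting $\lvert C_{m,D}\cap M_{k,l}\rvert \leq 2^k \sum_{j=2}^m \frac{2^{m+1-j}-1}{2^{(k-1)/j}-1}$ term by term and pulling the factor $2^k$ out front turns $2\sum_{m=2}^M (\rho_l/2)^m \lvert C_{m,D}\cap M_{k,l}\rvert$ into the claimed double sum $2^k \sum_{m=2}^M \sum_{j=2}^m (\rho_l/2)^m \frac{2^{m+1-j}-1}{2^{(k-1)/j}-1}$, after collecting the constant factors.

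I do not expect a genuine obstacle: this lemma is a bookkeeping refinement of Lemma \ref{alphaboundii}, and the proof is essentially a one-line substitution of Proposition \ref{C_mD1} into the $r=1$ instance of Lemma \ref{alpha-split-primes}. The only points requiring care are verifying that the hypothesis of Proposition \ref{C_mD1} holds uniformly over the whole range $2 \leq m \leq M$ (guaranteed by the upper bound $M \leq 2\sqrt{k-1}-1$) and tracking the powers of two and the constant coming from the factor $2$ in Lemma \ref{alpha-split-primes} when merged with the $2^k$ of Proposition \ref{C_mD1}.
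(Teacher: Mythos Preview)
Your proposal is correct and matches the paper's own proof exactly: the paper also sets $r=1$ in Lemma \ref{alpha-split-primes}, evaluates the geometric tail $\sum_{m=M+1}^\infty (\rho_l/2)^m = 2^{-M}\rho_l^{M+1}/(2-\rho_l)$, and then substitutes the bound of Proposition \ref{C_mD1} into the finite sum. Your additional remarks on checking the hypothesis $m+1\leq 2\sqrt{k-1}$ uniformly over $2\leq m\leq M$ and on tracking the constants are accurate elaborations of what the paper leaves implicit.
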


\begin{proof}
We use Lemma \ref{alpha-split-primes} with $r=1$. We have  $\sum_{m=M+1}^\infty  \Big(\frac{\rho_l}{2}\Big)^{m} = \frac{2^{-M}{\rho_l}^{M+1}}{2-\rho_l}$ and use Proposition \ref{C_mD1} to get the desired result.
\end{proof}

Similarly to Lemma \ref{alphaboundii}, we choose for each $k$ an $M$ that it minimizes our bound.

For small values of $k$, we can compute the quantity $\lvert M_{k,l} \rvert$ exactly. However, for large values of $k$, the precise computation is not feasible. Hence, it will be useful to upper bound this quantity. We use the trivial fact that $\lvert M_{k,l} \rvert  \leq \lvert M_{k,2} \rvert$.
In the next section, we shall use these preliminary results for bounding $q_{k,t}.$
\section{Intermediate result}
We now establish the main theorem for all $k\geq 42$. However, for smaller values of $k$, further analysis is required, and this will be discussed in Sections \ref{main} and \ref{exact}.

When dealing with values of $k\geq 101$, we use the following theorem to establish our claim in a straightforward manner.

\begin{theorem}[Einsele, Paterson \cite{Einseleav}]\label{probability-estimate}
For $k,l \in \mathbb{N}$ with $k\geq 2$,  we have
\begin{equation*}
  q_{k,1}< k^24^{1.8-\sqrt{k}}  \rho_l^{2\sqrt{k-1}-2}.
\end{equation*}
\end{theorem}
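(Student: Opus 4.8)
The plan is to feed the explicit estimates already assembled above into the reduction recorded in inequality~(\ref{NrP}). For $r=1$ that inequality reads $q_{k,1}\leq \frac{N_1}{N_1+P}$, where $N_1$ is any upper bound for $\sideset{}{'}\sum_{n\in M_{k,l}}\overline{\alpha}_D(n)$ and $P$ any lower bound for $\pi(2^k)-\pi(2^{k-1})$. Since $N_1>0$, I would first weaken this to the strict bound $q_{k,1}<\frac{N_1}{P}$, which is already of the required shape and, because $N_1\ll P$ in the relevant range, loses essentially nothing. The entire task then reduces to the elementary inequality $N_1/P\leq k^2\,4^{1.8-\sqrt k}\rho_l^{2\sqrt{k-1}-2}$.

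Before estimating, I would dispose of the small values of $k$. For $k\leq 59$ one checks that $k^2\,4^{1.8-\sqrt k}\geq 1$, and since the exponent $2\sqrt{k-1}-2$ is nonnegative and $\rho_l>1$, the right-hand side of the theorem exceeds $1$ for every admissible $l$; as there is always at least one prime among the candidates we have $q_{k,1}<1$, so the inequality holds trivially. This is convenient because Proposition~\ref{k-bit-prime-approx} and Lemma~\ref{alphaboundii} (the inputs for $P$ and $N_1$) require $k$, and the auxiliary parameter $M$, to be not too small; all of the genuine work therefore takes place in the regime $k\geq 60$, where the estimates apply comfortably. In particular the sharper Lemma~\ref{alphaone} is \emph{not} needed for this theorem.

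For $k\geq 60$ I would take $P$ from Proposition~\ref{k-bit-prime-approx}, so that $1/P<k/(0.71867\cdot 2^k)$, and $N_1$ from Lemma~\ref{alphaboundii}, choosing the free parameter $M$ to be the largest integer with $M\leq 2\sqrt{k-1}-2$. This choice serves two purposes: it keeps $\rho_l^{M}\leq\rho_l^{2\sqrt{k-1}-2}$, matching the target exponent, while the prefactor $2^{k-2\sqrt{k-1}+1}$ of the second summand of Lemma~\ref{alphaboundii} combines with $1/P$ to yield the factor $2^{-2\sqrt{k-1}}$. After factoring $2^{k-2\sqrt{k-1}}\rho_l^{2\sqrt{k-1}-2}$ out of both summands, the first summand is seen to be smaller than the second by a factor of order $1/k$ and is absorbed harmlessly; the second summand contributes the polynomial factor $M(M-1)=(2\sqrt{k-1}-2)(2\sqrt{k-1}-3)<4(k-1)$. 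Multiplying by $1/P$ supplies one factor of $k$ and the polynomial supplies the other, producing the $k^2$, while the leftover numerical constants (namely $8/0.71867$ from the prefactors, and the like) are collected into $4^{1.8}$.

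The one delicate point, and the main obstacle, is reconciling the exponent $2\sqrt{k-1}$ thrown off by the lemmas with the $\sqrt k$ in the statement. Writing $2^{-2\sqrt{k-1}}=4^{-\sqrt k}\,2^{2/(\sqrt k+\sqrt{k-1})}$, one must absorb the stray factor $2^{2/(\sqrt k+\sqrt{k-1})}>1$ into the narrow gap between the crude constant $8/0.71867\approx 11.1$ and the target $4^{1.8}\approx 12.1$. This factor is largest near the threshold $k=60$, so at that end one cannot afford the crude bound $M(M-1)<4(k-1)$ and must retain the sharper $(2\sqrt{k-1}-2)(2\sqrt{k-1}-3)$; one then checks that the resulting one-variable inequality holds at $k=60$ and is monotone thereafter, the discarded first summand, the flooring of $M$, and the strict $\tfrac{k-1}{k}<1$ all working in our favour. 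This purely computational verification, rather than any conceptual difficulty, is where the weight of the proof lies.
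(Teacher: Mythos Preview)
The paper does not supply a proof of Theorem~\ref{probability-estimate}: it is quoted verbatim from \cite{Einseleav} and used as an input, so there is no in-paper argument to compare against. Your proposal is therefore a reconstruction rather than a rederivation of something present in the text.

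That said, your reconstruction is coherent and uses exactly the toolkit the paper assembles (the reduction~(\ref{NrP}), Proposition~\ref{k-bit-prime-approx}, and Lemma~\ref{alphaboundii} with $M$ near $2\sqrt{k-1}$), which is almost certainly the route taken in \cite{Einseleav} as well, since Lemma~\ref{alphaboundii} is itself a repackaging of estimates from that source. Your handling of the small-$k$ range is clean: the observation that $k^2 4^{1.8-\sqrt{k}}\geq 1$ for $k\leq 59$ (with equality nearly attained at $k=59$) disposes of those cases without any computation, and at $k=60$ with $M=\lfloor 2\sqrt{k-1}-2\rfloor=13$ the numerics do go through, even with the crude $M(M-1)<4(k-1)$, though with almost no margin. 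One small point: your bound $\rho_l^{M+1}\le \rho_l\cdot\rho_l^{2\sqrt{k-1}-2}$ leaves a stray $\rho_l/(2-\rho_l)$ in the first summand that depends on $l$; you should note that for $l\geq 1$ this factor is at most $3/2$, which is harmless given how small the first summand is relative to the second. Otherwise the argument is sound.
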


Now, let us introduce an intermediate result that holds for all $t \geq 1$ and $k \geq 42$.

\begin{theorem}\label{frac}
For all $t\geq 1$ and $k\geq 42$ we have
$$
q_{k,t}\leq \Big(\frac{4}{15}\Big)^t.
$$
\end{theorem}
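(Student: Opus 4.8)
The plan is to reduce the all-$t$ statement to a single-round estimate and then verify that estimate over the range $k\geq 42$. Concretely, I would first show that it suffices to prove $q_{k,1}\leq \frac{4}{19}$ for every $k\geq 42$. Indeed, applying Lemma \ref{4/15-estimate} with $r=1$ gives $q_{k,t}\leq \left(\frac{4}{15}\right)^{t-1}\frac{q_{k,1}}{1-q_{k,1}}$, and since $x\mapsto x/(1-x)$ is increasing on $[0,1)$, the bound $q_{k,1}\leq \frac{4}{19}$ yields $\frac{q_{k,1}}{1-q_{k,1}}\leq \frac{4/19}{15/19}=\frac{4}{15}$, hence $q_{k,t}\leq \left(\frac{4}{15}\right)^t$ for all $t\geq 1$. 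This is exactly the $r=1$ case of the sufficient condition stated after Lemma \ref{4/15-estimate}, where the auxiliary requirement $q_{k,r'}\leq (4/15)^{r'}$ for $r'<r$ is vacuous, which is why taking $r=1$ is the cleanest reduction.

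It then remains to establish $q_{k,1}\leq \frac{4}{19}$. Via the bound $q_{k,1}\leq \frac{N_1}{N_1+P}$ from (\ref{NrP}) and the monotonicity of $x\mapsto \frac{x}{x+P}$, this is equivalent to the single inequality $N_1\leq \frac{4}{15}P$, where $P$ lower-bounds $\pi(2^k)-\pi(2^{k-1})$ and $N_1$ upper-bounds $\sideset{}{'}\sum_{n\in M_{k,l}}\overline{\alpha}_D(n)$. I would split $k\geq 42$ into two regimes. For $k\geq 101$ I would bypass the $N_1/P$ machinery and invoke Theorem \ref{probability-estimate} with $l=8$ (matching the $M_{k,8}$ sampling of the algorithm): the right-hand side $k^2 4^{1.8-\sqrt{k}}\rho_8^{2\sqrt{k-1}-2}$, with $\rho_8=1+1/29$, is decreasing in $k$ for large $k$ because the factor $4^{-\sqrt{k}}$ dominates both the polynomial $k^2$ and the slowly growing $\rho_8^{\sqrt{k}}$. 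Hence it suffices to verify that this expression drops below $\frac{4}{19}$ at $k=101$ and to confirm the monotone decay for all larger $k$.

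For the finite window $42\leq k\leq 100$ I would use $P>0.71867\cdot 2^k/k$ from Proposition \ref{k-bit-prime-approx} together with the single-round upper bounds for $N_1$. For the upper part of the window I expect the coarser estimate of Lemma \ref{alphaboundii} to already deliver $N_1\leq \frac{4}{15}P$, whereas for $k$ close to $42$ I would switch to the sharper estimate of Lemma \ref{alphaone}, which replaces the crude $\lvert C_{m,D}\cap M_{k,l}\rvert\leq \lvert M_{k,l}\rvert$ by the counting of Proposition \ref{C_mD1}. In both cases the parameter $M$ with $3\leq M\leq 2\sqrt{k-1}-1$ is free; for each $k$ I would choose $M$ so as to minimize the resulting $N_1$ and then check $N_1\leq \frac{4}{15}P$ by direct computation, using $\lvert M_{k,l}\rvert\leq \lvert M_{k,2}\rvert\leq 2^{k-2.9}$ from Lemma \ref{inclexcl} and the value $\rho_8=1+1/29$ to control the constants.

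The step I expect to be the main obstacle is the lower end of this finite window, around $k=42$, where the margin between $N_1$ and $\frac{4}{15}P$ is smallest: there the coarse bound of Lemma \ref{alphaboundii} is insufficient, and one must genuinely exploit the finer counting of $C_{m,D}\cap M_{k,l}$ from Proposition \ref{C_mD1} together with a near-optimal choice of $M$ to force $N_1$ below $\frac{4}{15}P$. A secondary delicate point is confirming that $k=101$ is the correct crossover at which Theorem \ref{probability-estimate} first becomes strong enough, since with $l=8$ its value at $k=101$ is only marginally below $\frac{4}{19}$; the two regimes must therefore be arranged to meet cleanly at that boundary.
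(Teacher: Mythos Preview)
Your proposal is correct and follows essentially the same route as the paper: reduce via Lemma \ref{4/15-estimate} with $r=1$ to the single-round bound $q_{k,1}\leq 4/19$, dispatch $k\geq 101$ directly with Theorem \ref{probability-estimate}, and handle the finite window $42\leq k\leq 100$ via $q_{k,1}\leq N_1/(N_1+P)$ using Proposition \ref{k-bit-prime-approx} for $P$, Lemma \ref{alphaboundii} for the upper part of the window, and the sharper Lemma \ref{alphaone} (together with Lemma \ref{inclexcl}) near $k=42$, optimizing $M$ at each $k$. The paper's split is at $k=60$, confirming your expectation about where Lemma \ref{alphaboundii} ceases to suffice.
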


\begin{proof}
By letting $r=1$ in Lemma \ref{4/15-estimate}, it suffices to show that $q_{k,1}\leq 4/19$ for all $k\geq 42$ to prove the theorem. Theorem \ref{probability-estimate} immediately gives us that $q_{k,1} \leq \frac{4}{19}$ for each $k \geq 101,$ since $k^24^{1.8-\sqrt{k}}  \rho_l^{2\sqrt{k-1}-2}$ is a strictly decreasing function for $k\geq 10$ and $q_{101,1}\leq \frac{4}{19}.$ Thus, we get $q_{k,t}\leq \Big(\frac{4}{15}\Big)^t$ for $k\geq 101.$ By inequality (\ref{NrP}), we have
$q_{k,1} \leq \frac{N_1}{N_1+P}$, where $N_1$ is our upper bound for $\sum_{n\in M_{k,l}}^{'}\overline{\alpha}_D(n)$ and $P$ our lower bound for $\pi(2^k)-\pi(2^{k-1})$.  Proposition \ref{k-bit-prime-approx} serves as our value for $P$, and Lemma \ref{alphaboundii} as our first value for $N_1$. For each $k$, we take  $3 \leq M \leq 2\sqrt{k-1}-1$ to be the positive integer that minimizes our upper bound for $q_{k,1}$. Table \ref{tablefor1} displays the results of our computations, proving that $q_{k,1}<4/19 < 0.210527$ for $60 \leq k\leq100$.

\begin{table}
\centering
\begin{minipage}{.25\linewidth}
\centering
\begin{tabular*}{\linewidth}{@{\extracolsep{\fill}}  L | L  L }
k & M\textsubscript{opt} & u_k \\ \hline
    60 & 9 & 0.204541 \\ 
    61 & 9 & 0.196467 \\ 
    62 & 9 & 0.188917 \\
    63 & 9 & 0.181868 \\ 
	64 & 9 & 0.175296 \\ 
	65 & 9 & 0.169176 \\ 
	66 & 9 & 0.163486 \\ 
	67 & 9 & 0.158204 \\ 
	68 & 10 & 0.151309 \\  
	69 & 10 & 0.144807 \\ 
	70 & 10 & 0.138718 \\
	71 & 10 & 0.133020 \\
 	72 & 10 & 0.127693 \\
  	74 & 10 & 0.122717 \\
\end{tabular*}
\end{minipage}
\hspace{1cm}
\begin{minipage}{.25\linewidth}
\centering
\begin{tabular*}{\linewidth}{@{\extracolsep{\fill}}  L | L L  }
k & M\textsubscript{opt} & u_k \\ \hline
 75 & 10 & 0.113743  \\ 
 76 & 10 & 0.109708 \\ 
77 & 11 & 0.105817 \\
 78 & 11 & 0.101064 \\ 
79 & 11 & 0.096609 \\ 
 80 & 11 & 0.092435 \\ 
 81 & 11 & 0.088527 \\ 
 82 & 11 & 0.084870  \\ 
83 & 11 & 0.081449 \\  
 84 & 11 & 0.078251  \\ 
85 & 11 & 0.075262   \\
86 & 11 & 0.072471 \\
 87 & 11 & 0.069865 \\
 88 & 12 & 0.066918 
\end{tabular*}
\end{minipage}
\hspace{1cm}
\begin{minipage}{.25\linewidth}
\centering
\begin{tabular*}{\linewidth}{@{\extracolsep{\fill}}  L | L L  }
k & M\textsubscript{opt} & u_k \\ \hline
 89 & 12 & 0.063918 \\ 
 90 & 12 & 0.061105 \\ 
 91 & 12 & 0.058467 \\
92 & 12 & 0.055994 \\ 
 93 & 12 & 0.053676\\ 
94 & 12 & 0.0515047\\ 
 95 & 12 & 0.0494708\\ 
96 & 12 & 0.0475661 \\ 
 97 & 12 & 0.0457829\\  
98 & 12 & 0.044114 \\ 
 99 & 13 & 0.043620  \\
100 & 13 & 0.040361\\
 & &\\
  & &\\
\end{tabular*}
\end{minipage}
\vspace{2mm}
\caption{All $k<101$ such that $u_k<4/19$ using $M\textsubscript{opt}$ , where $u_k$ is the upper bound for $q_{k,1}$ and $M\textsubscript{opt}$ is the optimal value for $M$ that minimizes $u_k$, where we used the bound for $N_1$ given in Lemma \ref{alphaboundii} and $P=(0.71867)2^k/k$.}
\label{tablefor1}
\end{table}

For the remaining values of $k$, we use Lemma \ref{alphaone} for determining $N_1$. We bound $\lvert M_{k,l} \rvert$ using Lemma \ref{inclexcl}, and for each $k$ we choose $3 \leq M \leq 2\sqrt{k-1}-1$ to be the positive integer that minimizes our upper bound given by inequality (\ref{NrP}) for $q_{k,1}$. Table \ref{table90} shows the results of the computations, proving that $q_{k,1}\leq 4/19$ for $k\geq 42$, and thus concluding our proof.

\begin{table}[!htbp]
\centering
\begin{minipage}{.25\linewidth}
\centering
\begin{tabular*}{\linewidth}{@{\extracolsep{\fill}}  L | L  L }
	k & M\textsubscript{opt} & u_k \\ \hline
	42 & 8 & 0.199683 \\ 
	43 & 8 & 0.189917 \\ 
	44 & 8 & 0.181164  \\ 
	45 & 8 & 0.173352\\ 
	46 & 8 & 0.166410\\ 
    47 & 8 & 0.160268\\
\end{tabular*}
\end{minipage}
\hspace{1cm}
\begin{minipage}{.25\linewidth}
\centering
\begin{tabular*}{\linewidth}{@{\extracolsep{\fill}}  L | L L  }
k & M\textsubscript{opt} & u_k \\ \hline
    48 & 8 & 0.154860 \\
    49 & 9 & 0.147791  \\
    50 & 9 & 0.140038 \\

	51 & 9 & 0.133018 \\ 
	52 & 9 & 0.126677\\ 
	53 & 9 & 0.120964  \\

\end{tabular*}
\end{minipage}
\hspace{1cm}
\begin{minipage}{.25\linewidth}
\centering
\begin{tabular*}{\linewidth}{@{\extracolsep{\fill}}  L | L L  }
k & M\textsubscript{opt} & u_k \\ \hline
    54 & 9 & 0.115831 \\

	55 & 9 & 0.111229 \\ 
	56 & 9 & 0.107117  \\ 
    57 & 10 & 0.102671\\
    58 & 10 & 0.097171 \\
     59 & 10 & 0.092159
\end{tabular*}
\end{minipage}
\vspace{1.2mm}
\caption{All $k<60$ such that $u_k<4/19$ using $M\textsubscript{opt}$ , where $u_k$ is the upper bound for $q_{k,1}$ and $M\textsubscript{opt}$ is the optimal value for $M$ that minimizes $u_k$, where we used the bound for $N_1$ given by Lemma \ref{alphaone} and for $P=(0.71867)2^k/k$.}
 
\label{table90}
\vspace{-3mm}
\end{table}
\end{proof}
\newpage

\section{Main result}\label{main}

In this section, we show the main theorem for $k\geq 18$ and $t\geq 1$. In Section \ref{exact}, we compute $q_{k,t}$ exactly using an equation proven by Arnault, proving that the theorem holds for $k\geq 2$ altogether.

\begin{theorem}\label{frac2}
For all $t\geq 1$ and $k\geq 18$ we have
$$
q_{k,t}\leq \Big(\frac{4}{15}\Big)^t.
$$
\end{theorem}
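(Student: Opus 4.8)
The plan is to lean on Theorem \ref{frac}, which already settles the claim for $k\ge 42$, and to reduce the remaining work to the finitely many bit-lengths $18\le k\le 41$. For these, the single-round estimate $q_{k,1}\le 4/19$ used for $k\ge 42$ no longer suffices, because the relative scarcity of $k$-bit primes makes $q_{k,1}$ too large. Instead I would invoke the sufficient condition distilled right after Lemma \ref{4/15-estimate}: for a fixed $k$ it is enough to exhibit an integer $r=r(k)$ with
$$q_{k,r}\le\frac{1}{1+\big(\tfrac{15}{4}\big)^r}\quad\text{and}\quad q_{k,r'}\le\Big(\frac{4}{15}\Big)^{r'}\ \text{ for all } r'<r,$$
since Lemma \ref{4/15-estimate} then upgrades this to $q_{k,t}\le(4/15)^t$ for every $t\ge 1$. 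The extra freedom compared with Theorem \ref{frac} is precisely that $r$ may now be taken larger than $1$.

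To check these inequalities I would bound each $q_{k,r}$ through inequality (\ref{NrP}), namely $q_{k,r}\le N_r/(N_r+P)$, where $P=(0.71867)\,2^k/k$ is the lower bound on $\pi(2^k)-\pi(2^{k-1})$ from Proposition \ref{k-bit-prime-approx} (applicable since $k\ge 18\ge 8$), and $N_r$ is an upper bound for $\sideset{}{'}\sum_{n\in M_{k,l}}\overline{\alpha}_D(n)^r$. For general $r$ this upper bound is supplied by Lemma \ref{alpha-split-primes}, whose second sum I would control via Proposition \ref{C_mD1} for $|C_{m,D}\cap M_{k,l}|$ and whose prefactor $|M_{k,l}|$ I would bound by $|M_{k,2}|$ using Lemma \ref{inclexcl}. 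Exactly as in Lemmas \ref{alphaboundii} and \ref{alphaone}, for each pair $(k,r)$ the truncation parameter $M$ (subject to $3\le M\le 2\sqrt{k-1}-1$) is free and would be chosen to minimise $N_r$. Since the condition one actually verifies, $N_r/(N_r+P)\le 1/(1+(15/4)^r)$, is equivalent to the clean inequality $N_r\le P\,(4/15)^r$, the whole check reduces to comparing two explicitly computable quantities. I would then, for each $k$ in $18\le k\le 41$, increase $r$ until this inequality first holds, record that value as $r(k)$, and separately confirm $q_{k,r'}\le(4/15)^{r'}$ for the finitely many $r'<r(k)$, presenting the outcome in a table analogous to Tables \ref{tablefor1} and \ref{table90}.

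The delicate point is that the set of admissible $r$ is bounded on both sides. At the low end, $r$ must be large enough to overcome the small prime density $P$, which is what forces $r(k)>1$ for these small $k$. At the high end, $r$ cannot be taken arbitrarily large: the dominant contribution to $N_r$ comes from the stratum $C_{3,D}$ (recall $C_{1,D}=C_{2,D}=\emptyset$), yielding a term of order $(\rho_l^3/4)^r\,|C_{3,D}\cap M_{k,l}|$, and because $\rho_l^3\cdot\tfrac{15}{16}>1$ for the relevant $l$, the product $N_r\,(15/4)^r$ eventually increases in $r$. The argument succeeds only because $|C_{3,D}\cap M_{k,l}|=O(2^{k/2})$ is negligible against $P=\Theta(2^k/k)$, leaving a comfortable band of valid $r$ for every $k\ge 18$; verifying that this band is nonempty for all $24$ values of $k$ and that the intermediate bounds $q_{k,r'}\le(4/15)^{r'}$ hold throughout it is the crux of the computation. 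For $k\le 17$ these estimates become too weak to close the gap, and one must instead compute $q_{k,t}$ exactly, which is deferred to Section \ref{exact}.
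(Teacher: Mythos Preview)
Your plan has a genuine gap for the smaller bit-lengths. The sufficient condition you quote requires, for whatever $r(k)$ you eventually choose, that $q_{k,r'}\le(4/15)^{r'}$ for \emph{every} $r'<r(k)$---in particular for $r'=1$. But the tools you list (Lemma~\ref{alpha-split-primes} with Proposition~\ref{C_mD1} for $|C_{m,D}\cap M_{k,l}|$, Lemma~\ref{inclexcl} for $|M_{k,l}|$, and $P=(0.71867)2^k/k$) are not strong enough to verify even $q_{k,1}\le 4/15$ once $k$ drops below roughly $35$. For example at $k=30$ the optimal $M$ in Lemma~\ref{alphaone} already gives $N_1/(N_1+P)\approx 0.32>4/15$, and the bound deteriorates further as $k$ decreases; replacing $P$ and $|M_{k,l}|$ by exact values shaves this only to about $0.27$, still above $4/15$. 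Since the $r'=1$ case is the first rung of your ladder, no choice of $r(k)\ge 2$ can rescue the argument: you never get to use the extra freedom.

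This is exactly why the paper does not merely increase $r$ but introduces a new structural ingredient: the decomposition $M_{k,l}=M_{k,l,d_1}\cup M_{k,l,d_2}$ according to whether $n$ is square-free with some prime $p\mid n$ satisfying $\gcd(p-\epsilon_D(p),n-\epsilon_D(n))/(p-\epsilon_D(p))\ge 1/3$. On $M_{k,l,d_1}$ the Chinese-Remainder count needs only $d\le 3$ and gains the factor $1/\prod\tilde p_{l+i}$ (Theorem~\ref{Frac_Cm_Estimate_car}); on $M_{k,l,d_2}$ Lemma~\ref{alpha_D_2} forces $\alpha_D(n)\le 2^{-\Omega(n)-1}$, so the $j$-sum terminates at $m-2$ rather than $m$ (Theorem~\ref{Frac_Cm_Estimate_ncar}). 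These sharper $|C_{m,D}|$-bounds feed into Theorem~\ref{alphafort}, and only with them does the $r\in\{1,2\}$ computation (supplemented for $k\le 29$ by exact values of $|M_{k,l}|$ and $\pi(2^k)-\pi(2^{k-1})$) push the argument down to $k=18$. Your proposed verification will simply not close for $18\le k\lesssim 34$ without this refinement.
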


Theorem \ref{frac} only proves our claim for $k\geq42.$ As the last two terms in the second sum of Lemma \ref{alphaone} dominate our estimate, we divide the set $M_{k,l}$ into two disjoint sets, where the sum terminates earlier for one set, enabling us to lower the value of $k$. Let $M_{k,l,d_1} \subseteq M_{k,l}$ denote the subset consisting of square-free integers having a prime $p\mid n$ such that $\frac{(p-\epsilon_D(p),n-\epsilon_D(n))}{p-\epsilon_D(p)}\geq \frac{1}{3}$. Let $M_{k,l,d_2} \subseteq M_{k,l}$ denote the subset consisting of all integers for which for every prime $p\mid n$ we have $\frac{(p-\epsilon_D(p),n-\epsilon_D(n))}{p-\epsilon_D(p)}< \frac{1}{3}$, unified with the set of non-square free integers.

Let $X_{d_1}$ denote the event that a number chosen uniformly at random from $M_{k,l}$ is composite and lies in $M_{k,l,d_1}$, and $X_{d_2}$ that a number chosen uniformly at random from $M_{k,l}$ is composite and lies in $M_{k,l,d_2}$. Let $N_{r,d_1}$ be the upper bound for $\sum_{n\in M_{k,l,d_1}}^{'}\overline{\alpha}_D(n)^r$ and $N_{r,d_2}$ for $\sum_{n\in M_{k,l,d_2}}^{'}\overline{\alpha}_D(n)^r$ respectively. Then,

\begin{align*}
    q_{k,r} &= \frac{\mathbb{P}[X \cap Y_r]}{\mathbb{P}[Y_r]}=\frac{\mathbb{P}[(X_{d_1} \cup X_{d_2}) \cap Y_r]}{\mathbb{P}[Y_r]}
= \frac{\mathbb{P}[X_{d_1} \cap Y_r]+\mathbb{P}[X_{d_2} \cap Y_r]}{\mathbb{P}[Y_r]} \\
    &= \frac{\sideset{}{'}\sum\limits_{n\in M_{k,l,d_1}}\overline{\alpha}_D(n)^r+\sideset{}{'}\sum\limits_{n\in M_{k,l,d_2}}\overline{\alpha}_D(n)^r}{\sum_{n\in M_{k,l}}\overline{\alpha}_D(n)^r} \\
    &= \frac{\sideset{}{'}\sum\limits_{n\in M_{k,l,d_1}}\overline{\alpha}_D(n)^r+\sideset{}{'}\sum\limits_{n\in M_{k,l,d_2}}\overline{\alpha}_D(n)^r}{\sideset{}{'}\sum\limits_{n\in M_{k,l,d_1}}\overline{\alpha}_D(n)^r+\sideset{}{'}\sum\limits_{n\in M_{k,l,d_2}}\overline{\alpha}_D(n)^r + \pi(2^k)-\pi(2^{k-1})} \\
   & = \frac{N_{r,d_1}+N_{r,d_1}}{N_{r,d_1}+N_{r,d_2}+\pi(2^k)-\pi(2^{k-1})}.
\end{align*}

For ease of notation, let us define the following quantities:
\begin{definition}
Let $\omega(n)$ denote the number of distinct prime factors of $n$ and let $\Omega(n)$ denote the number of prime factors of $n$ counted with multiplicity. Thus, $\omega(n)=s$ and $\Omega(n)=\sum_{i=1}^s r_i$.    
\end{definition}

We use the following lemmas for proving the Theorems \ref{Frac_Cm_Estimate_car} and \ref{Frac_Cm_Estimate_ncar}:

\begin{lemma}\label{lemmaalphaomega}
Let $n=p_1^{r_1}\dots p_s^{r_s}>1$ be odd. Then
$$
\alpha_D(n)\leq 2^{1-\Omega(n)}\prod_{i=1}^s \Big(\frac{2}{p_i}\Big)^{r_i-1}\frac{\gcd(p_i-\epsilon_D(p_i),n-\epsilon_D(n))}{p_i-\epsilon_D(p_i)}.
$$
\end{lemma}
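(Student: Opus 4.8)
The plan is to unwind the definitions of $\alpha_D$ and $\varphi_D$, insert Arnault's counting formula (Theorem \ref{SL(D,n)}), and reduce everything to one elementary estimate together with a single $2$-adic valuation fact. We may assume $\gcd(n,2D)=1$, since otherwise $SL(D,n)=0$ and the bound is immediate. Write $n-\epsilon_D(n)=2^\kappa q$ and $p_i-\epsilon_D(p_i)=2^{k_i}q_i$ with $q,q_i$ odd and $k_1\le\dots\le k_s$ (so $s=\omega(n)$), and set $g_i=\gcd(q,q_i)$. Then $\gcd\bigl(p_i-\epsilon_D(p_i),\,n-\epsilon_D(n)\bigr)=2^{\min(k_i,\kappa)}g_i$ and $\varphi_D(n)=\prod_i p_i^{r_i-1}2^{k_i}q_i$. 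Substituting these into $\alpha_D(n)=SL(D,n)/\varphi_D(n)$ and into the right-hand side of the claim, and cancelling the common positive factor $\prod_i p_i^{r_i-1}2^{k_i}q_i$, shows the lemma is equivalent to
$$
SL(D,n)\ \le\ 2^{\,1-s}\prod_{i=1}^s 2^{\min(k_i,\kappa)}g_i ,
$$
where the powers of $2$ from $2^{1-\Omega(n)}$ and $\prod_i(2/p_i)^{r_i-1}$ combine to exactly $2^{1-s}$ and the powers of $p_i$ cancel.

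The key ingredient, and the only genuinely non-trivial point, is the inequality $\kappa\ge k_1$; equivalently, writing $v_2$ for the $2$-adic valuation, $v_2\bigl(n-\epsilon_D(n)\bigr)\ge\min_i v_2\bigl(p_i-\epsilon_D(p_i)\bigr)$. I would prove this by induction on $\Omega(n)$, using complete multiplicativity of the Jacobi symbol, $\epsilon_D(mP)=\epsilon_D(m)\epsilon_D(P)$, together with the identity
$$
mP-\epsilon_D(m)\epsilon_D(P)=m\bigl(P-\epsilon_D(P)\bigr)+\epsilon_D(P)\bigl(m-\epsilon_D(m)\bigr).
$$
Since $m$ is odd and $\epsilon_D(P)=\pm1$, the right-hand side has $2$-adic valuation at least $\min\bigl(v_2(P-\epsilon_D(P)),\,v_2(m-\epsilon_D(m))\bigr)$, which carries the induction through. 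Note also $k_1\ge1$, because $p_i-\epsilon_D(p_i)$ is even. Hence $\min(k_i,\kappa)\ge k_1$ for every $i$, so $\prod_i 2^{\min(k_i,\kappa)}g_i\ge 2^{k_1 s}\prod_i g_i$, and it suffices to prove $SL(D,n)\le 2^{\,1-s+k_1 s}\prod_i g_i$.

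To finish I would insert Arnault's formula, divide by $\prod_i g_i>0$, and use $\sum_{j=0}^{k_1-1}2^{js}=\tfrac{2^{k_1 s}-1}{2^s-1}$ to rewrite the target as
$$
\prod_{i=1}^s\frac{g_i-1}{g_i}+\frac{2^{k_1 s}-1}{2^s-1}\ \le\ 2^{\,1-s+k_1 s}.
$$
Each factor $\frac{g_i-1}{g_i}\le1$, so the left-hand side is at most $1+\tfrac{2^{k_1 s}-1}{2^s-1}$, and with $a=2^s\ge2$ and $b=k_1\ge1$ the claim reduces to the elementary inequality $2+a+\dots+a^{b-1}\le 2a^{b-1}$. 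This holds because $\sum_{j=0}^{b-2}a^j=\tfrac{a^{b-1}-1}{a-1}\le a^{b-1}-1$ for $a\ge2$ (with equality at $b=1$), so $1+\sum_{j=0}^{b-2}a^j\le a^{b-1}$ and adding $a^{b-1}$ to both sides gives the bound. The main obstacle is the valuation inequality $\kappa\ge k_1$; once that is in hand, the remainder is the routine geometric-series estimate above.
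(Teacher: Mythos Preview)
Your argument is correct. The reduction to $SL(D,n)\le 2^{1-s}\prod_i 2^{\min(k_i,\kappa)}g_i$ is sound, the valuation inequality $\kappa\ge k_1$ is established cleanly by the identity $mP-\epsilon_D(mP)=m(P-\epsilon_D(P))+\epsilon_D(P)(m-\epsilon_D(m))$ and induction on $\Omega(n)$, and the final elementary estimate $1+\tfrac{a^b-1}{a-1}\le 2a^{b-1}$ for $a=2^s\ge 2$, $b=k_1\ge 1$ is handled correctly (with equality at $b=1$).

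The paper does not actually prove the lemma from scratch: it simply cites Lemma~10 of \cite{Einseleav} and observes the algebraic identity $2^{1-s}=2^{1-\Omega(n)}\prod_{i=1}^s 2^{r_i-1}$. Your write-up is therefore a genuinely different contribution in that it is self-contained: you go back to Arnault's exact formula for $SL(D,n)$ and isolate the one non-formal ingredient, namely $v_2(n-\epsilon_D(n))\ge\min_i v_2(p_i-\epsilon_D(p_i))$. What the paper's route buys is brevity by delegation; what your route buys is an explicit identification of the crux (the $2$-adic inequality) and a proof readers can check without consulting \cite{Einseleav}. Both lead to the same quantitative bound and your geometric-series endgame is essentially the standard way one would finish once the valuation fact is in hand.
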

\begin{proof}
This follows from Lemma 10 in \cite{Einseleav}, where we use that $2^{1-s}=2^{1-\Omega(n)}\prod_{i=1}^s 2^{r_i-1}.$
\end{proof}

\begin{lemma}\label{alpha_D_2}
Let $n=p_1^{r_1}\dots p_s^{r_s} \in M_{k,l,d_2}$, meaning that either $\frac{\gcd(p_i-\epsilon_D(p_i),n-\epsilon_D(n))}{p_i-\epsilon_D(p_i)}<\frac{1}{3}$ for all $i$, or $r_i\geq 2$ for some $i$. Then,
$$\alpha_D(n) \leq 2^{-\Omega(n)-1}.$$ 
\end{lemma}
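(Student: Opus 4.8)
The plan is to feed the factorization estimate of Lemma \ref{lemmaalphaomega} into the two defining alternatives of $M_{k,l,d_2}$, exploiting that the target bound $2^{-\Omega(n)-1}=2^{1-\Omega(n)}/4$ differs from the prefactor $2^{1-\Omega(n)}$ appearing in Lemma \ref{lemmaalphaomega} only by the constant $1/4$. Abbreviating $g_i=\frac{\gcd(p_i-\epsilon_D(p_i),\,n-\epsilon_D(n))}{p_i-\epsilon_D(p_i)}$, so that $0<g_i\le 1$ for every $i$, it therefore suffices to establish the single inequality $\prod_{i=1}^s\big(\frac{2}{p_i}\big)^{r_i-1}g_i\le\frac14$. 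Two elementary facts will be used throughout: first, since $n$ is composite and coprime to $2D$, every $p_i$ is odd, so $\frac{2}{p_i}\le\frac23<1$ and each factor $\big(\frac{2}{p_i}\big)^{r_i-1}\le 1$; second, because $n\in M_{k,l}$ has survived trial division by the first $l=8$ odd primes, each prime factor satisfies $p_i\ge 29$.

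I would then split according to the two alternatives in the definition of $M_{k,l,d_2}$. In the first case all $g_i<\frac13$. Here one first rules out that $n$ is a prime power: if $n=p^r$, then $\epsilon_D(p^r)=\epsilon_D(p)^r$ and $p-\epsilon_D(p)\mid p^r-\epsilon_D(p)^r$, which forces $g_1=1$ and contradicts $g_1<\frac13$. Hence $s\ge 2$, and bounding each $\big(\frac{2}{p_i}\big)^{r_i-1}\le 1$ gives $\prod_{i=1}^s\big(\frac{2}{p_i}\big)^{r_i-1}g_i\le\prod_{i=1}^s g_i<\big(\frac13\big)^s\le\frac19<\frac14$, as needed.

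In the second case $n$ is not square-free, so $\sum_{i=1}^s(r_i-1)=\Omega(n)-s\ge 1$. Now I discard every $g_i$ using $g_i\le 1$ and estimate $\prod_{i=1}^s\big(\frac{2}{p_i}\big)^{r_i-1}\le\big(\frac{2}{p_{\min}}\big)^{\Omega(n)-s}\le\frac{2}{p_{\min}}$, where $p_{\min}$ denotes the least prime factor of $n$ and the final step uses $\frac{2}{p_{\min}}<1$ together with $\Omega(n)-s\ge 1$. Since $p_{\min}\ge 29$, this is at most $\frac{2}{29}<\frac14$, closing this case and hence the proof.

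The routine estimates are harmless; the genuine subtlety lies in the non-square-free case, where the crude bound $g_i\le 1$ is all that is available and the inequality only closes because the prime factors of $n$ are not small. This is exactly where the trial-division step of Algorithm \ref{algorithmA} enters: without excluding small prime factors the factor $\frac{2}{p_{\min}}$ could exceed $\frac14$ (a repeated factor of $3$ or $5$ would already break the bound), so the lemma must be read as relying on $p_{\min}$ being large enough, which the hypothesis $n\in M_{k,l}$ with $l=8$ guarantees.
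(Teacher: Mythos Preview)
Your proof is correct and follows essentially the same strategy as the paper: apply Lemma~\ref{lemmaalphaomega} and show the product $\prod_i(2/p_i)^{r_i-1}g_i$ is at most $1/4$ by splitting into the two defining alternatives of $M_{k,l,d_2}$. The one substantive variation is in the first case: the paper notes that each $g_i$ is the reciprocal of an integer, so $g_i<\tfrac13$ forces $g_i\le\tfrac14$ and a single factor already suffices, whereas you instead rule out prime powers to secure $s\ge 2$ and use $(1/3)^s\le 1/9$; both arguments are elementary and either closes the case.
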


\begin{proof}
Let $n \in M_{k,l,d_2}$. Let us first look at the case where $n$ is square-free. Since $n\in M_{k,l,d_2}$, we must have that $\frac{\gcd(p_i-\epsilon_D(p_i),n-\epsilon_D(n))}{p_i-\epsilon_D(p_i)}< \frac{1}{3}$, and hence $\frac{\gcd(p_i-\epsilon_D(p_i),n-\epsilon_D(n))}{p_i-\epsilon_D(p_i)}\leq \frac{1}{4}$.  Lemma \ref{lemmaalphaomega} directly yields $\alpha_D(n)<\frac{2^{1-\Omega(n)}}{4}=2^{-\Omega(n)-1}$. Now let  us look at the case where $n$ is not square-free, meaning that $r_i\geq 2$ for some $i$. Since $p_1 \geq \tilde{p}_{l+1}$, Lemma \ref{lemmaalphaomega} directly yields that $\alpha_D(n) \leq \frac{2^{1-\Omega(n)}}{\tilde{p}_{l+1}}\leq 2^{-\Omega(n)-1}.$
\end{proof}

The following two lemmas will be used in proving the Theorems \ref{Frac_Cm_Estimate_car} and \ref{Frac_Cm_Estimate_ncar}:

\begin{lemma}[Einsele, Paterson \cite{Einseleav}]\label{part-frac}
If $t\in \mathbb{R}$ with $t \geq 1$, then 
$$ \sum_{n=\lfloor t \rfloor +1 }^\infty \frac{1}{n(n-1)}=\frac{1}{ \lfloor t \rfloor} <\frac{2}{t}.$$
\end{lemma}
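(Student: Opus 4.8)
The plan is to split the statement into two logically independent parts: an exact evaluation of the series by telescoping, and then the elementary estimate $1/\lfloor t\rfloor < 2/t$. The whole argument is short, so the main task is simply to organize these two steps cleanly and flag the one place that needs a moment of care.

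For the evaluation, I would invoke the partial-fraction identity $\frac{1}{n(n-1)} = \frac{1}{n-1} - \frac{1}{n}$, which is valid for every $n\geq 2$. Writing $m = \lfloor t\rfloor$, the hypothesis $t\geq 1$ forces $m\geq 1$, so every summation index satisfies $n\geq m+1\geq 2$ and no zero denominator appears. The finite partial sum from $n=m+1$ up to $n=N$ then collapses telescopically to $\frac{1}{m} - \frac{1}{N}$, and letting $N\to\infty$ gives $\sum_{n=m+1}^{\infty}\frac{1}{n(n-1)} = \frac{1}{m} = \frac{1}{\lfloor t\rfloor}$, which is the claimed equality.

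For the inequality, I would use the two defining bounds of the floor function, $\lfloor t\rfloor \leq t < \lfloor t\rfloor + 1$, together with $\lfloor t\rfloor\geq 1$. The chain $t < \lfloor t\rfloor + 1 \leq 2\lfloor t\rfloor$ holds because the second step is equivalent to $1\leq \lfloor t\rfloor$; since the first step is strict, I conclude $t < 2\lfloor t\rfloor$, and dividing through by the positive quantities $t$ and $\lfloor t\rfloor$ yields $\frac{1}{\lfloor t\rfloor} < \frac{2}{t}$. Combining this with the telescoped value completes the proof. There is no genuine obstacle; the only point demanding attention is verifying $\lfloor t\rfloor\geq 1$, which simultaneously guarantees that the telescoping is well defined and that the final inequality remains strict at the boundary case $\lfloor t\rfloor = 1$ (where $\lfloor t\rfloor + 1 = 2\lfloor t\rfloor$, so strictness must come from $t < \lfloor t\rfloor + 1$). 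Both are secured by the hypothesis $t\geq 1$.
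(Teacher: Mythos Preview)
Your argument is correct in both parts: the telescoping evaluation and the floor-function estimate are handled cleanly, and you correctly identify that strictness in $t < 2\lfloor t\rfloor$ comes from $t < \lfloor t\rfloor + 1$ even when $\lfloor t\rfloor = 1$. The paper itself does not supply a proof of this lemma---it is quoted from \cite{Einseleav} and stated without argument---so there is nothing to compare against; your proof is exactly the standard one that the cited source presumably contains.
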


\begin{lemma}[Damgård et al. \cite{DamEtAl}]\label{boundn^2}
If $t$ is a real number with $t\geq 1$, then
$$
\sum_{n=\lfloor t \rfloor +1}^\infty \frac{1}{n^2}< \frac{\pi^2-6}{3t}.
$$
\end{lemma}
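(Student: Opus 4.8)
The plan is to prove this tail estimate by entirely elementary means, reducing it to a monotonicity statement about a single auxiliary sequence together with one application of the Basel identity $\sum_{n=1}^\infty \frac{1}{n^2}=\frac{\pi^2}{6}$. Write $m=\lfloor t\rfloor$, so that $m\geq 1$ and $m\leq t< m+1$, and set $R(m)=\sum_{n=m+1}^\infty \frac{1}{n^2}$, the quantity we must bound. Since the right-hand side $\frac{\pi^2-6}{3t}$ is strictly decreasing in $t$ and $t<m+1$, we have $\frac{\pi^2-6}{3t}>\frac{\pi^2-6}{3(m+1)}$; hence it suffices to prove the $t$-free inequality $R(m)\leq \frac{\pi^2-6}{3(m+1)}$, equivalently $(m+1)R(m)\leq \frac{\pi^2-6}{3}$. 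The strictness of the original claim then comes for free from the strict bound $t<m+1$, even in the extremal case where $(m+1)R(m)=\frac{\pi^2-6}{3}$.

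First I would introduce $f(m)=(m+1)R(m)$ and compute the base value $f(1)=2\big(\frac{\pi^2}{6}-1\big)=\frac{\pi^2-6}{3}$ from the Basel sum. It then remains to show $f(m)\leq f(1)$ for every $m\geq 1$, which I would obtain from monotonicity. Using $R(m)=R(m+1)+\frac{1}{(m+1)^2}$, a short computation gives
$$
f(m)-f(m+1)=\frac{1}{m+1}-R(m+1)=\frac{1}{m+1}-\sum_{n=m+2}^\infty\frac{1}{n^2}.
$$
The key step is the telescoping comparison $\sum_{n=m+2}^\infty\frac{1}{n^2}<\sum_{n=m+2}^\infty\frac{1}{n(n-1)}=\frac{1}{m+1}$, which forces $f(m)-f(m+1)>0$. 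Thus $f$ is strictly decreasing, so $f(m)\leq f(1)=\frac{\pi^2-6}{3}$ for all $m\geq 1$, completing the reduction.

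I do not expect a serious obstacle: the argument is elementary once one spots that the natural quantity to control is $(m+1)R(m)$ rather than $R(m)$ itself, and that its monotonicity is governed by the very same telescoping bound already exploited in Lemma \ref{part-frac}. The only point needing a little care is tracking the strictness of the final inequality; this is handled by noting that $\lfloor t\rfloor=m$ forces $t<m+1$ strictly, which is exactly what separates $\frac{\pi^2-6}{3t}$ from the value $\frac{\pi^2-6}{3(m+1)}$ even in the tight case $m=1$.
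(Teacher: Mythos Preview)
Your argument is correct. Setting $m=\lfloor t\rfloor$, you reduce the claim to $(m+1)R(m)\le\frac{\pi^2-6}{3}$, verify equality at $m=1$ via the Basel sum, and then show $f(m)=(m+1)R(m)$ is strictly decreasing using $f(m)-f(m+1)=\frac{1}{m+1}-R(m+1)>0$ from the telescoping bound $R(m+1)<\frac{1}{m+1}$. The strict inequality in the statement is then recovered from $t<m+1$, which is indeed always strict by the definition of the floor function, so even the edge case $m=1$ (where $f(1)=\frac{\pi^2-6}{3}$ exactly) causes no trouble.

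As for comparison with the paper: the present paper does not prove this lemma at all; it is simply quoted from Damg{\aa}rd--Landrock--Pomerance \cite{DamEtAl}. Your proof is in fact the same one given there, so there is no genuine methodological difference to discuss.
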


By treating the two disjoint sets $M_{k,l,d_1}$ and $M_{k,l,d_2}$ differently in our analysis, we get a tighter estimates than Proposition \ref{C_mD1}.
\begin{theorem}\label{Frac_Cm_Estimate_car}
    If $m,k$ are positive integers with $m+1 \leq 2 \sqrt{k-1}$, then
    $$
    \lvert C_{m,D} \cap M_{k,l,d_1} \rvert \leq 2^{k}\sum_{j=2}^m \frac{3}{\prod_{i=1}^{j-1}\tilde{p}_{l+i}}\frac{1}{2^{\frac{k-1}{j}}+1}.$$
\end{theorem}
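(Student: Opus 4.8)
The plan is to mimic the counting argument behind Proposition \ref{C_mD1} (i.e.\ the proof of Theorem 13 in \cite{Einseleav}), but to replace the coarse configuration count by the much stronger information carried by the defining property of $M_{k,l,d_1}$. Every $n$ under consideration is square-free, coprime to $2D$, has all prime factors at least $\tilde{p}_{l+1}$, and satisfies $\alpha_D(n)>2^{-m}$. Writing $j=\omega(n)=\Omega(n)$, Lemma \ref{lemmaalphaomega} gives $\alpha_D(n)\le 2^{1-j}$, so $2^{-m}<2^{1-j}$ forces $2\le j\le m$; this is exactly the range of summation in the statement, and it is the only place where the hypothesis $\alpha_D(n)>2^{-m}$ is used. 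The hypothesis $m+1\le 2\sqrt{k-1}$ is carried along to guarantee, as in Proposition \ref{C_mD1}, that the counting intervals below are nonempty.

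First I would fix $j$ and bound the contribution of those $n$ with exactly $j$ prime factors. By definition of $M_{k,l,d_1}$ there is a prime $p\mid n$ with
$$c:=\gcd\bigl(p-\epsilon_D(p),\,n-\epsilon_D(n)\bigr)\ge \tfrac13\bigl(p-\epsilon_D(p)\bigr),$$
so that $\ell:=(p-\epsilon_D(p))/c$ is an integer in $\{2,3\}$ and $c\mid n-\epsilon_D(n)$. This is the key structural gain: whereas in Proposition \ref{C_mD1} one must sum over all possible values $1,\dots,2^{m+1-j}-1$ of the product $\prod_i\ell_i$, here a single prime already comes with a divisor $c$ of $n-\epsilon_D(n)$ comparable to $p$ itself, and the only remaining freedom in $\ell$ is the two values $\{2,3\}$. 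It is this collapse that removes the $m$-dependence and ultimately produces the constant $3$.

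Next I would carry out the enumeration by separating the distinguished prime $p$, equipped with its strong congruence, from the remaining $j-1$ primes. Ordering the prime factors, the largest is at least $n^{1/j}>2^{(k-1)/j}$, which yields the density factor $1/(2^{(k-1)/j}+1)$ after summing a geometric/telescoping tail over its admissible sizes (via Lemma \ref{part-frac}, respectively Lemma \ref{boundn^2}, as in \cite{Einseleav}). The remaining $j-1$ primes are distinct and each at least $\tilde{p}_{l+1}$, so their product is bounded below by $\prod_{i=1}^{j-1}\tilde{p}_{l+i}$, which contributes the factor $\prod_{i=1}^{j-1}\tilde{p}_{l+i}^{-1}$. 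The congruence $n\equiv\epsilon_D(n)\pmod c$ with $c\ge (p-\epsilon_D(p))/3$ confines $n$ to few residue classes; combining the three pieces and summing over $2\le j\le m$ gives the claimed bound $2^{k}\sum_{j=2}^m \frac{3}{\prod_{i=1}^{j-1}\tilde{p}_{l+i}}\frac{1}{2^{(k-1)/j}+1}$.

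The main obstacle is the bookkeeping in this last step: one must distribute the count correctly between the distinguished prime $p$ (whose associated divisor $c\mid n-\epsilon_D(n)$ supplies the arithmetic-progression constraint) and the other $j-1$ primes (which supply the $\prod_{i=1}^{j-1}\tilde{p}_{l+i}^{-1}$ density), while verifying that the residue-class and interval counting loses only the constant $3$ and the clean factor $1/(2^{(k-1)/j}+1)$, rather than the weaker $(2^{m+1-j}-1)/(2^{(k-1)/j}-1)$ of Proposition \ref{C_mD1}. Handling the two signs $\epsilon_D(\cdot)=\pm1$ and checking that $n-\epsilon_D(n)\neq 0$, so that $c$ is a genuine proper divisor, are the routine edge cases still requiring care.
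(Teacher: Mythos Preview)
Your plan tracks the paper's argument closely: stratify by $j=\Omega(n)\in\{2,\dots,m\}$ via Lemma~\ref{lemmaalphaomega}, exploit the congruence $n\equiv\epsilon_D(n)\pmod{(p-\epsilon_D(p))/d}$ coming from the distinguished prime with $d\le3$, and extract the factor $1/(2^{(k-1)/j}+1)$ by summing over the large prime with Lemmas~\ref{part-frac} and~\ref{boundn^2}. Two points deserve comment.

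First, a small slip: your $\ell=(p-\epsilon_D(p))/c$ ranges over $\{1,2,3\}$, not $\{2,3\}$; nothing prevents $c=p-\epsilon_D(p)$ (i.e.\ $(p-\epsilon_D(p))\mid(n-\epsilon_D(n))$), and the paper accordingly sums over $d=1,2,3$.

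Second, and this is precisely the bookkeeping you single out as the main obstacle: the paper does \emph{not} keep the distinguished prime and the largest prime separate. It lets $p$ be the \emph{largest} prime factor of $n$ and works with $d=d_D(p,n)$ restricted to $\{1,2,3\}$ for this particular $p$. Collapsing the two roles onto one prime is what makes the count assemble: the single $p$ furnishes both the lower bound $p>2^{(k-1)/j}$ (from $n\le p^j$), which drives the tail sum via Lemmas~\ref{part-frac}/\ref{boundn^2}, and the upper bound $p<2^k/\prod_{i=1}^{j-1}\tilde p_{l+i}$ (from square-freeness and $n/p\ge\prod_{i=1}^{j-1}\tilde p_{l+i}$), which is where the factor $\prod_{i=1}^{j-1}\tilde p_{l+i}^{-1}$ enters the CRT estimate. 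Your alternative---assigning the congruence to one prime and reading $\prod^{-1}$ off as a ``density'' of the cofactor---does not produce that factor by itself, since a lower bound on the cofactor is not a density statement. So the resolution of the obstacle you anticipate is exactly this merge: take the largest prime and attach the constraint $d\in\{1,2,3\}$ to it.
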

\begin{proof}
For $n \in C_{m,D} \cap M_{k,l}$, we have by Lemma \ref{lemmaalphaomega} that $2^m>1/\alpha_D(n)>2^{\Omega(n)-1}$, and thus $m+1> \Omega(n)$.  $\Omega(n)\in \mathbb{N}$ implies $\Omega(n)\leq m.$ 
Let $N_D(m,k,j)=\{n \in C_{m,D}\cap M_{k,l,d_1}\mid \Omega(n)=j \}.$ Hence,
\begin{align}\label{CMdMklCar}
\lvert C_{m,D} \cap M_{k,l,d_1}\rvert= \sum_{j=2}^m \lvert N_D(m,k,j) \rvert.
\end{align}
Let $n \in N_D(m,k,j)$ with $2 \leq j \leq m$, and let $p$ be the largest prime factor of $n$. Now, $2^{k-1} < n \leq p^{j}$ implies that $p>2^{(k-1)/j}$. Given $p$ and $d$, where $p$ is a prime with the property that $p>2^{(k-1)/j}$ and $d$ is such that $d \mid p-\epsilon_D(p)$, we want to get an upper bound for the number of $n\in N_D(m,k,j)$ with the largest prime factor $p$ such that $d_D(p,n)=d$. Let $S_{D,k,d, p}=\{n \in M_{k,l, d_1} : p \mid n, d= \frac{p-\epsilon_D(p)} {(p-\epsilon_D(p), n-\epsilon_D(n))}, n \textnormal{ composite}\}$ for $d=1,2,3$. Since $n\in M_{k,l, d_1}$, we know that $n$ is a product of distinct primes $n=p_1\dots p_{j-1}p$, with $p_i \geq \tilde{p}_{l+i}$ for all $i=1, \dots, j-1.$ Thus, $p=\frac{n}{p_1\dots p_{j-1}} \leq \frac{n}{\prod_{i=1}^{j-1} \tilde{p}_{l+i}} < \frac{2^k}{\prod_{i=1}^{j-1} \tilde{p}_{l+i}}.$ The size of $S_{D,k,d,p}$ is at most the number of solutions of the system 
$$
n \equiv 0 \mod p, n \equiv \epsilon_D(n) \mod \frac{p-\epsilon_D(p)}{d}, p < \frac{2^k}{\prod_{i=1}^{j-1} \tilde{p}_{l+i}},
$$
where $d=1,2,3.$
Via the Chinese Remainder Theorem this, set has fewer than $\frac{2^kd}{\prod_{i=1}^{j-1} \tilde{p}_{l+i}}\frac{1}{p(p-\epsilon_D(p))}$ elements. With $p$ and $n$ odd, $(p-\epsilon_D(p))/d(p-\epsilon_D(p), n-\epsilon_D(n))$ must be even. So,
\begin{align*}
\lvert N_D(m,k,j) \rvert &\leq  \sum\limits_{d=1,2,3}
\sum\limits_{\substack{p>2^{(k-1)/j} \\ \\ p-\epsilon_D(p)\in 2\mathbb{Z}}} \frac{2^kd}{\prod_{i=1}^{j-1} \tilde{p}_{l+i}}\frac{1}{p(p-\epsilon_D(p))} \\
&= \frac{2^k}{\prod_{i=1}^{j-1} \tilde{p}_{l+i}} \sum\limits_{d=1,2,3} \sum_{2u > 2^{(k-1)/j}-\epsilon_D(p)} \frac{d}{(2u+\epsilon_D(p))2u}.
\end{align*}
Now let us first look at the case $\epsilon_D(p)=1$. By Lemma \ref{boundn^2}, we have
\begin{align*}
\frac{1}{4d}\sum_{u > \frac{2^{\frac{k-1}{j}}-\epsilon_D(p)}{2d}} \frac{1}{(u+\frac{\epsilon_D(p)}{2d})u} &\leq \frac{1}{4d}\sum_{u > \frac{2^{\frac{k-1}{j}}-1}{2d}}\frac{1}{u^2} \leq \frac{1}{4d}\frac{\pi^2-6}{3 \frac{2^{\frac{k-1}{j}}-1}{2d}}=\frac{\pi^2-6}{6}\frac{1}{2^{\frac{k-1}{j}}-1}.
\end{align*}
Now let us look at the case $\epsilon_D(p)=-1$. By Lemma \ref{part-frac} we have
\begin{align*}
\frac{1}{4d}\sum_{u > \frac{2^{\frac{k-1}{j}}-\epsilon_D(p)}{2d}} \frac{1}{(u+\frac{\epsilon_D(p)}{2d})u} &= \frac{1}{4d}\sum_{u > \frac{2^{\frac{k-1}{j}}+1}{2d}} \frac{1}{(u-1)u} \leq \frac{1}{4d}\frac{2}{\frac{2^{\frac{k-1}{j}}+1}{2d}}= \frac{1}{2^{\frac{k-1}{j}}+1}.
\end{align*}
For $k, j\in \mathbb{N}$ with $j\leq \frac{k-1}{\log_2(-\frac{\pi^2}{\pi^2-12})}$, we have that $\frac{\pi^2-6}{6}\frac{1}{2^{\frac{k-1}{j}}-1} \leq \frac{1}{2^{\frac{k-1}{j}}+1}$. With $j \leq m-2$ and $m+1 \leq 2 \sqrt{k-1}$, this is naturally satisfied.
Hence, we get
\begin{align*}
    \lvert N_D(m,k,j) \rvert \leq  \frac{2^k}{\prod_{i=1}^{j-1}} \sum\limits_{d=1,2,3}\frac{1}{2^{\frac{k-1}{j}}+1}\leq \frac{2^k}{\prod_{i=1}^{j-1}} \frac{ 3 }{2^{\frac{k-1}{j}}+1}.
\end{align*}
 Using this estimate in (\ref{CMdMklCar}) proves the theorem.
\end{proof}

\begin{theorem}\label{Frac_Cm_Estimate_ncar}
If $m,k$ are positive integers with $m+1 \leq 2 \sqrt{k-1}$, then
    $$ \lvert C_{m,D} \cap M_{k,l,d_2} \rvert \leq 2^{k} \sum_{j=2}^{ m-2} \frac{2^{m+1-j} -4 }{2^{\frac{k-1}{j}}+1}.
    \vspace{-1mm}
    $$
\end{theorem}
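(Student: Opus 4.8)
The plan is to mirror the proof of Theorem \ref{Frac_Cm_Estimate_car}, exploiting that membership in $M_{k,l,d_2}$ simultaneously shortens the admissible range of $\Omega(n)$ and forces the largest prime factor into a congruence class with a large parameter $d$. The two structural gains over the generic count in Proposition \ref{C_mD1} are precisely the reason the sum here terminates at $m-2$ and carries the numerator $2^{m+1-j}-4$.

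First I would use Lemma \ref{alpha_D_2} to fix the range of the outer sum. For $n \in C_{m,D}\cap M_{k,l,d_2}$ we have both $\alpha_D(n) > 2^{-m}$ and $\alpha_D(n)\le 2^{-\Omega(n)-1}$, so $2^{-m} < 2^{-\Omega(n)-1}$, which forces $\Omega(n)\le m-2$. Setting $N_D(m,k,j)=\{n\in C_{m,D}\cap M_{k,l,d_2} : \Omega(n)=j\}$ gives the decomposition
\[
\lvert C_{m,D}\cap M_{k,l,d_2}\rvert = \sum_{j=2}^{m-2}\lvert N_D(m,k,j)\rvert,
\]
explaining the upper limit $m-2$ in contrast to Proposition \ref{C_mD1}.

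Next, for fixed $j$ I would count $N_D(m,k,j)$ through its largest prime factor $p$ and the parameter $d=(p-\epsilon_D(p))/\gcd(p-\epsilon_D(p),n-\epsilon_D(n))$, exactly as before. Since $2^{k-1}<n\le p^{j}$ we get $p>2^{(k-1)/j}$. Two bounds on $d$ drive the estimate. The upper bound follows by isolating the largest-prime factor in Lemma \ref{lemmaalphaomega}: from $2^{-m}<\alpha_D(n)\le 2^{1-j}/d$ we obtain $d\le 2^{m+1-j}-1$. The lower bound is exactly where the definition of $M_{k,l,d_2}$ enters: for square-free $n$ every prime satisfies $\gcd(p-\epsilon_D(p),n-\epsilon_D(n))/(p-\epsilon_D(p))<1/3$, so $d>3$, i.e. $d\ge 4$. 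Thus $d$ ranges over $\{4,\dots,2^{m+1-j}-1\}$, a set of $2^{m+1-j}-4$ values. For each admissible $(p,d)$ I would bound the count by the Chinese Remainder Theorem estimate $2^{k}d/\bigl(p(p-\epsilon_D(p))\bigr)$ for integers $n<2^{k}$ with $n\equiv 0 \bmod p$ and $n\equiv \epsilon_D(n)\bmod (p-\epsilon_D(p))/d$. Summing over $p>2^{(k-1)/j}$ via Lemma \ref{part-frac} (case $\epsilon_D(p)=-1$) and Lemma \ref{boundn^2} (case $\epsilon_D(p)=+1$) cancels the factor $d$ and bounds each $d$-term by $1/(2^{(k-1)/j}+1)$; the inequality $\frac{\pi^2-6}{6}\,\frac{1}{2^{(k-1)/j}-1}\le \frac{1}{2^{(k-1)/j}+1}$ needed to merge the two sign cases is again guaranteed by $j\le m-2$ and $m+1\le 2\sqrt{k-1}$. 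Summing over $d$ and then over $2\le j\le m-2$ yields the stated bound.

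The main obstacle is the non-square-free part of $M_{k,l,d_2}$: for such $n$ the largest prime may lie in a class with $d\le 3$, which naively falls outside the range $\{4,\dots\}$ exploited above. I would resolve this by absorbing the repeated prime into the effective parameter: if some $p_i$ occurs with exponent $r_i\ge 2$, Lemma \ref{lemmaalphaomega} contributes a factor $(2/p_i)^{r_i-1}\le 2/\tilde{p}_{l+1}$, which both yields the sharpened inequality $\alpha_D(n)\le 2^{-\Omega(n)-1}$ recorded in Lemma \ref{alpha_D_2} and imposes the extra divisibility constraint $p_i^{2}\mid n$ in the Chinese Remainder Theorem count. Since $\tilde{p}_{l+1}>3$, this extra factor more than compensates for a small value of $d$, so every non-square-free contribution is dominated by a term already present in the range $\{4,\dots,2^{m+1-j}-1\}$. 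Checking this domination carefully — i.e. verifying that the repeated-prime contributions do not overflow the $2^{m+1-j}-4$ budget — is the delicate bookkeeping step of the argument.
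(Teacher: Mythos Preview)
Your approach is essentially identical to the paper's: bound $\Omega(n)\le m-2$ via Lemma~\ref{alpha_D_2}, stratify by $j=\Omega(n)$, extract the largest prime $p>2^{(k-1)/j}$, use Lemma~\ref{lemmaalphaomega} to force $d_D(p,n)<2^{m+1-j}$, use membership in $M_{k,l,d_2}$ to force $d_D(p,n)>3$, and then apply the same CRT count together with Lemmas~\ref{part-frac} and~\ref{boundn^2} to land on $(2^{m+1-j}-4)/(2^{(k-1)/j}+1)$.

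The one place where you diverge from the paper is the non-square-free case, which you flag as the ``main obstacle'' and for which you sketch an absorption argument. In the paper's proof this case is not treated separately at all: the line ``That $n\in M_{k,l,d_2}$ implies $d_D(p,n)>3$'' is asserted for every $n$ in the set, with no distinction made for non-square-free $n$ and no compensating bookkeeping of the kind you outline. So the delicate verification you anticipate is not carried out in the paper; the published argument simply proceeds as if the lower bound $d>3$ held uniformly. Your instinct that this point deserves care is sound, but be aware that the paper's own proof does not supply it.
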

\begin{proof}
By the same argument as in the proof of Theorem \ref{Frac_Cm_Estimate_car}, we get by applying Lemma \ref{alpha_D_2} that $\Omega(n)\leq m-2$ for $n \in C_{m,D} \cap M_{k,l,d_2}$. Now  let $N_D(m,k,j)=\{n \in C_{m,D} \cap M_{k,l} \mid \Omega(n)=j\}$. We see that 
\begin{align}\label{CmMkl}
\vspace{-1mm}
\lvert C_{m,D} \cap M_{k,l} \rvert = \sum_{j=2}^{m-2} \lvert N_D(m,k,j) \rvert.
\vspace{-1mm}
\end{align}
Let $n\in N_D(m,k,j)$ with $2 \leq j \leq m-2$, and let $p$ be the largest prime factor of $n$. Now $2^{k-1}< n\leq p^j$ implies that $p>2^{(k-1)/j}$. Let $d_D(p,n)=(p-\epsilon_D(n))/(p-\epsilon_D(n), n-\epsilon_D(n))$. Lemma \ref{lemmaalphaomega} implies that $2^m>1/\alpha_D(n)\geq 2^{\Omega(n)-1}d_D(p,n)=2^{j-1}d_D(p,n)$, so we must have $d_D(p,n)<2^{m+1-j}.$ That $n\in M_{k,l,d_2}$ implies  $d_D(p,n)> 3.$ Given $p,d$, where $p$ is a prime with the property that $p>2^{(k-1)/j}$ and $d$ is such that  $d \mid p-\epsilon_D(p)$ and $d<2^{m+1-j}$, we want to get an upper bound for the number of $n\in N_D(m,k,j)$ with largest prime factor $p$ such that $d_D(p,n)=d$. Let  $S_{D,k,d,p}=\{n\in M_{k,l} : p \mid n, d=\frac{p-\epsilon_D(p)}{(p-\epsilon_D(p), n-\epsilon_D(n))}, n \text{ composite}\}.$
The size of the set $S_{D,k,d,p}$ is at most the number of solutions of the system 
\begin{eqnarray*}
    n \equiv 0 \bmod p, &n\equiv \epsilon_D(n) \bmod \frac{p-\epsilon_D(p)}{d}, & p<n<2^k.
\end{eqnarray*}
\vspace{-1mm}
Via the Chinese Remainder Theorem, this is less than $\frac{2^{k}d}{p(p-\epsilon_D(p))}$. 

If $S_{D,k,d,p} \neq \emptyset$, then there exists an $n \in S_{D,k,d,p}$ with ${(n-\epsilon_D(n),p-\epsilon_D(p))=(p-\epsilon_D(p))/d}$. Again $(p-\epsilon_D(p))/d=(p-\epsilon_D(p), n-\epsilon_D(n))$ must be even.
Hence,
\begin{align*}
    \lvert N_D(m,k,j) \rvert & \hspace{-0.5mm} \leq \hspace{-2.2mm} \sum_{p>2^{(k-1)/j}} \hspace{-2mm} \sum\limits_{\substack{d \mid p-\epsilon_D(p) \\ 3 < d<2^{m+1-j} \\ (p-\epsilon_D(p))/d \in 2\mathbb{Z}} } \hspace{-2mm}\frac{2^k d}{p(p-\epsilon_D(p))} \hspace{-0.5mm} =2^k \hspace{-2mm}\sum_{3 < d<2^{m+1-j}} \hspace{-2.2mm} \sum\limits_{\substack{p>2^{(k-1)/j} \\ d \mid p-\epsilon_D(p) \\ (p-\epsilon_D(p))/d \in 2\mathbb{Z}}}\hspace{-2.4mm}  \frac{d}{p(p-\epsilon_D(p))}.
\end{align*}

\vspace{-1mm}
Now, for the inner sum we have, 
\begin{align*}
\sum\limits_{\substack{p>2^{(k-1)/j} \\ d \mid p-\epsilon_D(p) \\ \frac{p-\epsilon_D(p)}{d} \in 2\mathbb{Z}}} \frac{d}{p(p-\epsilon_D(p))} 
<\hspace{-4mm}\sum_{2ud>2^{\frac{k-1}{j}}-\epsilon_D(p)}\hspace{-2mm}\frac{d}{(2ud+\epsilon_D(p))2ud}=\frac{1}{4d}\hspace{-2mm}\sum_{u > \frac{2^{\frac{k-1}{j}}-\epsilon_D(p)}{2d}} \frac{1}{(u+\frac{\epsilon_D(p)}{2d})u}.
\end{align*}

By the same argument as in Theorem \ref{Frac_Cm_Estimate_car} we get
\begin{align*}
    \lvert N_D(m,k,j) \rvert \leq  2^k \sum\limits_{3 < d<2^{m+1-j}}\frac{1}{2^{\frac{k-1}{j}}+1}\leq 2^{k}  \frac{ 2^{m+1-j} -4 }{2^{\frac{k-1}{j}}+1}.
\end{align*}
Using this estimate in (\ref{CmMkl}) concludes the proof.
\end{proof}

\begin{theorem}\label{alphafort}
Let $m,k, l, r\in \mathbb{N}$, $k\geq 2$ with $m+1 \leq 2\sqrt{k-1}$. Then, 
\vspace{-1.5mm} 
\begin{align*}
\sideset{}{'}\sum\limits_{n\in M_{k,l}}\overline{\alpha}_D(n)^r \leq &2^{r(1-M)} \lvert M_{k,l} \rvert \frac{\rho_l^{(M+1)t}}{2^t-\rho^r}+2^{k+r}\Bigg( \sum_{m=2}^M \sum_{j=2}^{m} \Big( \frac{\rho_l}{2} \Big)^{mr}  \frac{3}{\prod_{i=1}^{j-1}\tilde{p}_{l+i}}\frac{1}{2^{\frac{k-1}{j}}+1} \\
&+\sum_{m=2}^M \sum_{j=2}^{m-2} \Big( \frac{\rho_l}{2} \Big)^{mr} \frac{ 2^{m+1-j}-4}{ 2^{\frac{k-1}{j}}+1}\Bigg).\end{align*}
\end{theorem}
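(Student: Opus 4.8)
The plan is to assemble this bound mechanically from three ingredients already at hand: the power version of Lemma \ref{alpha-split-primes}, the partition $M_{k,l}=M_{k,l,d_1}\cupdot M_{k,l,d_2}$, and the two refined cardinality estimates of Theorems \ref{Frac_Cm_Estimate_car} and \ref{Frac_Cm_Estimate_ncar}, which are fed in to replace the coarser count of Proposition \ref{C_mD1}. (Here the hypothesis is read as $3\le M\le 2\sqrt{k-1}-1$, and the stray $t$'s and $\rho$ in the displayed bound are $r$ and $\rho_l$.)

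First I would invoke Lemma \ref{alpha-split-primes}, which already bounds $\sideset{}{'}\sum_{n\in M_{k,l}}\overline{\alpha}_D(n)^r$ by a geometric tail term $2^r\lvert M_{k,l}\rvert\sum_{m=M+1}^\infty(\rho_l/2)^{mr}$ plus a finite sum over $m=2,\dots,M$ weighted by $\lvert C_{m,D}\cap M_{k,l}\rvert$. The tail is a geometric series in $x=(\rho_l/2)^r$, so $\sum_{m=M+1}^\infty x^m=x^{M+1}/(1-x)$; writing $x^{M+1}=\rho_l^{r(M+1)}2^{-r(M+1)}$ and $1-x=(2^r-\rho_l^r)2^{-r}$ and tidying the exponents ($r-r(M+1)+r=r(1-M)$) gives exactly the first claimed term $2^{r(1-M)}\lvert M_{k,l}\rvert\,\rho_l^{(M+1)r}/(2^r-\rho_l^r)$.

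For the finite sum I would use that $M_{k,l}$ is the disjoint union of $M_{k,l,d_1}$ and $M_{k,l,d_2}$, so that $\lvert C_{m,D}\cap M_{k,l}\rvert=\lvert C_{m,D}\cap M_{k,l,d_1}\rvert+\lvert C_{m,D}\cap M_{k,l,d_2}\rvert$. Since $m\le M\le 2\sqrt{k-1}-1$ forces $m+1\le 2\sqrt{k-1}$ for every $m$ in the summation range, both Theorem \ref{Frac_Cm_Estimate_car} and Theorem \ref{Frac_Cm_Estimate_ncar} apply term by term. Substituting their bounds and factoring the common $2^r\cdot 2^k=2^{k+r}$ out front produces the two double sums of the statement: the $d_1$ part contributes $\sum_{m=2}^M\sum_{j=2}^{m}(\rho_l/2)^{mr}\frac{3}{\prod_{i=1}^{j-1}\tilde p_{l+i}}\frac{1}{2^{(k-1)/j}+1}$, and the $d_2$ part contributes $\sum_{m=2}^M\sum_{j=2}^{m-2}(\rho_l/2)^{mr}\frac{2^{m+1-j}-4}{2^{(k-1)/j}+1}$. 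Adding the geometric term completes the bound.

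I expect no genuine obstacle here, as the statement is essentially a bookkeeping assembly of the preceding results. The only places demanding care are evaluating and simplifying the geometric tail so that the factor $2^{r(1-M)}$ and the denominator $2^r-\rho_l^r$ emerge correctly, and checking that the constraint $m+1\le 2\sqrt{k-1}$ needed by the two cardinality theorems holds uniformly over $2\le m\le M$, which is immediate from $M\le 2\sqrt{k-1}-1$.
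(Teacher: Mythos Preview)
Your proposal is correct and follows essentially the same approach as the paper: split via the $C_{m,D}$ stratification, evaluate the geometric tail, and feed in the two refined counts from Theorems \ref{Frac_Cm_Estimate_car} and \ref{Frac_Cm_Estimate_ncar} on the disjoint pieces $M_{k,l,d_1}$ and $M_{k,l,d_2}$. The only cosmetic difference is that you cite Lemma \ref{alpha-split-primes} directly, whereas the paper re-derives its content from Lemma \ref{newbound} inline; both routes are equivalent.
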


\begin{proof}
With Lemma \ref{newbound} we get
\begin{align*}
\sideset{}{'}\sum\limits_{n\in M_{k,l}}\overline{\alpha}_D(n)^t =& \sum_{m=2}^\infty \sum_{n \in C_{m,D} \cap M_{k,l} \setminus C_{m-1,D}} \overline{\alpha}_D(n)^r \\
\leq& \sum_{m=2}^\infty \sum_{n \in C_{m,D} \cap M_{k,l} \setminus C_{m-1,D}} \rho_l^{mt} 2^{-(m-1)r} \\
\leq& 2^r \Big( \lvert M_{k,l} \rvert \sum_{m=M+1}^\infty  \Big( \frac{\rho_l}{2} \Big)^{mr} +  \Big( \sum_{m=2}^M  \Big( \frac{\rho_l}{2} \Big)^{mr} \mid M_{k,l,d_1} \cap C_{m,D} \mid \\
&+ \sum_{m=2}^M \Big( \frac{\rho_l}{2} \Big)^{mr} \mid M_{k,l,d_2} \cap C_{m,D} \mid \Big)\Big).
\end{align*}
With $\sum_{m=M+1}^\infty \big( \frac{\rho_l}{2} \big)^{mr} = 2^{-Mr} \frac{\rho_l^{(M+1)r}}{2^r-\rho^r}$ and Theorem \ref{Frac_Cm_Estimate_car} and \ref{Frac_Cm_Estimate_ncar} we are done.
\end{proof}

\textit{Proof of Theorem \ref{frac2}.} We use inequality (\ref{NrP}) with Theorem \ref{alphafort} for $N_1$ and $N_2$. For each $k$, we choose $M$ to be the positive integer $3 \leq M \leq 2\sqrt{k-1}-1$ that minimizes each upper bound in inequality (\ref{NrP}) for $q_{k,1}$ and $q_{k,2}$ with $P=(0.71867)\frac{2^k}{k}$. This shows that $q_{k,1}\leq 4/19$ for $k\geq 34$. For $30 \leq k \leq 33$ we have that $q_{k,1}\leq 4/15$ and $q_{k,2} \leq 16/241$. So, $q_{k,t} \leq (4/15)^t$ for $k\geq 30.$

\begin{table}[H]
\centering
\begin{minipage}{0.42\linewidth}
\centering
\begin{tabular*}{\linewidth}{  L | L | L | L | L}
	k & M_{opt,1} & v_{k,1} & M_{opt, 2} & v_{k,2} \\ \hline
    30 & 6 & 0.239294 & 8 & 0.000602\\
    31 & 6 & 0.235818 & 8 & 0.000544\\ 
    32 & 7 & 0.232670 & 8 & 0.000360\\ 
    33 & 7 & 0.220337 & 9 & 0.000314\\ 
    34 & 7 & 0.209791 & & \\
    35 & 7 & 0.200868 & & \\ 
\end{tabular*}
\end{minipage}%
\hspace{1cm}
\begin{minipage}{0.23\linewidth}
\centering
\begin{tabular*}{\linewidth}{  L | L | L}
	k & M_{opt,1} & v_{k,1} \\ \hline
    36 & 7 & 0.193406 \\ 
    37 & 7 & 0.187248 \\
    38 & 7 & 0.182247 \\
    39 & 7 & 0.178267 \\ 
    40 & 7 & 0.175183 \\
    41 & 8 & 0.166822  
\end{tabular*}
\end{minipage}%
\vspace{2mm}
\caption{Let $v_{k,1}$ be the upper bound for $q_{k,1}$ and $v_{k,2}$ for $q_{k,2}$ respectively, where we use Theorem \ref{alphafort} with Lemma \ref{inclexcl} for bounding $\lvert M_{k,l} \rvert$ and  $P=(0.71867)2^k/k.$  $M_{\textnormal{opt},1}$ is the optimal value for $M$ that minimizes $v_{k,1}$ and $M_{\textnormal{opt},2}$ for $v_{k,2}$ respectively. $v_{k,1}\leq 4/19$ ror $ 34 \leq k \leq 41$, and for $30 \leq k \leq 33$ we have $v_{k,1}\leq 4/15$ and $v_{k,2} \leq 16/241$.}
\end{table}

We now compute the exact values of $\lvert M_{k,l} \rvert $ and $\pi(2^k)-\pi(2^{k-1})$ to get improved results in Theorem \ref{alphafort} with $r=1,2$. For each $k$, we choose $3 \leq M \leq 2 \sqrt{k-1}-1$  that minimizes the upper bound in (\ref{NrP}). With this, we have $q_{k,1} \leq 4/19$ for $k=27, 28, 29$, and $q_{k,1} \leq 4/15$ and $q_{k,2} \leq 16/241$ for $17 \leq k \leq 26$, which proves that $q_{k,t} \leq (4/15)^t$ for $17 \leq k\geq 29$, see Table \ref{tablek1729}.

\begin{table}[H]
\centering
\begin{minipage}{0.46\linewidth}
\centering
\begin{tabular*}{\linewidth}{  L | L | L | L | L}
	k & M_{opt,1} & v_{k,1} & M_{opt, 2} & v_{k,2} \\ \hline
    17 & 4 & 0.253449 & 6 & 0.004786 \\
    18 & 4 & 0.256262 & 6 & 0.004075 \\ 
    19 & 4 & 0.260073 & 6 & 0.003510 \\ 
    20 & 5 & 0.247789 & 6 & 0.003088 \\ 
    21 & 5 & 0.235446 & 6 & 0.002760 \\ 
    22 & 5 & 0.226473 & 7 & 0.001935 \\ 
    23 & 5 & 0.220211 & 7 & 0.001650 \\
\end{tabular*}
\end{minipage}%
\hspace{1cm}
\begin{minipage}{0.46\linewidth}
\centering
\begin{tabular*}{\linewidth}{  L | L | L | L | L}
	k & M_{opt,1} & v_{k,1} & M_{opt, 2} & v_{k,2} \\ \hline
    24 & 5 & 0.216189 & 7 & 0.001424 \\
    25 & 5 & 0.214003 & 7 & 0.001246 \\
    26 & 5 & 0.213406 & 8 & 0.000926 \\
    27 & 6 & 0.209426 & & \\
    28 & 6 & 0.197899 & & \\
    29 & 6 & 0.188524 & & \\
    & & & &
\end{tabular*}
\end{minipage}%
\vspace{2mm}
\caption{Let $v_{k,1}$ be the upper bound for $q_{k,1}$ and $v_{k,2}$ for $q_{k,2}$ respectively, using Theorem \ref{alphafort} and the exact values for $\lvert M_{k,l} \rvert$ for the bounds $N_1$ and $N_2$, and the exact values for $\pi(2^k)-\pi(2^{k-1})$. We let $M$ to minimize $v_{k,1}$ and $v_{k,2}$, denoted as $M_{\textnormal{opt},1}$ and $M_{\textnormal{opt},2}$ respectively.
This gives us for  $k= 27, 28, 29$ that $v_{k,1}\leq 4/19$, $v_{k,1}\leq 4/15$ and $v_{k,2} \leq 16/241$ for $17 \leq k \leq 26.$ }
\label{tablek1729}
\end{table}
 
\vspace{-8mm}

\section{Exact values} \label{exact}
In this section, we finally prove the main theorem.
\begin{theorem}
For $k\geq 2, t\geq 1$ we have $q_{k,t} \leq \Big( \frac{4}{15} \Big)^t.$
\end{theorem}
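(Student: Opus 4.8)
The plan is to reduce the final statement to the two ranges that have already been handled and a finite exact computation. Theorem \ref{frac2}, together with the computations carried out inside its proof, already establishes $q_{k,t}\le (4/15)^t$ for every $t\ge 1$ and every $k\ge 17$; so the only thing left is to verify the bound for the finitely many bit sizes $2\le k\le 16$. For these I would not run the chain of estimates used for large $k$, but instead compute $q_{k,t}$ exactly via the formula
$$
q_{k,t}=\frac{\sum'_{n\in M_{k,8}}\overline{\alpha}_D(n)^t}{\sum'_{n\in M_{k,8}}\overline{\alpha}_D(n)^t+\pi(2^k)-\pi(2^{k-1})},
$$
which is available in closed form here because $M_{k,8}$ contains only a handful of composites.

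First I would dispose of the genuinely trivial range. Any composite in $M_{k,8}$ is coprime to $2$ and to the first eight odd primes $3,5,\dots,23$, so all of its prime factors are at least $29$; the smallest such composite is $29^2=841$, which is a $10$-bit number. Hence for $2\le k\le 9$ the composite sum $\sum'_{n\in M_{k,8}}\overline{\alpha}_D(n)^t$ is empty, giving $q_{k,t}=0\le (4/15)^t$ for every $t$ and every $D$.

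For $10\le k\le 16$ I would enumerate the few composites $n\in M_{k,8}$ and evaluate $\overline{\alpha}_D(n)=SL(D,n)/(n-\epsilon_D(n)-1)$ using Arnault's exact formula (Theorem \ref{SL(D,n)}). The useful reduction is that $SL(D,n)$ depends on $D$ only through the sign pattern $(\epsilon_D(p_i))_{p_i\mid n}$ and through $\epsilon_D(n)=\prod_i\epsilon_D(p_i)^{r_i}$; since the algorithm keeps only $n$ with $\epsilon_D(n)=-1$, prime squares (for which $\epsilon_D(p^2)=+1$) and twin-prime products fall away automatically, so that only mixed-sign semiprimes and a small set of other admissible $n$ survive. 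Maximising $\overline{\alpha}_D(n)$ over the finitely many admissible sign patterns produces a value of the numerator sum that is valid for every fixed but arbitrary $D$, and dividing by the exact prime count $\pi(2^k)-\pi(2^{k-1})$ (read off from Table \ref{estimateprimes} or recomputed directly) yields the worst-case $q_{k,t}$.

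Finally I would close the argument exactly as in Theorems \ref{frac} and \ref{frac2}, via Lemma \ref{4/15-estimate}: for each $k$ it suffices to exhibit an $r$ with $q_{k,r}\le\big(1+(15/4)^r\big)^{-1}$ and $q_{k,r'}\le (4/15)^{r'}$ for all $r'<r$. Because composites are so sparse relative to the primes in this range, I expect $q_{k,1}\le 4/19$ to hold outright for most of these $k$, so that $r=1$ closes the case; for any $k$ whose exact $q_{k,1}$ should still exceed $4/19$ I would fall back to the pair $q_{k,1}\le 4/15$ and $q_{k,2}\le 16/241$, i.e.\ the case $r=2$. The hard part will not be conceptual but a matter of careful bookkeeping: one must track precisely which $n$ pass the $\epsilon_D(n)=-1$, square-free/twin-prime, and trial-division filters, and must take the maximum of $\overline{\alpha}_D(n)$ over all admissible sign assignments so that the resulting bound is uniform in $D$.
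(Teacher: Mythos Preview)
Your proposal is correct and follows essentially the same route as the paper: invoke Theorem \ref{frac2} (whose proof in fact covers $k\ge 17$) for the large range, compute $\overline{\alpha}_D(n)$ exactly via Arnault's formula \eqref{SLDn} for the finitely many remaining bit sizes, maximise over the admissible sign patterns in $D$, and close with Lemma \ref{4/15-estimate}. The only noteworthy difference is that the paper carries out the exact computation over $M_{k,2}$ rather than $M_{k,8}$, remarking that this is a stronger (more pessimistic) assumption; your choice of $M_{k,8}$ is closer to the algorithm as defined and makes the trivial range $2\le k\le 9$ (no composites at all) wider than the paper's $2\le k\le 5$, at the cost of a slightly more delicate enumeration for $10\le k\le 16$.
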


The approach mentioned above does not prove the desired results for $k\leq 17.$ We now suppose divisibility by the first two odd primes instead of the first nine, which in our case is a stronger assumption for $q_{k,t}$. We have an exact formula for $SL(D,n)$ given in (\ref{SLDn}), which we will use. We consider all odd $k$-bit integers that are not divisible by 3 and 5 and store them in a list. To compute $SL(D,n)$, all of the prime factors of $n$ must be determined computationally, so this can only be computed for small values of $k$. We store all primes less than the square root of $2^{16}$ in a list. Moreover, for each $n \in M_{k,2}$, we determine $q, k_1, s$, and $q_i$ for all $1 \leq i \leq s$. We then calculate $\overline{\alpha}_D(n)$. Additionally, we only sum over the integers with $\gcd(n,2D)=1$. For the bound to hold for every $D$, we take the one that maximizes $\sideset{}{'}\sum\limits_{n\in M_{k,l}}\overline{\alpha}_D(n).$

\begin{table}[!htbp]
\centering
\label{pk1exactsmallk}
\begin{minipage}{.08\linewidth}
\centering
\begin{tabular*}{\linewidth} {L |L  }
	k & q_{k,1}    \\ \hline
    2 & 0  \\ 
    3 & 0  \\ 
    4 & 0  \\
    5 & 0
 \end{tabular*}
\end{minipage}%
\hspace{1cm}
\begin{minipage}{.14\linewidth}
\centering
\begin{tabular*}{\linewidth} {L |L   }
	k & q_{k,1}    \\ \hline
	6 & 0.009725  \\ 
	7 & 0.027481  \\ 
	8 & 0.019684  \\ 
    9 & 0.016090 
 \end{tabular*}
\end{minipage}%
\hspace{1cm}
\begin{minipage}{.15\linewidth}
\centering
\begin{tabular*}{\linewidth} {L |L  }
	k & q_{k,1}    \\ \hline
    10  & 0.012924 \\
    11  & 0.008977 \\
    12 & 0.006131	\\
    13 & 0.006737
 \end{tabular*}
\end{minipage}%
\hspace{1cm}
\begin{minipage}{.153\linewidth}
\centering
\begin{tabular*}{\linewidth} {L |L  }
	k & q_{k,1}    \\ \hline
   14 & 0.003987 \\ 
	15 & 0.001641 \\ 
	16 & 0.001095 \\ 
& 
 \end{tabular*}
\end{minipage}%
\vspace{2mm}
\caption{The exact values for $q_{k,1}$ for $2 \leq k \leq 16.$}
\end{table}
\vspace{-5mm}

\hspace{-2mm}Together with Theorems \ref{frac} and \ref{frac2}, this proves that $q_{k,t} \leq \Big( \frac{4}{15} \Big)^t$ for $k\geq 2, t\geq 1.$
\section{Average case behaviour on incremental search}
So far, we have explored a method for generating primes by selecting a fresh and random $k$-bit integer and using Algorithm \ref{algorithm} for primality testing until a passing candidate is found.
However, an often recommended alternative is to choose a random starting point  $n_0 \in M_k$, test it for primality, and if it fails, consecutively test  $n_0+2,  n_0+4 , \dots $ until one is found that passes all stages of the test. Numerous adaptations are possible, such as other step sizes and various sieving techniques, yet the basic principle remains unchanged.
This method, commonly known as \textit{incremental search}, offers several practical advantages. It is more efficient in using random bits and test division by small primes can be conducted much more efficiently compared to the conventional ``uniform choice'' method. A drawback is a bias in the distribution of the generated primes.\\
The key advantage lies in the fact that a complete trial division is only necessary for the starting candidate $n_0.$ The remainders $r_p \equiv n_0 \mod p$ are computed for all primes $p<B$ below a given threshold $B$ and stored in a table. As the candidate sequence progresses, the values in the table are efficiently updated by adding 2 to each stored remainder modulo $p$, so $n_i=n_0+2i\equiv r_p +2i \mod p$, where $n_i$ is the $i$-th candidate and $i\in \mathbb{N}_0$. The candidate passes the trial division stage if none of the table values are equal to 0.

\subsection{The incremental search algorithm}
The analysis of the average-case error probability done in this paper and in \cite{Einseleav} depends on the assumption that the candidates are independent, hence the method cannot be directly taken over to the incremental search method. The average case error behaviour of the incremental search algorithm of the Miller-Rabin test was studied in \cite{incrementalBrandt}. Notably, no analysis of the average-case error behaviour for the strong Lucas test exists. We now give a more precise version of the algorithm.

\begin{algorithm}
\caption{\textsc{PrimeIncLuc}($t, k, s$)}
\textbf{Input:} Bit-size $k\in \mathbb{N}$, testing rounds $t \in \mathbb{N}$, maximum number of candidates before returning ``fail'' $s\in \mathbb{N}$.\\
\textbf{Output:} First probable prime found or ``fail'' after $n_0 + 2s$ iterations.
\begin{enumerate}
    \item Choose an odd $k$-bit integer $n_0$ uniformly at random
    \item $n=n_0$ 
    \item If $n$ is divisible by 2, 3, and 5: set $n=n+2$. If $n\geq n_0+2s$ \textbf{output} "fail" and \textbf{stop}. Else, go to step 3. 
    \item Else execute the following loop until it stops: \\
 For $i=1$ to $t$:
        \begin{itemize}
            \item Perform the strong Lucas test to $n$ with randomly chosen bases. 
            \item If $n$ fails any round of test, set $n=n+2$. If $n\geq n_0+2s$, \textbf{output} "fail" and \textbf{stop}. Else, go to step 3.
            \item Else output $n$ and \textbf{stop}
        \end{itemize}
    \end{enumerate}
\end{algorithm}

To enhance the algorithm's efficiency, we can incorporate test division by additional small primes before applying the strong Lucas test. Regardless of the number of primes used, the optimized algorithm's error probability remains at most that of \textsc{PrimeIncLuc}. This is because test division can never reject a prime, only improving our chances of rejecting composites. However, the following analysis does not explore the error probability of the optimized version.

\subsection{Error estimates of the incremental search algorithm}

In this section, we focus on the probability that \textsc{PrimeincLuc} outputs a composite. Let $y_{k,t,s}$ denote the probability that one execution of the loop (steps 1(a) and (b)) outputs a composite number.

\begin{definition}
Let $$\overline{C}_{m,D}=\{n \in \mathbb{N} : \gcd(n,2D)=1, n \textnormal{ composite and } {\overline{\alpha}_D}(n)=\frac{SL(D,n)}{n-\epsilon_D(n)-1}>2^{-m}\}.$$
  \end{definition}
\vspace{-3mm}
\begin{lemma}\label{boundCmD}
$\overline{C}_{m,D}  \subseteq C_{1.2m,D}.$
\end{lemma}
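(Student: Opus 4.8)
The goal is to show $\overline{C}_{m,D} \subseteq C_{1.2m,D}$. Unpacking the definitions, this means: for any composite $n$ coprime to $2D$ with $\overline{\alpha}_D(n) > 2^{-m}$, we must show $\alpha_D(n) > 2^{-1.2m}$. Since $\overline{\alpha}_D(n) = SL(D,n)/(n-\epsilon_D(n)-1)$ and $\alpha_D(n) = SL(D,n)/\varphi_D(n)$, the two quantities differ only in their denominators, so the containment hinges on comparing $\varphi_D(n)$ with $n - \epsilon_D(n) - 1$.

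Let me think about this comparison. We have $\varphi_D(n) = \prod_i p_i^{r_i - 1}(p_i - \epsilon_D(p_i))$. The plan is to relate $\varphi_D(n)$ to $n$ from above. First I would establish an upper bound $\varphi_D(n) \leq n \cdot \rho$ for some factor $\rho$ close to $1$ (here Lemma \ref{newbound} and the quantity $\rho_l = 1 + 1/\tilde{p}_{l+1}$ are clearly the relevant tools, since that lemma already encodes exactly the relationship $\overline{\alpha}_D(n) \leq \rho_l^m \alpha_D(n)$ on the set $C_{m,D} \cap M_{k,l}$). In fact, Lemma \ref{newbound} almost directly gives what we want: if $n \in \overline{C}_{m,D}$, then from $\overline{\alpha}_D(n) > 2^{-m}$ and the inequality $\overline{\alpha}_D(n) \leq \rho_l^{m'} \alpha_D(n)$ (with $m'$ chosen so that $\alpha_D(n) > 2^{-m'}$, i.e. $n \in C_{m',D}$), one can chase the exponents.

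Here is the approach I would take concretely. Suppose $n \in \overline{C}_{m,D}$, and let $m'$ be the least integer with $n \in C_{m',D}$, i.e. $2^{-m'} < \alpha_D(n) \leq 2^{-(m'-1)}$. Applying Lemma \ref{newbound} with $l = 2$ gives $2^{-m} < \overline{\alpha}_D(n) \leq \rho_2^{m'} \alpha_D(n) \leq \rho_2^{m'} 2^{-(m'-1)}$. Taking $\log_2$ yields $-m < m' \log_2 \rho_2 - (m'-1)$, which rearranges to a linear inequality bounding $m'$ in terms of $m$; with $\rho_2 = 1 + 1/\tilde p_3 = 1 + 1/7$, the factor $\log_2(8/7)$ is small, and the arithmetic should force $m' \leq 1.2 m$ (for all $m$ in the relevant range, likely using $m \geq 3$ since $C_{1,D} = C_{2,D} = \emptyset$). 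This would give $\alpha_D(n) > 2^{-m'} \geq 2^{-1.2m}$, hence $n \in C_{1.2m,D}$, as desired.

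The main obstacle I anticipate is making the exponent bookkeeping rigorous and confirming the constant $1.2$ actually works: one must verify that the inequality $m' \log_2 \rho_2 - (m'-1) > -m$ together with $2^{-m'} < \alpha_D(n) \leq 1/4$ genuinely implies $m' \leq 1.2m$ for every admissible $m$, rather than only asymptotically. This requires checking the small cases (smallest $m$ for which $\overline{C}_{m,D}$ is nonempty) by hand, since the slack in the linear inequality is tightest there; the constant $0.2 m$ of headroom must dominate the $O(m \log_2 \rho_2)$ correction, and since $\log_2(8/7) \approx 0.193 < 0.2$, the margin is genuinely thin and the verification that it never fails is the delicate part. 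A secondary subtlety is the choice of $l$: the cleanest statement uses the weakest divisibility assumption ($l=2$, divisibility by $3$ and $5$ only), consistent with how $M_{k,2}$ is used elsewhere, so I would fix $l=2$ throughout and confirm $\rho_2 = 8/7$ gives the needed bound.
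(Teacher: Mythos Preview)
Your strategy of introducing the minimal $m'$ with $n \in C_{m',D}$ is natural, but the step $\alpha_D(n) \leq 2^{-(m'-1)}$ discards too much and the resulting inequality $m'\bigl(1 - \log_2(8/7)\bigr) < m + 1$ gives only $m' < (m+1)/\log_2(7/4) \approx 1.239\,(m+1)$, which is \emph{larger} than $1.2m$ for every $m \geq 1$. So the conclusion $m' \leq 1.2m$ does not follow; the margin you call ``genuinely thin'' is in fact on the wrong side, and no case-checking for small $m$ will rescue it, since the failure is already asymptotic.

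The paper avoids the auxiliary $m'$ entirely. It applies the inequality $\overline{\alpha}_D(n) \leq (8/7)^m \alpha_D(n)$ with the \emph{same} exponent $m$ as in $\overline{C}_{m,D}$ (invoking the underlying bound from \cite{Einseleav} as holding across $\tilde{M}_{k,2}$), and then isolates $\alpha_D(n)$ directly: from $2^{-m} < (8/7)^m \alpha_D(n)$ one obtains $\alpha_D(n) > (7/16)^m = 2^{-m\log_2(16/7)} > 2^{-1.2m}$, since $\log_2(16/7) \approx 1.193 < 1.2$. The constant that decides the lemma is $1 + \log_2(8/7) \approx 1.193$, not $1/\bigl(1 - \log_2(8/7)\bigr) \approx 1.239$; your detour through $m'$ and the upper bound on $\alpha_D(n)$ is exactly what replaces the former by the latter.
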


\begin{proof}
Since $7=\tilde{p}_3$ is the third odd prime, we have by Lemma 2 for every $n\in \tilde{M}_{k,2}$ that
\vspace{-2mm}
\begin{equation}\label{43alpha}
 \overline{\alpha}_D(n) \leq \rho_1^m \alpha_D(n) = \Bigg(\frac{8}{7}\Bigg)^m \alpha_D(n).
\end{equation}
For $n \in \overline{C}_{m,D},$ we have by inequality (\ref{43alpha}) that $2^{-m}<\overline{\alpha}_D(n)\leq \Big(\frac{8}{7}\Big)^m \alpha_D(n).$ Hence, $2^{-m}\cdot \Big(\frac{7}{8}\Big)^m< \alpha_D(n)$ and since $2^{-1.2m}<2^{-m}\Big( \frac{7}{8} \Big)^m$, we get $2^{-1.2m} < \alpha_D(n).$ Thus, for $n\in \overline{C}_{m,D}$, we have $n \in C_{1.2m,D}.$  
\end{proof}
Let us define the set $\mathcal{D}_{m,D} = \{ n \in \tilde{M}_{k,2} \mid [n, \dots, n+2(s-1)) \cap \overline{C}_{m,D} \neq \emptyset \}$ for $m \geq 3.$  A number belonging to $\overline{C}_{m,D}$ can be in at most $s$ distinct intervals of the form $[n, \dots, n+2(s-1))$, Therefore, the next lemma easily follows.

\begin{lemma}\label{lemmaDm}
    $\mathcal{D}_{m,D} \subset \mathcal{D}_{m+1,D}$ and $\lvert \mathcal{D}_m \rvert \leq s \lvert \tilde{M}_{k,2} \cap \overline{C}_{m,D} \rvert.$
\end{lemma}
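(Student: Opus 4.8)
The plan is to treat the two assertions separately, since neither relies on the other. The inclusion $\mathcal{D}_{m,D}\subseteq\mathcal{D}_{m+1,D}$ should follow purely from the monotonicity of the family $\overline{C}_{m,D}$ in the parameter $m$, while the cardinality bound $\lvert\mathcal{D}_{m,D}\rvert\le s\,\lvert\tilde{M}_{k,2}\cap\overline{C}_{m,D}\rvert$ is a bounded-multiplicity covering (double-counting) argument built on the single observation already flagged in the text: a fixed element of $\overline{C}_{m,D}$ belongs to at most $s$ of the progressions $[n,\dots,n+2(s-1))$.

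For the inclusion, I would first record that $\overline{C}_{m,D}\subseteq\overline{C}_{m+1,D}$: the defining condition $\overline{\alpha}_D(n)>2^{-m}$ gives $\overline{\alpha}_D(n)>2^{-m}>2^{-(m+1)}$, so every $n$ qualifying at level $m$ also qualifies at level $m+1$. Now take $n\in\mathcal{D}_{m,D}$. By definition the progression $[n,\dots,n+2(s-1))$ meets $\overline{C}_{m,D}$, hence it meets the larger set $\overline{C}_{m+1,D}$, so $n\in\mathcal{D}_{m+1,D}$. (I read the stated $\subset$ as $\subseteq$; properness would need an explicit witness and is not used downstream.)

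For the cardinality bound I would set up the double counting explicitly. Consider the incidence set $T=\{(n,c): n\in\mathcal{D}_{m,D},\ c\in\overline{C}_{m,D},\ c\in[n,\dots,n+2(s-1))\}$. By the definition of $\mathcal{D}_{m,D}$, every $n\in\mathcal{D}_{m,D}$ contributes at least one pair, so $\lvert T\rvert\ge\lvert\mathcal{D}_{m,D}\rvert$. Conversely, for a fixed $c$ the condition $c\in\{n,n+2,\dots,n+2(s-1)\}$ is equivalent to $n\in\{c,c-2,\dots,c-2(s-1)\}$, which lists at most $s$ admissible starting points; this is exactly the \emph{at most $s$ intervals} remark. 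Summing over the witnesses $c$ that actually occur yields $\lvert T\rvert\le s\cdot\lvert\{c\in\overline{C}_{m,D}: c\text{ is a witness}\}\rvert$, and combining the two inequalities gives the claim once the witness set is identified with $\tilde{M}_{k,2}\cap\overline{C}_{m,D}$.

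The main obstacle is precisely that last identification. Every witness $c$ lies in an odd progression issuing from some $n\in\tilde{M}_{k,2}$, so $c$ is odd and of the correct bit-size, but $c$ need not a priori avoid $3$ and $5$, whereas the right-hand side only counts $c$ coprime to $30$. I would resolve this by invoking the trial-division step of the algorithm: a composite divisible by $3$ or $5$ is filtered out before the strong Lucas test and can never be output, so it may be excised from the witness set without affecting the error analysis, leaving witnesses in $\tilde{M}_{k,2}\cap\overline{C}_{m,D}$. I would state this restriction cleanly in the definition of $T$ and double-check the boundary case in which $n+2(s-1)$ crosses $2^{k}$, confirming it does not inflate the witness count beyond the stated bound.
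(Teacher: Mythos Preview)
Your argument is exactly the paper's: it offers only the one-sentence observation that each element of $\overline{C}_{m,D}$ falls into at most $s$ intervals $[n,\dots,n+2(s-1))$ and declares the lemma to follow. Your double-counting via the incidence set $T$ is a faithful unpacking of that remark, and the monotonicity $\mathcal{D}_{m,D}\subseteq\mathcal{D}_{m+1,D}$ is, as you say, immediate from $\overline{C}_{m,D}\subseteq\overline{C}_{m+1,D}$.

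The concern you raise---that a witness $c$ in the progression need not be coprime to $15$ and hence need not lie in $\tilde{M}_{k,2}\cap\overline{C}_{m,D}$---is a genuine imprecision in the stated inequality that the paper does not address either. Your resolution via the trial-division filter is the correct reason this is harmless in the intended application, and in any case the very next proof in the paper immediately weakens the bound to $s\,\lvert M_k\cap\overline{C}_{m,D}\rvert$, so the sharper form with $\tilde{M}_{k,2}$ is never actually used.
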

The motivation behind defining the sets $\mathcal{D}_{m,D}$ is that, if we luckily select a starting point $n_0$ for the inner loop that is \textit{not} a member of $\mathcal{D}_{m,D}$, all composites tested before the loop terminates are going to pass with a probability of at most $2^{-m}.$ This is reflected in the bound $y_{k,s,t}$ as outlined below:

\begin{theorem}\label{thrmincrmprob}
    Let $s=c \ln(2^k)$ for some constant $c.$ Then for any $3\leq M \leq 2\sqrt{k-1}-1$, we have
    \vspace{-3mm}
    $$
    y_{k,t,s} \leq 0.5(ck)^2 \sum_{m=3}^{\lceil 1.2M \rceil } \frac{ \lvert C_{m,D} \cap M_{k} \rvert }{\lvert \tilde{M}_{k,2} \rvert }2^{-t(m-1)}+0.7 ck 2^{-tM}.$$
\end{theorem}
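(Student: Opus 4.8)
The plan is to condition on the position of the random starting point $n_0$ relative to the nested family $\mathcal{D}_{3,D}\subseteq\mathcal{D}_{4,D}\subseteq\cdots$ and to exploit the ``lucky starting point'' principle recorded just before the statement: whenever $n_0\notin\mathcal{D}_{m,D}$, the window $[n_0,\dots,n_0+2(s-1))$ meets no element of $\overline{C}_{m,D}$, so every composite the inner loop ever examines has $\overline{\alpha}_D\leq 2^{-m}$ and hence survives all $t$ rounds with probability at most $2^{-mt}$. I would split the event ``the loop outputs a composite'' according to the least index $m$ with $n_0\in\mathcal{D}_{m,D}$, treating the levels $3\leq m\leq M$ as the main contribution and the event $n_0\notin\mathcal{D}_{M,D}$ as a tail. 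The sums begin at $m=3$ in accordance with the definition of $\mathcal{D}_{m,D}$; the reason low levels do not appear is that $C_{1,D}=C_{2,D}=\emptyset$, which through Lemma~\ref{boundCmD} limits how large the relevant $\overline{C}$-levels can be.

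For the tail, if $n_0\notin\mathcal{D}_{M,D}$ then a union bound over the at most $s$ candidates the loop inspects shows a composite is output with probability at most $s\cdot 2^{-tM}$; since $s=c\ln(2^k)=ck\ln 2\leq 0.7\,ck$, this gives the second summand $0.7\,ck\,2^{-tM}$. For the main part I would bound, for each $m$ with $3\leq m\leq M$, the joint probability that the loop outputs a composite while $n_0\in\mathcal{D}_{m,D}\setminus\mathcal{D}_{m-1,D}$. Two factors of $s$ enter here: the first from $\mathbb{P}[n_0\in\mathcal{D}_{m,D}]\leq \lvert\mathcal{D}_{m,D}\rvert/\lvert\tilde{M}_{k,2}\rvert$ combined with Lemma~\ref{lemmaDm}, which supplies $\lvert\mathcal{D}_{m,D}\rvert\leq s\,\lvert\tilde{M}_{k,2}\cap\overline{C}_{m,D}\rvert$, and the second from the union bound over the $\leq s$ composites in the window, each passing with probability at most $2^{-(m-1)t}$ because $n_0\notin\mathcal{D}_{m-1,D}$. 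Multiplying, level $m$ contributes at most
$$ s^{2}\,\frac{\lvert \tilde{M}_{k,2}\cap\overline{C}_{m,D}\rvert}{\lvert \tilde{M}_{k,2}\rvert}\,2^{-(m-1)t}, $$
and $s^{2}=(ck\ln 2)^{2}\leq 0.5\,(ck)^{2}$ produces the leading prefactor.

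It then remains to convert counts of the auxiliary sets $\overline{C}_{m,D}$, which are not directly estimable, into counts of the sets $C_{m,D}$ that Proposition~\ref{C_mD1} and the bounds of the previous section can control. Lemma~\ref{boundCmD} gives $\overline{C}_{m,D}\subseteq C_{1.2m,D}$, and since $\tilde{M}_{k,2}\subseteq M_k$ we obtain $\lvert\tilde{M}_{k,2}\cap\overline{C}_{m,D}\rvert\leq\lvert C_{\lceil 1.2m\rceil,D}\cap M_k\rvert$; reindexing the outer sum by the $C$-level turns the range $3\leq m\leq M$ into $3\leq m\leq\lceil 1.2M\rceil$ and, upon pairing with the lower bound $P$ for the primes implicit in $\lvert\tilde{M}_{k,2}\rvert$, delivers the stated expression. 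Summing the main contribution over $m$ and adding the tail completes the argument.

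The step I expect to be the main obstacle is precisely this last reconciliation of indices. The tight per-round survival factor $2^{-(m-1)t}$ is naturally attached to the $\overline{C}$-level $m$, whereas the only available cardinality estimates are for the $C$-levels, and the two indices are linked only through the constant $1.2$ of Lemma~\ref{boundCmD}. Carrying out the reindexing so that the exponent collapses to $2^{-t(m-1)}$ against $\lvert C_{m,D}\cap M_k\rvert$—rather than to a weaker factor such as $2^{-t(m/1.2-1)}$—while keeping the prefactors as sharp as $s^{2}\leq 0.5(ck)^2$ and $s\leq 0.7\,ck$ (which leave almost no numerical slack) is the delicate point, and it is where I would verify most carefully that the combined estimate really reduces to the claimed bound and not to a looser one.
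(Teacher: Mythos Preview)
Your proposal is correct and mirrors the paper's argument step for step: decompose according to the least $m$ with $n_0\in\mathcal{D}_{m,D}$, bound the tail by $s\,2^{-tM}$ and each level $m$ by $s^2\,\lvert\overline{C}_{m,D}\cap M_k\rvert/\lvert\tilde{M}_{k,2}\rvert\cdot 2^{-t(m-1)}$, insert $s\leq 0.7\,ck$ and $s^2\leq 0.5\,(ck)^2$, and then invoke Lemma~\ref{boundCmD}. The paper dispatches the final reindexing you single out as delicate with the single phrase ``with Lemma~\ref{boundCmD} and a substitution,'' so your caution is aimed at exactly the step the authors also leave implicit; your aside about the prime lower bound $P$ is extraneous, since $\lvert\tilde{M}_{k,2}\rvert$ remains untouched in the denominator.
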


\begin{proof}
    Let $E'$ denote the event of outputting a composite within the inner loop, and let $D_{m,D}$ be associated with the event that the starting point $n_0$ is in $\mathcal{D}_{m,D}$.
    Let $X^\mathsf{c}$  be defined as the complement of the event $X$.
    Then, 
    \vspace{-2mm}
\begin{align*}
    y_{k,t,s} &= \sum_{m=3}^M \mathbb{P}[E'\cap (D_{m,D} \setminus D_{m-1,D})]+\mathbb{P}[E' \cap D_{m,D}^\mathsf{c}] \\
    &\leq \sum_{m=3}^M \mathbb{P}[D_{m,D}]\mathbb{P}[E' \mid (D_{m,D} \setminus D_{m-1,D})]+ \mathbb{P}[E' \cap D_{m,D}^\mathsf{c}].
\end{align*}
Consider the scenario where a fixed $n_0 \not\in \mathcal{D}_{m,D}$ is chosen as the starting point. In this case, no candidate $n$ that we test will belong to $\overline{C}_{m,D} \cap \tilde{M}_{k,2} $, and thus, each candidate will pass all tests with probability at most $2^{-mt}$. The probability of outputting a composite is maximized when all numbers in the considered interval are composite. Consequently, in this scenario, we accept one of the candidates with a probability of at most $s 2^{-mt}.$ Combining this observation with Lemma \ref{lemmaDm} and using the fact that $\tilde{M}_{k,2}\leq M_k$, we obtain
\vspace{-2mm}
\begin{align}\label{yktsineq}
    y_{k,t,s} &\leq s^2 \sum_{m=3}^M \frac{ \lvert \overline{C}_{m,D} \cap M_{k} \rvert }{\lvert \tilde{M}_{k,2} \rvert }2^{-t(m-1)}+s 2^{-tM} \notag\\
    &\leq 0.5(ck)^2 \sum_{m=3}^M \frac{ \lvert \overline{C}_{m,D} \cap M_{k} \rvert }{\lvert \tilde{M}_{k,2} \rvert }2^{-t(m-1)}+0.7 ck 2^{-tM}.
\end{align}
With Lemma \ref{boundCmD} and a substitution we get the desired result.
\end{proof}
By the proof of Theorem 13 in \cite{Einseleav}, we have $\lvert C_{m,D} \cap M_k \rvert \leq 2^{k+1} \sum_{j=2}^m 2^{m-j-\frac{k-1}{j}}$. 
Using Theorem \ref{thrmincrmprob} and Lemma \ref{inclexcl} in inequality (\ref{yktsineq}), we get
$$
y_{k,t,s} \leq 2^{3.42+t}(ck)^2 \sum_{m=3}^{\lceil 1.2M \rceil} 2^{m(1-t)} \sum_{j=2}^m 2^{-j-\frac{k-1}{j}}+0.7 ck 2^{-tM}.
$$ We can now directly derive numerical estimates for $y_{k,t,s}$ for any given value of $s$. Table \ref{table-logykts} shows concrete values for $c=1,5$ and $10$, where the value of $M$ was chosen such that it minimizes the estimate.
\begin{table}[H]
\centering
 \begin{tabular}{L|  L |  L L L L L L L L L L}   
c & k\backslash t  & 1& 2 & 3 & 4 & 5 & 6 & 7 & 8 & 9 & 10 \\ \hline
1 &	100  & 0 & 6 & 12 & 17 & 21 & 25 & 28 & 31& 33 & 35 \\ 
&	200  & 3 & 15 & 24 & 32  & 38 &43 & 48 & 52 & 56 & 59 \\ 
&	400  & 11 & 30 & 42 & 53 & 62 & 69 & 76 & 83& 89 & 94\\ 
&	512  & 15 &  36 & 51 & 62 & 72 & 81 & 83 & 97 & 104 & 110 \\
&	1024 & 31 & 61 & 81 & 98 & 112 & 125 & 137 & 148& 158 & 167 \\ 
&	2048 & 54 & 96 & 125 & 149 & 169 & 188 & 205 & 220& 235 & 249\\
&	4096 & 89  & 147 & 187 & 221 & 251 & 277 & 302 & 324& 345 & 365\\ 
\hline
5 &	100  & 0 & 2 & 8 & 12 & 17 & 20 & 23 & 26 &28 &31\\
&	200  &  0& 11 & 20 & 27 & 33 & 38 & 43& 47& 51 &55\\ 
&	400  &  7& 25 & 38 & 48 & 57 & 65 & 72 & 78& 84 & 90\\ 
&	512  &  11& 32 & 46 & 58 & 68 & 77 & 85& 92& 99 & 106\\
&	1024 &  26& 56 & 76 & 93 & 108 & 120 & 132& 143& 153 & 163\\ 
&	2048 &  50& 91 & 120 & 144 & 165 & 183 & 200 & 216& 230 &244\\
&	4096 &  84 & 142 & 183 & 217 & 246 & 273 & 297& 320& 341 &361\\ 
\hline
10 & 100  & 0 & 0 & 6 & 10 & 15& 18 & 21& 24& 26 &29\\ 
&	200  & 0 & 9 & 18 & 25 & 31& 36 & 41 &45 & 49& 53\\ 
&	400  & 5 & 23 & 36 & 46 & 55 & 63 & 70 &76 & 82& 88\\ 
&	512  & 9 & 30  & 44 & 56  & 66 & 75 & 83 & 90& 97& 104\\
&	1024 & 24 & 54 & 74 & 91 & 106 & 118 &  130 &141 & 151& 161\\ 
&	2048 & 48 & 89 & 118 & 142 & 163 & 181 & 198 &214& 228& 242\\
&	4096 & 82 & 140 & 181 & 215 & 244 & 271 & 295& 318 & 339 &359
\end{tabular}
\caption{ \label{table-logykts} Lower bounds of $-\log_2(y_{k,t,s})$ as a function of $k$ and $t$, where $s=c\ln(2^k)$ with $c=1, 5, 10.$}
\end{table}
\vspace{-4mm}
The next proposition gives a rough idea of how bound's behaviour for large $k$.
\begin{proposition}
Given constants $c$, where $s=c  \ln(2^k)$, and $t$, the function $y_{k,t,s}$ with respect to $k\geq 11$ satisfies
    $$
    y_{k,t,s} \leq \lambda k^32^{-\sqrt{k}} $$
    for a constant $\lambda.$
\end{proposition}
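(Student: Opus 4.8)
The plan is to start from the fully explicit estimate obtained just before the statement, namely
$$
y_{k,t,s} \leq 2^{3.42+t}(ck)^2 \sum_{m=3}^{\lceil 1.2M \rceil} 2^{m(1-t)} \sum_{j=2}^m 2^{-j-\frac{k-1}{j}}+0.7\, ck\, 2^{-tM},
$$
which is valid for every admissible $M$ with $3 \leq M \leq 2\sqrt{k-1}-1$. Since $c$ and $t$ are treated as fixed constants, the whole point is to select $M$ as a function of $k$ so that both summands decay at least as fast as $2^{-\sqrt{k}}$, and then to absorb all remaining polynomial-in-$k$ factors into a single constant $\lambda=\lambda(c,t)$.

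First I would simplify the double sum. Because $t\geq 1$ gives $2^{m(1-t)}\leq 1$ for every $m\geq 3$, the double sum is bounded by $\sum_{m=3}^{\lceil 1.2M\rceil}\sum_{j=2}^m 2^{-j-(k-1)/j}$. The key elementary observation is that by the AM--GM inequality $j+\frac{k-1}{j}\geq 2\sqrt{k-1}$ for every $j>0$, so each summand satisfies $2^{-j-(k-1)/j}\leq 2^{-2\sqrt{k-1}}$. Since the double sum ranges over at most $\lceil 1.2M\rceil^2$ index pairs, it is at most $\lceil 1.2M\rceil^2\,2^{-2\sqrt{k-1}}$. This crude count avoids any delicate analysis of the width of the peak of $j\mapsto j+(k-1)/j$ near $j=\sqrt{k-1}$, and is more than sufficient for the claimed rate.

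Next I would fix the free parameter by taking $M=\lceil\sqrt{k}\rceil$. For $k\geq 11$ this value lies in the admissible range $[3,\,2\sqrt{k-1}-1]$, which one verifies directly for the smallest cases and which holds for all larger $k$ since $\sqrt{k}+1\leq 2\sqrt{k-1}-1$ once $k\geq 6$. With this choice $\lceil 1.2M\rceil=O(\sqrt{k})$, so the first term of the estimate is at most $2^{3.42+t}(ck)^2\cdot O(k)\cdot 2^{-2\sqrt{k-1}}=O\big(k^3\,2^{-2\sqrt{k-1}}\big)$, with implied constant depending only on $c$ and $t$. Using $2\sqrt{k-1}\geq\sqrt{k}$ for $k\geq 2$, this is $\leq \lambda_1 k^3 2^{-\sqrt{k}}$. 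For the second term, $tM=t\lceil\sqrt{k}\rceil\geq\sqrt{k}$ because $t\geq 1$, hence $0.7\,ck\,2^{-tM}\leq 0.7\,ck\,2^{-\sqrt{k}}\leq \lambda_2 k^3 2^{-\sqrt{k}}$. Adding the two contributions yields $y_{k,t,s}\leq(\lambda_1+\lambda_2)k^3 2^{-\sqrt{k}}$, and setting $\lambda=\lambda_1+\lambda_2$ completes the argument.

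The only genuine obstacle is bookkeeping: one must make the single choice of $M$ serve both terms at once, large enough that $2^{-tM}\leq 2^{-\sqrt{k}}$ yet small enough that the factor $\lceil 1.2M\rceil^2$ stays $O(k)$, and one must confirm that $M=\lceil\sqrt{k}\rceil$ remains inside $[3,\,2\sqrt{k-1}-1]$ uniformly for $k\geq 11$. Once the AM--GM bound is in hand, the first term in fact decays like $2^{-2\sqrt{k-1}}$, strictly faster than the target, and the remainder is a routine collection of polynomial factors into $\lambda$; no term decays slower than $2^{-\sqrt{k}}$.
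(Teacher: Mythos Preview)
Your proof is correct and follows essentially the same approach as the paper: both bound the inner sum via the AM--GM inequality $j+(k-1)/j\ge 2\sqrt{k-1}$, count the number of terms, and choose $M$ of order $\sqrt{k}$ so that both summands are $O(k^3 2^{-\sqrt{k}})$. The only cosmetic differences are that the paper first reduces to $t=1$ (equivalent to your $2^{m(1-t)}\le 1$) and cites the AM--GM step as a lemma from \cite{Einseleav}, whereas you give it directly and make the choice $M=\lceil\sqrt{k}\rceil$ explicit.
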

\begin{proof}
It is sufficient to show the proposition for $t=1.$ Using Lemma 12 from \cite{Einseleav} and the proof of Theorem 13 in \cite{Einseleav}, we have $\lvert C_{m,D} \cap M_k \rvert \leq 2^{k+1} \sum_{j=2}^m 2^{m-j-\frac{k-1}{j}}$ and Lemma \ref{inclexcl} with $3-2\sqrt{k-1}<-\sqrt{k}$ for $k\geq 11$ we get
\begin{align}\label{yktsbound}
\frac{\lvert C_{m,D} \cap \tilde{M}_{k,2} \rvert }{\lvert \tilde{M}_{k,2} \rvert}  \leq 2^{m+3.92} \sum_{j=2}^{m}2^{-j-\frac{k-1}{j}} \leq m2^{m+3-2\sqrt{k-1}}<m2^{m-\sqrt{k}},
\end{align}
 for $k\geq 18$. Using inequality (\ref{yktsbound}) in Theorem \ref{thrmincrmprob} with  $M+1\leq 2\sqrt{k-1}$, we get
\begin{align*}
y_{k,t,s} &\leq 0.5(ck)^2 2^{-\sqrt{k}}\sum_{m=3}^{\lceil1.2M\rceil} m +0.7ck2^{-M} =(ck)^2 2^{-\sqrt{k}}(M-1)+0.7ck2^{-M}\\
& <2c^2k^{2.5}2^{-\sqrt{k}}+0.7ck2^{-M}\leq \lambda k^32^{-\sqrt{k}},
\end{align*}
for some constant $\lambda \in \mathbb{R}.$
\end{proof}

In our analysis, we have focused on the probability of a single iteration of the loop producing a composite output. To assess the overall error probability of the algorithm, we observe that the inner loop always terminates when initiated with a prime as a starting point. According to Lemma \ref{inclexcl}, this termination occurs with a probability of $(\pi(2^k)-\pi(2^{k-1}))/\lvert \tilde{M}_{k,2}  \rvert \geq 5.3/k$. Consequently, there exists an exponentially small, relative to $k$, probability that the number of iterations exceeds $k^2.$ Furthermore, it is noteworthy that the error probability of our algorithm is not more than that of a procedure running the inner loop up to $k^2$ times, outputting a composite only in the event of all iterations failing. This observation substantiates an upper bound on the expected running time. Let $Y_{k,t,s}$ denote the probability of \textsc{PrimeincLuc} outputting a composite. Thus, we arrive at the inequality: 
$$
Y_{k,t,s}\leq k^2 y_{k,t,s}+\Big(1-\frac{5.3}{k}\Big)^{k^2}.
$$
\vspace{-2.5mm}
\appendix
\section{Appendix}
Let $n=\prod_{i=1}^s p_i^{r_i}$ be the prime decomposition of an integer $n.$
\vspace{-1mm}
\subsection{The Fermat, Miller-Rabin Primality and Lucas Test}\label{fermat}
The Fermat test is a simple primality tests and exploits the following theorem:
\begin{theorem}[Fermat's Little Theorem]
Let $p$ be a prime number. For all $a$ relatively prime to $p$, we have
$$
a^{p-1} \equiv 1 \bmod p.
$$
\end{theorem}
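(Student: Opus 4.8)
The plan is to use the classical rearrangement-of-residues argument, which is elementary and self-contained. First I would exploit the hypothesis $\gcd(a,p)=1$ to show that multiplication by $a$ induces a bijection on the nonzero residue classes modulo $p$: if $ia \equiv ja \bmod p$ for integers $1 \le i, j \le p-1$, then $p \mid (i-j)a$, and since $p \nmid a$ we obtain $p \mid (i-j)$, which forces $i = j$. Consequently the list $a, 2a, \dots, (p-1)a$, reduced modulo $p$, is merely a permutation of $1, 2, \dots, p-1$.

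Next I would multiply these $p-1$ congruences together. Taking the product of all the left-hand sides and equating it with the product of all the right-hand sides gives
$$
\prod_{i=1}^{p-1} (ia) \equiv \prod_{i=1}^{p-1} i \bmod p,
$$
which rearranges into $a^{p-1}(p-1)! \equiv (p-1)! \bmod p$, after factoring the $p-1$ copies of $a$ out of the left-hand product.

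Finally I would cancel the common factor $(p-1)!$. Because $p$ is prime and each of the integers $1, 2, \dots, p-1$ is strictly smaller than $p$, none of them is divisible by $p$, so $\gcd((p-1)!, p) = 1$ and $(p-1)!$ is invertible modulo $p$. Multiplying both sides of the congruence by this inverse yields $a^{p-1} \equiv 1 \bmod p$, as required.

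The only genuine obstacle is this cancellation step: it is precisely here that the primality of $p$ enters, guaranteeing that $(p-1)!$ admits a multiplicative inverse modulo $p$ (for composite $p$ this can fail). An alternative and slicker route would be the group-theoretic one, observing that the nonzero residues form the multiplicative group $(\mathbb{Z}/p\mathbb{Z})^\ast$ of order $p-1$, whence Lagrange's theorem shows the multiplicative order of $a$ divides $p-1$ and gives $a^{p-1}\equiv 1$ at once. I would nonetheless favour the rearrangement proof in this setting, since it keeps the appendix self-contained and avoids invoking the machinery of finite group theory.
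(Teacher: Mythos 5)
Your proof is correct. The paper states Fermat's Little Theorem in its appendix purely as classical background for the Fermat test and gives no proof of its own, so there is no in-paper argument to compare against; your rearrangement-of-residues argument is the standard complete elementary proof, and you correctly isolate the one point where primality is essential, namely that $\gcd((p-1)!,p)=1$ makes the cancellation of $(p-1)!$ modulo $p$ legitimate.
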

\vspace{-1mm}
To test whether $p$ is prime, we can check if a random integer $a$ coprime to $p$ satisfies Fermat's Little Theorem. This is called the \textit{ Fermat test}. A \textit{pseudoprime base $a$}, or \textit{psp$(a)$}, is a composite number $n$ such that $a^{n-1} \equiv 1 \bmod n$. The next theorem counts the number of bases that make an integer pass the test:
\begin{theorem}[Baillie et al. \cite{Lucas-Baillie}]\label{number-Fermat}
Let $n=\prod_{i=1}^sp_i^{r_i}$ be a positive integer. Then, the number of bases $ a \bmod n$ for which $n$ is a $psp(a)$ is given by
$$
F(n)=\prod_{i=1}^s \gcd(n-1,p_i-1).
$$
\end{theorem}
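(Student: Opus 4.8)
The plan is to count the residues $a \bmod n$ satisfying $a^{n-1} \equiv 1 \pmod{n}$ by reducing to the prime-power factorization of $n$. First I would observe that the defining congruence forces $a$ to be a unit modulo $n$ (from $a \cdot a^{n-2} \equiv 1$), so the count takes place inside $(\mathbb{Z}/n\mathbb{Z})^\ast$. Then I would invoke the Chinese Remainder Theorem to obtain the isomorphism $(\mathbb{Z}/n\mathbb{Z})^\ast \cong \prod_{i=1}^s (\mathbb{Z}/p_i^{r_i}\mathbb{Z})^\ast$, under which $a^{n-1} \equiv 1 \pmod{n}$ is equivalent to the simultaneous system $a^{n-1} \equiv 1 \pmod{p_i^{r_i}}$ for every $i$. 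Consequently $F(n)$ factors as the product over $i$ of the number of local solutions, and it suffices to count these separately.

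For each prime power $p_i^{r_i}$ with $p_i$ odd, the group $(\mathbb{Z}/p_i^{r_i}\mathbb{Z})^\ast$ is cyclic of order $\varphi(p_i^{r_i}) = p_i^{r_i-1}(p_i-1)$. The elementary input I would use is that in a cyclic group of order $m$ the equation $x^e = 1$ has exactly $\gcd(e,m)$ solutions: writing a generator $g$, one has $g^{ke} = 1$ iff $m \mid ke$, and the number of residues $k \bmod m$ satisfying $(m/\gcd(e,m)) \mid k$ is precisely $\gcd(e,m)$. Applying this with $e = n-1$ and $m = p_i^{r_i-1}(p_i-1)$ shows that the $i$-th local factor equals $\gcd(n-1,\, p_i^{r_i-1}(p_i-1))$.

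The key simplification is then to discard the prime-power part of the modulus. Since $p_i \mid n$ we have $\gcd(p_i, n-1) = 1$, hence $p_i \nmid n-1$ and $\gcd(n-1, p_i^{r_i-1}) = 1$. Therefore $\gcd(n-1,\, p_i^{r_i-1}(p_i-1)) = \gcd(n-1, p_i-1)$, and multiplying the local counts together gives $F(n) = \prod_{i=1}^s \gcd(n-1, p_i-1)$, as claimed.

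I expect the only delicate point to be the structural fact that each local unit group is cyclic, which holds for odd prime powers; if $n$ were allowed to be even one would have to treat $p_i = 2$ separately, since $(\mathbb{Z}/2^{r}\mathbb{Z})^\ast$ fails to be cyclic for $r \geq 3$. In the primality-testing setting at hand $n$ is odd, so this case does not arise and the cyclic-group count applies uniformly. The remaining ingredients---the Chinese Remainder Theorem and the solution count in a cyclic group---are entirely routine.
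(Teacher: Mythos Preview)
Your argument is correct and is the standard proof of this classical counting formula. The paper does not actually supply a proof of this theorem; it merely quotes the result from Baillie--Wagstaff \cite{Lucas-Baillie} in the appendix, so there is no in-paper argument to compare against. Your treatment via the Chinese Remainder Theorem, cyclicity of $(\mathbb{Z}/p_i^{r_i}\mathbb{Z})^\ast$ for odd $p_i$, the count $\gcd(e,m)$ of $e$-th roots of unity in a cyclic group of order $m$, and the simplification $\gcd(n-1,p_i^{r_i-1})=1$ is exactly how this is proved in the literature. Your caveat about $p_i=2$ is appropriate; in the setting of the paper $n$ is always odd, so the issue does not arise.
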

\vspace{-1mm}
Unfortunately, infinitely many integers satisfying the thereom for all $a$ coprime to $n$ exist and are known as \textit{Carmichael numbers} \cite{alford1994there}. Consequently, Carmichael numbers can never be identified as composites by the test, making the test impractical for standalone implementation. 

Let us look at a more stringent variant of Fermat's Little Theorem.
\begin{theorem}[The Miller-Rabin Theorem]\label{Miller-Rabin-Theorem}
Let $n>1$ be an integer, and write $n-1=2^\kappa q$, for $q,k \in \mathbb{N}$, where $q$ is odd. Then, $n$ is a prime if and only if for every $a \not\equiv 0 \bmod n$
one of the following is satisfied:
\begin{align}\label{Miller}
    &a^q \equiv 1 \bmod n \notag\\
    \text{or}\\
    &\text{there exists an integer $i < \kappa $ with }a^{2^iq} \equiv -1 \bmod n. \nonumber
\end{align}
\end{theorem}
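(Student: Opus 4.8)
The plan is to prove the two implications of the equivalence separately, the essential tool in each direction being the fact that $\mathbb{Z}/n\mathbb{Z}$ is a field if and only if $n$ is prime. For the forward direction, assume $n$ is prime and fix an arbitrary $a\not\equiv 0 \bmod n$. I would examine the chain produced by repeated squaring,
$$a^q,\ a^{2q},\ a^{4q},\ \dots,\ a^{2^\kappa q},$$
in which each term is the square of its predecessor. Fermat's Little Theorem gives $a^{2^\kappa q}=a^{n-1}\equiv 1 \bmod n$, so the chain terminates in $1$. If already $a^q\equiv 1 \bmod n$, the first alternative of (\ref{Miller}) holds and we are done. Otherwise, let $j$ be the least index with $a^{2^j q}\equiv 1 \bmod n$; since the chain ends in $1$ and $a^q\not\equiv 1$, we have $1\le j\le\kappa$. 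Setting $x=a^{2^{j-1}q}$, minimality of $j$ gives $x\not\equiv 1$ while $x^2\equiv 1 \bmod n$. Because $n$ is prime, $\mathbb{Z}/n\mathbb{Z}$ is a field, so the polynomial $X^2-1$ has only the roots $\pm 1$; hence $x\equiv -1 \bmod n$, which is the second alternative of (\ref{Miller}) with $i=j-1<\kappa$.

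For the converse, I would first show that the assumed condition forces the plain Fermat congruence $a^{n-1}\equiv 1 \bmod n$ for every $a\not\equiv 0 \bmod n$. Indeed, if $a^q\equiv 1 \bmod n$, then raising to the $2^\kappa$-th power yields $a^{n-1}\equiv 1 \bmod n$; and if $a^{2^iq}\equiv -1 \bmod n$ for some $i<\kappa$, then squaring this congruence $\kappa-i$ times gives $a^{2^\kappa q}=a^{n-1}\equiv 1 \bmod n$. The decisive observation is that this congruence now holds for \emph{every} nonzero residue class, not merely for the units. From $a\cdot a^{n-2}\equiv 1 \bmod n$ we read off that every nonzero element of $\mathbb{Z}/n\mathbb{Z}$ is invertible, so $\mathbb{Z}/n\mathbb{Z}$ is a field, and therefore $n$ is prime.

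\textbf{Main obstacle.} There is no deep computation here; the two points requiring care are both structural. In the forward direction one must justify that the only square roots of unity are $\pm 1$, which is precisely where the field structure of $\mathbb{Z}/n\mathbb{Z}$ (and hence the primality of $n$) is used; omitting this is the classic gap that lets composites slip through the weaker Fermat test. In the converse direction the subtlety is to notice that the hypothesis quantifies over \emph{all} $a\not\equiv 0 \bmod n$, including non-units. This is exactly what upgrades an ``always passes'' condition into genuine primality: for the ordinary Fermat congruence restricted to units, Carmichael numbers (as noted after Theorem \ref{number-Fermat}) are counterexamples, whereas requiring invertibility of every nonzero residue rules them out and pins down $n$ as prime.
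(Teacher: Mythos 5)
Your proof is correct. Note that the paper itself does not prove this theorem: it is stated in Appendix \ref{fermat} as classical background (due to Miller and Rabin), so there is no internal proof to compare against. Your argument is the standard one and is complete: in the forward direction you correctly locate the first occurrence of $1$ in the squaring chain and use that $\mathbb{Z}/n\mathbb{Z}$ is a field to force the preceding term to be $-1$; in the converse you correctly exploit that the hypothesis ranges over \emph{all} nonzero residues, so $a\cdot a^{n-2}\equiv 1 \bmod n$ makes every nonzero class a unit, whence $\mathbb{Z}/n\mathbb{Z}$ is a field and $n$ is prime --- precisely the point that separates this statement from the Fermat test, which Carmichael numbers defeat. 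The edge cases (the chain's minimal index $j\geq 1$ being well defined because the chain ends in $1$, and squaring the congruence $a^{2^iq}\equiv -1$ exactly $\kappa-i\geq 1$ times) are all handled soundly.
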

We can extend this to a primality test, known as the \textit{Miller Rabin test}, by testing property (\ref{Miller}) for several bases $a$. If the property holds for some pair $n, a$, we say $n$ is a \textit{strong pseudoprime base $a$}, in short \textit{spsp}$(a)$.

Besides the strong Lucas test, there is also its weaker variant, the so-called  \textit{Lucas test}, which relies on the following theorem:
\begin{theorem}[Baillie et al. \cite{Lucas-Baillie}] \label{thrm:Lucas}
Let $U_p(P,Q)$ be a Lucas sequence of the first kind. If $p$ is an odd prime such that $(p,QD)=1$, then we have that 
\begin{align}\label{lucasseq}
    U_{p-\epsilon_D(p)} \equiv  0 \bmod p.
\end{align}
\end{theorem}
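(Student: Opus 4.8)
The plan is to pass from the recursive definition to the closed form of the Lucas sequence in terms of the roots of its characteristic polynomial, and then to exploit the Frobenius map together with Euler's criterion. First I would introduce $a,b$, the roots of $x^2-Px+Q$, so that $a+b=P$, $ab=Q$ and $(a-b)^2=P^2-4Q=D$. Since $\gcd(p,D)=1$ gives $D\not\equiv 0\bmod p$, the roots are distinct modulo $p$ and $a-b$ is invertible; a short induction on the recurrence (\ref{eqn:Lucassequences}) using $U_0=0,\,U_1=1$ then yields the working identity $U_n\equiv (a^n-b^n)/(a-b)\bmod p$. I would also record that $ab=Q$ with $\gcd(p,Q)=1$ forces $a,b\not\equiv 0\bmod p$.

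Next I would split into two cases according to the Legendre symbol $\epsilon_D(p)=\legendre{D}{p}$, which equals the Jacobi symbol because $p$ is prime. If $\epsilon_D(p)=1$ then $D$ is a quadratic residue, so $a,b\in\mathbb{F}_p^{\times}$ and Fermat's little theorem gives $a^{p-1}\equiv b^{p-1}\equiv 1$, whence $U_{p-1}\equiv (1-1)/(a-b)\equiv 0$; here $p-\epsilon_D(p)=p-1$. If $\epsilon_D(p)=-1$ then $a,b\in\mathbb{F}_{p^2}\setminus\mathbb{F}_p$, and Euler's criterion $D^{(p-1)/2}\equiv -1$ gives $(\sqrt D)^p\equiv-\sqrt D$, so that the Frobenius interchanges the roots, $a^p\equiv b$ and $b^p\equiv a$. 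Then $U_{p+1}\equiv (a\cdot a^p-b\cdot b^p)/(a-b)\equiv (ab-ba)/(a-b)\equiv 0$, and here $p-\epsilon_D(p)=p+1$. In either case $U_{p-\epsilon_D(p)}\equiv 0\bmod p$.

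The calculations themselves are routine, so the main care will go into justifying that the closed form and the Frobenius computation are legitimate over the finite fields involved rather than only in characteristic zero: I would check that $2$ and $a-b$ are invertible modulo $p$ (both from $p$ being an odd prime with $\gcd(p,D)=1$) and that $a,b\not\equiv 0$ (from $\gcd(p,Q)=1$). The one genuinely substantive step — and the place where the coprimality hypothesis is indispensable — is the evaluation of $(\sqrt D)^p$ through Euler's criterion, since this is precisely what converts the value of the Legendre symbol into the exponent $p-\epsilon_D(p)$ and hence into the asserted divisibility.
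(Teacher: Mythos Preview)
Your argument is correct and is the standard proof via the Binet formula and the Frobenius automorphism on $\mathbb{F}_{p^2}$. The paper itself does not prove this statement; it is quoted from Baillie--Wagstaff \cite{Lucas-Baillie} as background, so there is no in-paper proof to compare against beyond noting that what you wrote is precisely the classical argument.
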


An integer that satisfies congruence (\ref{lucasseq}) using parameters $(P,Q)$ is called a \textit{Lucas pseudoprime} for $(P,Q),$ short \textit{lpsp($P,Q)$}. We obtain the \textit{Lucas test} by repeatedly checking congruence (\ref{lucasseq}) for various pairs ($P,Q$). 

For a fixed integer $D$, the number of parameter pairs $(P,Q)$ that lead to a Lucas pseudoprime for a composite $n$ are characterized by the following formula:
\begin{theorem}[Baillie et al. \cite{Lucas-Baillie}]\label{countinglpsp} Let $D$ be a fixed positive integer, and let $n= \prod_{i=1}^s p_i^{r_i}$ be a positive odd integer with $\gcd(D,n)=1$. Then, the number of distinct values of $P$ modulo $n$, for which there is a $Q$ such that $P^2-4Q \equiv D \bmod n$ and $n$ is a Lucas pseudoprime for $(P,Q)$ is given by:
$$
L(D,n)=\prod_{i=1}^s (\gcd(n-\epsilon_D(n), p_i-\epsilon_D(p_i))-1).
$$
\end{theorem}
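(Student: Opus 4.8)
The plan is to reduce the count, prime by prime, to a computation in the group of norm-one units of a quadratic algebra over $\mathbb{Z}/p^{r}$, and then reassemble via the Chinese Remainder Theorem. Write $m = n-\epsilon_D(n)$. Since $n$ is odd and $\gcd(D,n)=1$, for a fixed $P$ the value $Q\equiv (P^{2}-D)/4 \bmod n$ is uniquely determined, so counting admissible $P$ is the same as counting admissible pairs. Being an lpsp$(P,Q)$ forces $\gcd(n,QD)=1$, and by CRT it suffices to show that for each maximal prime power $p^{r}$ dividing $n$ the number of residues $P \bmod p^{r}$ with $\gcd(Q,p)=1$ and $p^{r}\mid U_m(P,Q)$ equals $\gcd(m,\,p-\epsilon_D(p))-1$, independently of $r$.

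For the local count fix a prime $p\nmid 2D$, set $R_r=(\mathbb{Z}/p^{r})[x]/(x^{2}-D)$, and let $s\in R_r$ be the image of $x$, a unit square root of $D$. The roots of $T^{2}-PT+Q$ are $\alpha=(P+s)/2$ and $\beta=(P-s)/2$, and $U_m=(\alpha^{m}-\beta^{m})/(\alpha-\beta)$ with $\alpha-\beta=s$ a unit; hence $p^{r}\mid U_m \iff \alpha^{m}=\beta^{m}\iff \gamma^{m}=1$, where $\gamma=\alpha/\beta$, defined exactly when $\beta$ is a unit, i.e. when $N(\beta)=Q$ is a unit, i.e. $\gcd(Q,p)=1$. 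A direct computation gives $N(\gamma)=N(\alpha)/N(\beta)=Q/Q=1$, so $\gamma$ lies in $G_{p^{r}}=\{g\in R_r^{\times}:N(g)=1\}$, whose order is $p^{r-1}(p-\epsilon_D(p))=\varphi_D(p^{r})$ in both the split case $\epsilon_D(p)=1$ (where $R_r\cong \mathbb{Z}/p^{r}\times\mathbb{Z}/p^{r}$) and the inert case $\epsilon_D(p)=-1$.

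Next I would show that $P\mapsto\gamma$ is a bijection from $\{P \bmod p^{r}:\gcd(Q,p)=1\}$ onto $\{\gamma\in G_{p^{r}}:\gamma\not\equiv 1\bmod p\}$. Injectivity follows from $2s(P-P')=0$ forcing $P=P'$, since $2s$ is a unit; for surjectivity I invert the map as $P=s(\gamma+1)/(\gamma-1)$ and check, using $\sigma(s)=-s$ and $\sigma(\gamma)=\gamma^{-1}$ on $G_{p^{r}}$, that this $P$ is Frobenius-fixed, hence a genuine scalar in $\mathbb{Z}/p^{r}$, precisely when $\gamma-1$ is a unit, that is $\gamma\not\equiv 1\bmod p$. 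Finally I count the $m$-torsion of $G_{p^{r}}$ inside this image. Because $p\mid n$ gives $n\equiv 0$, we have $m=n-\epsilon_D(n)\equiv\mp 1\pmod p$, so $p\nmid m$; thus every solution of $\gamma^{m}=1$ has order coprime to $p$. The reduction $G_{p^{r}}\twoheadrightarrow G_p$ has a $p$-group kernel, so it is injective on such elements, identifying them with the $m$-torsion of the cyclic group $G_p$ of order $p-\epsilon_D(p)$, which has exactly $\gcd(m,\,p-\epsilon_D(p))$ elements. Of these only $\gamma=1$ reduces to the kernel and is excluded, leaving $\gcd(m,\,p-\epsilon_D(p))-1$ admissible $P$ modulo $p^{r}$; multiplying over the prime powers by CRT yields the claimed formula.

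I expect the main obstacle to be the prime-power bookkeeping for $r>1$: one must pin down the exact image of the parametrizing Möbius map and argue that the wild ($p$-power) part of $G_{p^{r}}$ contributes nothing, so that the count is genuinely independent of $r$. The case $r=1$ is classical, and the real leverage in the lifting is the elementary arithmetic fact that $p\mid n$ implies $p\nmid (n-\epsilon_D(n))$, which decouples the $m$-torsion from the $p$-Sylow of the unit group.
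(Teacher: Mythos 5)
Your proposal is correct, but note that the paper itself contains no proof of this statement: it is quoted as background from Baillie and Wagstaff \cite{Lucas-Baillie}, so the only comparison available is with the machinery the paper does reproduce and with the original source. Your route is essentially the Arnault-style parametrization that the paper states in the appendix: your map $P \mapsto \gamma = \alpha\beta^{-1}$ with inverse $P = s(\gamma+1)(\gamma-1)^{-1}$ is precisely the correspondence of Lemma \ref{correspendence,tau} and Proposition \ref{count-tau-P} (your $\gamma$ is Arnault's $\tau$), which you extend from the modulus $n$ to prime powers $p^{r}$; the same parametrization underlies Arnault's proof of the stronger count in Theorem \ref{SL(D,n)}. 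Baillie and Wagstaff's original argument is organized differently, through the classical theory of the rank of apparition of Lucas sequences modulo prime powers, but it produces the same local count, so your proof is a genuine alternative to the cited one and arguably cleaner. Your two key mechanisms are sound: since $\gamma - 1 = s\beta^{-1}$ is automatically a unit, the image of the parametrization is exactly $\{\gamma \in G_{p^{r}} : \gamma \not\equiv 1 \bmod p\}$ (and in the split case ``$\gamma-1$ a non-unit'' really is equivalent to $\gamma \equiv 1 \bmod p$, because $N(\gamma)=1$ forces both components to degenerate together); and $p \mid n$ gives $m \equiv -\epsilon_D(n) \pmod p$, so the $m$-torsion meets the reduction kernel only in $\gamma = 1$, yielding $\gcd(m, p-\epsilon_D(p)) - 1$ independently of $r$. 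Two facts you assert without argument should be flagged, though both are standard: $\lvert G_{p^{r}} \rvert = p^{r-1}(p - \epsilon_D(p))$ in the inert case rests on surjectivity of the norm map on Galois rings, and the cyclicity of $G_{p}$ (as $\mathbb{F}_p^{\times}$ in the split case, as the norm-one subgroup of $\mathbb{F}_{p^2}^{\times}$ in the inert case) is what lets you count the $m$-torsion as $\gcd(m, p-\epsilon_D(p))$; with these in place, your Hall-decomposition argument that the $p$-power part contributes nothing closes the prime-power bookkeeping exactly as you anticipate.
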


\subsection{The Baillie-PSW Primality Test}\label{baillietest}

The Baillie-PSW primality test combines a single Fermat/ Miller-Rabin test with base 2 with a (strong) Lucas test. 
Let us formally introduce the algorithm.
\begin{algorithm}
\caption{\textsc{BailliePSW}($n$)}
\label{Baillie-PSW-alg}
\textbf{Input:} Odd integer $n$ to test for primality.\\
\textbf{Output:} A probable prime using the Baillie-PSW test or ``Composite''.
\begin{enumerate}
\item If $n$ is divisible by any prime less than some convenient limit, e.g. 1000, \textbf{output} ``Composite'', else continue.
\item If $n$ is not a (strong) pseudoprime base 2, \textbf{output} ``Composite'', else continue.
\item Check if $n$ is not a perfect square and use one of the methods to determine $(P,Q)$:
\begin{itemize}
\item \textit{Method A:} Let $D$ be the first element of the sequence $5, -7, 9, -11, 13, \dots$ with $\epsilon_D(n)=-1$. Let $P=1$ and $Q=(1-D)/4$.
\item \textit{Method B:} Let $D$ be the first element of the sequence $5, 9, 13, 17,\dots$ 
with $\epsilon_D(n)=-1$. Let $P= \min\{m \in \N \mid m \text{ odd, } m>\sqrt{D}\}$and $Q=(P^2-D)/4.$
\end{itemize}
\item If $n$ is not a (strong) Lucas pseudoprime for $(P,Q)$, \textbf{output} ``Composite'', else \textbf{output} $n$. 
\end{enumerate}
\end{algorithm}

To date, no composites have been identified passing a Baillie-PSW test, leading to the conjecture that this test could, in fact, determinstically test primality. Gilchrist \cite{gilchrist} even verified that using  Method A of Algorithm \ref{Baillie-PSW-alg} no Baillie-PSW pseudoprimes below $2^{64}$ exist. In the next section we discuss why the Fermat/ Miller-Rabin and (strong) Lucas test with well-chosen parameters might be independent of each other.

\subsection{Orthogonality in the Baillie-PSW Test}\label{orthogonality}
If we choose the parameters $D$, $P$ and $Q$ as described in method B of Algorithm \ref{Baillie-PSW-alg}, then the first 50 Carmichael numbers and several other base-2 Fermat pseudoprimes will never be Lucas pseudoprimes \cite{Lucas-Baillie}. We now give a heuristic argument for the potential orthogonality of the tests, most of them given by Arnault \cite{doct-thesis-arnualt}. However, to understand this, let us first establish the necessary mathematical prerequisites.

\hspace{-2.8mm} Let $\mathcal{O}$ be the set of algebraic integers and $\mathcal{O}_{\mathbb{Q}[\sqrt{D}]}$ the ring of integers of $\mathbb{Q}[\sqrt{D}].$ 

\begin{lemma}[Arnault \cite{doct-thesis-arnualt}]\label{correspendence,tau}
Let $P, Q$ be integers such that $D=P^2-4Q\neq 0$. Let $n$ be an integer relatively prime to $2QD$.
For the Lucas sequences $(U_n), (V_n)$ associated with $P,Q$ we have tha relations
$$
U_k=\frac{\alpha^k-\beta^k}{\alpha - \beta}, \hspace{1mm} V_k=\alpha^k+\beta^k, \hspace{1mm} \textnormal{ for all } k \in \mathbb{N},$$
where $\alpha, \beta$ are the two roots of the polynomial $X^2-PX+Q.$ We put $\tau=\alpha {\beta}^{-1} \in \mathcal{O}_{\mathbb{Q}[\sqrt{D}]}$. For $k\in\mathbb{N}$, we have the equivalences 
\begin{align*}
    n \mid U_k & \Leftrightarrow \tau^k = 1, \\
    n \mid V_k & \Leftrightarrow \tau^k = -1.
\end{align*}
In particular, if $n$ is composite and relatively prime to $2QD$, it is a $slpsp(P,Q)$ if and only if 
\begin{align*}
    \tau^q \equiv 1 \bmod n,
    \vspace{-3mm}
\end{align*}
or
\begin{align*}
\vspace{-3mm}
    \text{there exists } i \text{ such that } 0 \leq i < \kappa \text{ and } \tau^{2^iq} \equiv -1 \bmod n,
\end{align*}
where $n - \epsilon_D(n) =2^\kappa q$ with $q$ odd.
\end{lemma}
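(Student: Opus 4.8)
The plan is to establish the closed-form (Binet) expressions for $U_k$ and $V_k$ first, then convert the divisibility conditions $n\mid U_k$ and $n\mid V_k$ into multiplicative conditions on $\tau$ by working inside the quotient ring $R=\mathcal{O}_{\mathbb{Q}[\sqrt{D}]}/n\mathcal{O}_{\mathbb{Q}[\sqrt{D}]}$, and finally read off the $slpsp$ characterization by feeding these equivalences into the congruence of Theorem \ref{defslpsp}.

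First I would record the elementary symmetric relations: since $\alpha,\beta$ are the roots of $X^2-PX+Q$ we have $\alpha+\beta=P$, $\alpha\beta=Q$, and $(\alpha-\beta)^2=(\alpha+\beta)^2-4\alpha\beta=P^2-4Q=D$. Because $U_n$ and $V_n$ obey the recurrence \eqref{eqn:Lucassequences} whose characteristic polynomial is exactly $X^2-PX+Q$, the general solution has the form $A\alpha^n+B\beta^n$. Imposing $U_0=0,U_1=1$ forces $A=-B=(\alpha-\beta)^{-1}$, giving $U_n=(\alpha^n-\beta^n)/(\alpha-\beta)$, and imposing $V_0=2,V_1=P=\alpha+\beta$ forces $A=B=1$, giving $V_n=\alpha^n+\beta^n$. (Either solve the linear system or verify by a short induction.)

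Next I would pass to $R$. The elements $\alpha,\beta$ are algebraic integers lying in $\mathbb{Q}[\sqrt{D}]$, hence in $\mathcal{O}_{\mathbb{Q}[\sqrt{D}]}$, so their images in $R$ are well defined. The hypothesis $\gcd(n,2QD)=1$ is exactly what makes the reduction behave: since $\alpha\beta=Q$ is a unit modulo $n$, both $\alpha$ and $\beta$ are units in $R$, so $\tau=\alpha\beta^{-1}$ is defined; and since $(\alpha-\beta)^2=D$ is a unit modulo $n$, the element $\alpha-\beta$ is itself a unit in $R$. Moreover $\mathcal{O}_{\mathbb{Q}[\sqrt{D}]}$ is a free $\mathbb{Z}$-module of rank $2$, whence $n\mathcal{O}_{\mathbb{Q}[\sqrt{D}]}\cap\mathbb{Z}=n\mathbb{Z}$, so for the \emph{integer} $U_k$ the condition $n\mid U_k$ is equivalent to $U_k\equiv 0$ in $R$ (similarly for $V_k$). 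Combining the identity $\alpha^k-\beta^k=(\alpha-\beta)U_k$ with the invertibility of $\alpha-\beta$ yields $n\mid U_k\Leftrightarrow\alpha^k\equiv\beta^k\bmod n$, and multiplying by the unit $\beta^{-k}$ gives $\tau^k\equiv 1\bmod n$; the same computation applied to $V_k=\alpha^k+\beta^k$ gives $n\mid V_k\Leftrightarrow\tau^k\equiv -1\bmod n$.

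Finally I would substitute these two equivalences into Theorem \ref{defslpsp}. Writing $n-\epsilon_D(n)=2^\kappa q$ with $q$ odd, a composite $n$ relatively prime to $2QD$ is an $slpsp(P,Q)$ precisely when $n\mid U_q$ or $n\mid V_{2^iq}$ for some $0\leq i<\kappa$; translating through the equivalences turns this into $\tau^q\equiv 1\bmod n$, or $\tau^{2^iq}\equiv -1\bmod n$ for some $0\leq i<\kappa$, which is the asserted characterization. I expect the only genuinely delicate step to be the ring-theoretic bookkeeping in the third paragraph, namely verifying that $\alpha,\beta,\alpha-\beta$ are units in $R$ and that divisibility by $n$ in $\mathbb{Z}$ coincides with vanishing in $R$; but each of these reduces at once to the coprimality hypothesis $\gcd(n,2QD)=1$, so no serious obstacle remains.
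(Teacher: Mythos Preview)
Your proof is correct and follows the natural route: Binet formulas from the characteristic equation, then invertibility of $\alpha,\beta,\alpha-\beta$ in $\mathcal{O}_{\mathbb{Q}[\sqrt{D}]}/n\mathcal{O}_{\mathbb{Q}[\sqrt{D}]}$ via $\gcd(n,QD)=1$, then substitution into the definition of $slpsp(P,Q)$. Note that the paper itself does not prove this lemma; it is quoted from Arnault's thesis \cite{doct-thesis-arnualt} and stated without argument, so there is no in-paper proof to compare against. One small quibble: in your final paragraph you invoke Theorem~\ref{defslpsp}, which is stated for primes; what you actually need is the \emph{definition} of $slpsp(P,Q)$ given immediately after it (a composite $n$ coprime to $2QD$ satisfying~\eqref{thrm-slpsp}), so cite that instead.
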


For $\alpha= a+b\sqrt{D}\in \mathbb{Q}[\sqrt{D}]$, we define its \emph{conjugate} $\overline{\alpha}=a-b\sqrt{D}$. The \emph{norm} of $\alpha$ is defined by $N(\alpha)=\alpha \overline{\alpha}.$ We denote the multiplicative group of norm 1 elements of $\mathcal{O}_{\mathbb{Q}[\sqrt{D}]}/(n)$ by $(\mathcal{O}_{\mathbb{Q}[\sqrt{D}]}/(n))^\wedge$.
The following proposition connects $P$ and $Q$ defined by the Lucas sequence with the elements $\tau$ of norm 1 in $\mathcal{O}_{\mathbb{Q}[\sqrt{D}]}$:
\begin{proposition}[Arnault \cite{doct-thesis-arnualt}]\label{count-tau-P}
Let $D$ be an integer that is not a perfect square and let $n>1$ be an odd integer with $\gcd(n,D)=1.$ Then, for every integer $P$, there exists an integer $Q$, uniquely determined modulo $n$, such that $P^2-4Q \equiv D \bmod n$. Furthermore, the set of integers $P$, such that
\begin{equation*}
    \begin{cases}
    0 \leq P < n, \\
    \gcd(P^2-D,n)=1 \text{ i.e. } \gcd(Q,n)=1
    \end{cases}
\end{equation*}
is in one-to-one correspondence with the elements $\tau \in (\mathcal{O}_{\mathbb{Q}[\sqrt{D}]}/(n))^\wedge$, such that  $\tau -1$ is a unit in $\mathcal{O}_{\mathbb{Q}[\sqrt{D}]}/(n).$ 
This correspondence is expressed by the formulas:
\begin{equation*}
    \begin{cases}
    \tau \equiv (P+\sqrt{D})(P-\sqrt{D})^{-1}\\
    P \equiv \sqrt{D}(\tau +1) (\tau -1)^{-1} 
    \end{cases} \bmod n\mathcal{O}_{\mathbb{Q}[\sqrt{D}]}.
\end{equation*}
\end{proposition}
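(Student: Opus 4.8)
The plan is to work throughout in the finite ring $R=\mathcal{O}_{\mathbb{Q}[\sqrt{D}]}/(n)$ and to exploit that, because $n$ is odd and coprime to $D$, both $2$ and $\sqrt{D}$ are units of $R$. First I would dispose of the elementary claim: given $P$, the congruence $P^2-4Q\equiv D \bmod n$ reads $4Q\equiv P^2-D \bmod n$, and since $\gcd(4,n)=1$ this determines $Q\equiv 4^{-1}(P^2-D) \bmod n$ uniquely, with $\gcd(Q,n)=1$ equivalent to $\gcd(P^2-D,n)=1$. This settles the first sentence of the proposition and fixes the dictionary between the conditions on $Q$ and on $P^2-D$.

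Before constructing the correspondence I would record the ring-theoretic facts to be reused. The conjugation $\sigma\colon a+b\sqrt{D}\mapsto a-b\sqrt{D}$ is a ring automorphism of $R$ whose fixed part is exactly the image of $\mathbb{Z}/(n)$: if $\sigma$ fixes $a+b\sqrt{D}$ then $2b\sqrt{D}\equiv 0$, forcing $b\equiv 0$ because $2$ and $\sqrt{D}$ are units. The norm $N(x)=x\,\sigma(x)$ satisfies $N(m)=m^2$ for $m\in\mathbb{Z}/(n)$, and an integer $m$ is a unit of $R$ if and only if $\gcd(m,n)=1$ (the forward direction via $N(m)=m^2$). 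Finally, an element $\tau\in(\mathcal{O}_{\mathbb{Q}[\sqrt{D}]}/(n))^\wedge$ is precisely one with $N(\tau)=\tau\,\sigma(\tau)=1$, hence a unit with $\sigma(\tau)=\tau^{-1}$.

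Next I would define the two maps and verify they are well defined and land in the claimed sets. For $P$ with $\gcd(P^2-D,n)=1$, the factorization $P^2-D=(P+\sqrt{D})(P-\sqrt{D})$ shows both factors are units of $R$, so $\tau=(P+\sqrt{D})(P-\sqrt{D})^{-1}$ is defined; applying $\sigma$ and multiplying gives $N(\tau)=1$, while the identity $\tau-1=2\sqrt{D}(P-\sqrt{D})^{-1}$ exhibits $\tau-1$ as a product of units. Conversely, given $\tau$ with $N(\tau)=1$ and $\tau-1$ a unit, set $P=\sqrt{D}(\tau+1)(\tau-1)^{-1}$; using $\sigma(\tau)=\tau^{-1}$ a short computation yields $\sigma(P)=P$, so $P\in\mathbb{Z}/(n)$ and has a unique representative with $0\le P<n$, and the relations $P-\sqrt{D}=2\sqrt{D}(\tau-1)^{-1}$ and $P+\sqrt{D}=2\sqrt{D}\,\tau(\tau-1)^{-1}$ make $P^2-D=N(P-\sqrt{D})$ a unit, i.e.\ $\gcd(P^2-D,n)=1$.

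Finally I would check the maps are mutually inverse by direct substitution: from $\tau=(P+\sqrt{D})(P-\sqrt{D})^{-1}$ one gets $\tau+1=2P(P-\sqrt{D})^{-1}$ and $\tau-1=2\sqrt{D}(P-\sqrt{D})^{-1}$, whence $\sqrt{D}(\tau+1)(\tau-1)^{-1}=P$; symmetrically, the two expressions for $P\pm\sqrt{D}$ above give $(P+\sqrt{D})(P-\sqrt{D})^{-1}=\tau$. I expect the main obstacle to be not the algebra, which is routine once every inverted element is known to be a unit, but the clean justification that the conjugation-fixed part of $\mathcal{O}_{\mathbb{Q}[\sqrt{D}]}/(n)$ equals $\mathbb{Z}/(n)$ and that a rational residue is a unit of $R$ exactly when coprime to $n$. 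These are where the hypotheses ``$n$ odd'', ``$\gcd(n,D)=1$'' and ``$D$ not a square'' are genuinely used, and they should be verified for both possible shapes of $\mathcal{O}_{\mathbb{Q}[\sqrt{D}]}$, namely $\mathbb{Z}[\sqrt{D}]$ and $\mathbb{Z}[(1+\sqrt{D})/2]$.
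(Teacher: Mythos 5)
The paper itself states this proposition without proof, importing it verbatim from Arnault's thesis \cite{doct-thesis-arnualt}, so there is no in-paper argument to compare against line by line; judged on its own, your proof is correct and complete, and it is in substance the standard argument from Arnault's source. Uniqueness of $Q$ is just invertibility of $4$ modulo the odd $n$; the bijection is the Cayley-type fractional-linear transform $\tau=(P+\sqrt{D})(P-\sqrt{D})^{-1}$, $P=\sqrt{D}(\tau+1)(\tau-1)^{-1}$, with every inverted element certified a unit via $N(P+\sqrt{D})=P^2-D$ and your identities $\tau-1=2\sqrt{D}(P-\sqrt{D})^{-1}$, $P-\sqrt{D}=2\sqrt{D}(\tau-1)^{-1}$, $P+\sqrt{D}=2\sqrt{D}\,\tau(\tau-1)^{-1}$, and $\sigma(P)=P$ deduced from $\sigma(\tau)=\tau^{-1}$ (which is legitimate because $\tau\,\sigma(\tau)=1$ already forces $\tau$ to be a unit). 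The two auxiliary facts you flag are indeed exactly where the hypotheses enter, and both hold, with your case distinction on the shape of $\mathcal{O}_{\mathbb{Q}[\sqrt{D}]}$ collapsing pleasantly: writing $D=c^2d$ with $d$ squarefree, the hypothesis $\gcd(n,D)=1$ gives $\gcd(n,cd)=1$, so $\sqrt{d}$ and $\sqrt{D}=c\sqrt{d}$ are units of $R=\mathcal{O}_{\mathbb{Q}[\sqrt{D}]}/(n)$; and since $2$ is invertible modulo the odd $n$, the inclusion $\mathbb{Z}[\sqrt{d}]\subseteq\mathcal{O}_{\mathbb{Q}[\sqrt{D}]}$, whose index divides $2$, induces an isomorphism of the quotients mod $n$, so in either case $R$ is free over $\mathbb{Z}/n\mathbb{Z}$ with basis $1,\sqrt{d}$ (equivalently $1,\sqrt{D}$, which is what your coordinates $a+b\sqrt{D}$ implicitly use). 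With that basis in hand, $\sigma$-invariance of $a+b\sqrt{d}$ forces $2b\sqrt{d}=0$, hence $b=0$, so the fixed ring is $\mathbb{Z}/n\mathbb{Z}$; this in turn guarantees that $N(x)=x\,\sigma(x)$ lands in $\mathbb{Z}/n\mathbb{Z}$ for every $x\in R$, which is the small missing link your parenthetical ``forward direction via $N(m)=m^2$'' relies on: if $mu=1$ in $R$ then $m^2N(u)=1$ with $N(u)\in\mathbb{Z}/n\mathbb{Z}$, so $\gcd(m,n)=1$. In a polished write-up I would make these two sentences explicit; everything else is routine verification exactly as you describe, and the mutual-inversion computation checks out.
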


The next lemma will be used in the heuristic argument:

\begin{lemma}\label{gcd-lemma}
Let $n=p_i m_i$ be an odd integer. Then the following holds
\begin{align*}
\gcd(n-1,p_i-1)&=\gcd(n-1,m_i-1),\\
\gcd(n+1,p_i+1)&=\gcd(n+1,m_i-1).
\end{align*}
\end{lemma}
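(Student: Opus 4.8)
The plan is to reduce both identities to the single elementary fact that $\gcd(a,c) = \gcd(b,c)$ whenever $a \equiv b \pmod{c}$, and then to observe that each of the four gcd's appearing in the statement collapses to one of two \emph{symmetric} quantities, namely $\gcd(p_i - 1, m_i - 1)$ and $\gcd(p_i + 1, m_i - 1)$. Once both sides of each identity are shown to equal the same symmetric quantity, the equalities follow immediately.

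For the first identity I would work modulo $p_i - 1$ and modulo $m_i - 1$ separately. Since $n = p_i m_i$ and $p_i \equiv 1 \pmod{p_i - 1}$, we get $n - 1 \equiv m_i - 1 \pmod{p_i - 1}$, so $\gcd(n-1, p_i - 1) = \gcd(m_i - 1, p_i - 1)$. Symmetrically, using $m_i \equiv 1 \pmod{m_i - 1}$, we obtain $n - 1 \equiv p_i - 1 \pmod{m_i - 1}$, hence $\gcd(n-1, m_i - 1) = \gcd(p_i - 1, m_i - 1)$. Both sides therefore equal $\gcd(p_i - 1, m_i - 1)$, which proves the first identity.

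For the second identity the same device works, with attention to signs. Modulo $p_i + 1$ we have $p_i \equiv -1$, so $n + 1 = p_i m_i + 1 \equiv -(m_i - 1) \pmod{p_i + 1}$, and since $\gcd$ is insensitive to sign, $\gcd(n+1, p_i + 1) = \gcd(m_i - 1, p_i + 1)$. Modulo $m_i - 1$ we have $m_i \equiv 1$, so $n + 1 \equiv p_i + 1 \pmod{m_i - 1}$, giving $\gcd(n+1, m_i - 1) = \gcd(p_i + 1, m_i - 1)$. Both sides equal $\gcd(p_i + 1, m_i - 1)$, establishing the second identity and explaining the asymmetry between the $p_i + 1$ term on the left and the $m_i - 1$ term on the right.

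There is no genuine obstacle here; the only point requiring a little care is the sign in the second identity, where $n+1$ reduces to $-(m_i - 1)$ rather than $m_i - 1$ modulo $p_i + 1$, which is harmless since $\gcd$ ignores the unit $-1$. The oddness of $n$ is not actually used in the gcd manipulation itself; it appears in the hypothesis only because the lemma is invoked in the heuristic orthogonality argument, where $n$ is an odd composite and the quantities $n-1$ and $n+1$ arise respectively from the Fermat and the (strong) Lucas test.
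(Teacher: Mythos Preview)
Your proof is correct and follows essentially the same approach as the paper: both arguments reduce each side of the two identities to the common intermediate quantity $\gcd(p_i-1,m_i-1)$ (respectively $\gcd(p_i+1,m_i-1)$), the paper via the explicit rewriting $n-1=(m_i-1)p_i+(p_i-1)$ together with $\gcd(a+b,a)=\gcd(a,b)$, and you via the equivalent congruence formulation $\gcd(a,c)=\gcd(b,c)$ when $a\equiv b\pmod c$. Your remark that the oddness hypothesis is not actually needed for the gcd manipulations themselves is also correct.
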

\begin{proof} For all $a,b$ we have that $\gcd(a+b,a)=\gcd(a,b)$ and $\gcd(a-1,a)=1$. With this, we obtain for the first equality
\begin{align*}
\gcd(n-1,p_i-1)&=\gcd((m_i-1)p_i+(p_i-1),p_i-1)
=\gcd((m_i-1)p_i,p_i-1)\\
&=\gcd(m_i-1,p_i-1)= \gcd((p_i-1)m_i, m_i-1)\\
&= \gcd((p_i-1)m_i+(m_i-1),m_i-1)=\gcd(n-1,m_i-1)
\end{align*}
The second equality can be established in a similar manner.
\end{proof}

Let us now give the heuristic argument why composite numbers rarely pass the Baillie-PSW test.

Let $P, Q, D$ and $b$ be integers satisfying $P^2-4Q = D$. Consider $n=p_1\dots p_r$ with $\gcd(n,QD)$ and $\legendre{D}{n}=-1$ such that $n$ is both a $psp(b)$ and $lpsp(P,Q)$. Let $\tau$ be the element associated with the pair $(P,Q)$ in Proposition \ref{count-tau-P}. Fermat's Little Theorem implies:
\begin{equation*}
\begin{cases}
b^{\gcd(n-1,p_i-1)} \equiv 1 \text{ in } \mathbb{Z}/n\mathbb{Z},\\
\tau^{\gcd(n+1,p_i-\epsilon_D(p_i)} \equiv 1 \text{ in } \mathcal{O}_{\mathbb{Q}[\sqrt{D}]}/n\mathcal{O}_{\mathbb{Q}[\sqrt{D}]}
\end{cases}
\textnormal{for every } i.
\end{equation*}
We let
\begin{equation*}
    \begin{cases}
    d_i =\gcd(n-1,p_i-1),\\
    d_i'=\gcd(n+1,p_i-\epsilon_D(p_i)).
    \end{cases}
\end{equation*}
We can express the equivalences as follows:
\begin{align*}
    b^{n-1} \equiv 1 \bmod p_i &\Leftrightarrow b \text{ is a }(p_i-1)/{d_i}\text{th root }\bmod p \\
    \tau^{n+1} =1 \text{ in } \mathcal{O}/p_i\mathcal{O} &\Leftrightarrow \tau\text{ is a } (p_i-\epsilon_D(p_i))/{d_i'}\text{th root in } (\mathcal{O}/p_i\mathcal{O})^\wedge.
\end{align*}
Heuristically, these relations have a small chance of being true when the integers $d_i$ and ${d_i}'$ are small relative to the group order of $(\mathbb{Z}/p_i\mathbb{Z})^\times$ and $(\mathcal{O}/p_i\mathcal{O})^\wedge$. 

For $\epsilon_D(p_i)=1$, we observe that
$$
d_id_i'= \gcd(n-1,p_i-1)\gcd(n+1,p_1-1)\leq 2(p_1-1),
$$
implying that it is not possible for both $\gcd$s to be large.

For $\epsilon_D(p_i)=-1$, we let $n=p_im_i$. By Lemma \ref{gcd-lemma}, we have
\begin{equation*}
\begin{cases}
\gcd(n-1,p_i-1)=\gcd(n-1,p_i-1)=\gcd(n-1,m_i-1), \\
\gcd(n+1,p_i+1)=\gcd(n+1,p_i+1)=\gcd(n+1,m_i-1).
\end{cases}
\end{equation*}
We conclude that
$$
d_id_i'=\gcd(n-1,p_i-1)\gcd(n+1,p_i+1)\leq 2(m_i-1),
$$
and, following the same argument as above, it is evident that the $\gcd$s cannot be large. Consequently, $d_i$ and $d_i'$ rather small, leading to a scarcity of pseudoprimes.

\subsection{Rationale for Avoiding \texorpdfstring{$\epsilon_D(n)=1$}{epsilon}}\label{rationale}

We now discuss why it is best to avoid the case where $\epsilon_D(n)=1$, as in such instances, the Fermat/ Miller-Rabin test and the (strong) Lucas test are not independent. Let us consider the various scenarios that can arise.

\subsubsection{When D is a Perfect Square}
If $D\neq 0$ is a perfect square, the strong Lucas test reduces to the Rabin-Miller test. If $\gcd(n,2D)=1$, we put $T=\alpha\beta^{-1} \bmod n$, with $\alpha, \beta \in \mathbb{Z}$. Lemma \ref{correspendence,tau} yields the following equivalences for $k\in \mathbb{N}$:
\begin{align*}
    n \mid U_k & \Leftrightarrow T^k \equiv 1 \bmod n \\
    n \mid V_k & \Leftrightarrow T^k \equiv -1 \bmod n.
\end{align*}
For every $i$, the decompositions $n-1=n-\epsilon_D(n)=2^\kappa q$, $p_i-1=p_i-\epsilon_D(p_i)=2^{k_i}q_i$ are the same, making $n$ a $slpsp(P,Q)$ if and only if it is an $spsp(2QD)$.
There is an easy fix to ensure that $D$ is not a perfect square, namely by applying Newton's method for square roots.

\begin{algorithm}\label{newton}
 \textbf{Input:} $n$-bit integer $D$ to check for being a perfect square.\\
\textbf{Output:} "Perfect square" or "Not a perfect square".
    Set $m=\lceil \frac{n}{2}\rceil$ and $i=0$\;
    Select random $x_0$ such that $2^{m-1} \leq x_0 < 2^{m}$\;

    \While{$x_i^2 \geq 2^m + D$}{
        $i = i + 1$\;  $x_i = \frac{1}{2}(x_{i-1} + \frac{D}{x_{i-1}})$\;
    }
    
    \eIf{$D = \lfloor {x_i}^2 \rfloor$}{
        status = "perfect square"\;
    }{
        status = "not a perfect square"\;
    }
    
    \textbf{Return} status\;
    \caption{Checking for a perfect square using Newton's method}
\end{algorithm}

\vspace{-2mm}

\subsubsection{When D is a Square Modulo n}
Now, let $D$ not necessarily be a perfect square, but a square modulo $n$. The next lemma allows us to construct Lucas pseudoprimes from Fermat pseudoprimes.

\begin{lemma}\label{squaremodn}
Let $n$ be an odd integer that is both a $psp(b)$ and $psp(c)$. Then, $n$ is a $lpsp(P,Q)$ for $P \equiv b + c $ and $Q \equiv bc \bmod n$. 
\end{lemma}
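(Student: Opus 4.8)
The plan is to exploit the fact that the parameters $P \equiv b+c$ and $Q \equiv bc$ are chosen precisely so that the characteristic polynomial $X^2 - PX + Q$ factors as $(X-b)(X-c)$. First I would record the immediate consequences: the two roots of this polynomial are $\alpha \equiv b$ and $\beta \equiv c \bmod n$, the discriminant is $D = P^2 - 4Q = (b+c)^2 - 4bc = (b-c)^2$, and hence $D$ is a perfect square. Since $n$ is a $psp(b)$ and a $psp(c)$, the bases $b,c$ are coprime to $n$; for the statement to be meaningful I also need $\gcd(b-c,n)=1$, equivalently $\gcd(D,n)=1$, which I would note is exactly the standing coprimality hypothesis $\gcd(n,QD)=1$ built into the definition of a Lucas pseudoprime.

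Next I would compute the relevant Jacobi symbol. Because $D=(b-c)^2$ is a perfect square and coprime to $n$, we have $\epsilon_D(n)=\legendre{D}{n}=1$. Consequently the Lucas pseudoprime condition to be verified is $U_{n-\epsilon_D(n)} = U_{n-1} \equiv 0 \bmod n$, rather than the congruence at exponent $n+1$.

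The core step is then a one-line evaluation of $U_{n-1}$ via the closed form from Lemma \ref{correspendence,tau}. With $\alpha \equiv b$ and $\beta \equiv c$ we have $U_{n-1} = \frac{b^{n-1}-c^{n-1}}{b-c}$, so $(b-c)\,U_{n-1} = b^{n-1}-c^{n-1}$. Since $n$ is a pseudoprime to both bases, $b^{n-1}\equiv 1$ and $c^{n-1}\equiv 1 \bmod n$, whence $(b-c)\,U_{n-1}\equiv 0 \bmod n$. Invoking $\gcd(b-c,n)=1$ lets me cancel the factor $b-c$ and conclude $U_{n-1}\equiv 0 \bmod n$, which is precisely the defining congruence for $n$ to be an $lpsp(P,Q)$.

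I do not expect a serious obstacle, as the argument is essentially the factorisation identity for $\alpha^k-\beta^k$. The only points requiring care, which I would state explicitly rather than gloss over, are (i) that $D$ being a perfect square forces $\epsilon_D(n)=1$, fixing the target exponent at $n-1$, and (ii) that the cancellation of $b-c$ is legitimate only under the coprimality condition $\gcd(b-c,n)=1$ inherited from $\gcd(n,QD)=1$. Alternatively, I could bypass the explicit formula and apply Lemma \ref{correspendence,tau} directly: since $D$ is a square, $\tau \equiv b c^{-1} \bmod n$ lies in $\mathbb{Z}/n\mathbb{Z}$, and $n \mid U_{n-1}$ is equivalent to $\tau^{n-1}\equiv 1$, i.e. to $b^{n-1}\equiv c^{n-1}\bmod n$, which again holds.
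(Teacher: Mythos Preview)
Your proposal is correct and follows essentially the same route as the paper: identify the roots of $X^2-PX+Q$ with $b$ and $c$, observe $D=(b-c)^2$ so that $\epsilon_D(n)=1$, and then evaluate $U_{n-1}=\frac{b^{n-1}-c^{n-1}}{b-c}\equiv 0\bmod n$ using the pseudoprimality hypotheses. You are in fact more careful than the paper's version, since you make explicit the need for $\gcd(b-c,n)=1$ to cancel the denominator, which the paper leaves implicit.
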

\begin{proof}
Let $\alpha$ and $\beta$ represent the distinct roots of the polynomial $X^2-PX+Q$. Then $\{ \alpha, \beta \} \equiv \{b,c\} \bmod n$, because the quadratic polynomials $X^2-PX+Q$ and $X^2-(b+c)X+bc$ have congruent coefficients modulo $n$. Since $\legendre{D}{n}=\legendre{P^2-4Q}{n}=\legendre{(b-c)^2}{n}=1$, we get $U_{n-1}= \frac{b^{n-1}-c^{n-1}}{b-c} \equiv 1$, so $n$ is a $lpsp(P,Q).$
\end{proof}
Let $P$ and $Q$ satisfy $D=r^2 \bmod n$. Solving the simultaneous equations $P=b+c$ and $Q=bc$ modulo $n$ yields $b=\frac{(P-r)(n+1)}{2}$ and $c=\frac{(P+r)(n+1)}{2}$. For instance, if $n$ is a $psp(2)$, it might be a $psp(b)$ and a $psp(c)$, assuming that $\gcd(n,bc)=1$. This is plausible, considering the possibility of it being a Carmichael number or either $b$ or $c$ being $\pm 1$. Lemma \ref{squaremodn} implies that $n$ is a $lpsp(P,Q)$, indicating that the Lucas test will not be independent of the Fermat test.

\subsubsection{When the Jacobi Symbol is 1}
Most is from \cite{Lucas-Baillie}. Let $n=\prod_{i=1}^s p_i^{r_i}$ and $\epsilon_D(n)=1$. By Theorem \ref{countinglpsp}, there  $\prod_{i=1}^s (\gcd(n-1, p_i - \epsilon_D(p_i))-1)$ values of values $P \mod n$, for which there exists a $Q$ making $n$ a $lpsp(P,Q)$. Similarly, $n$ is by Theorem \ref{number-Fermat} a $psp(a)$ for $\prod_{i=1}^s \gcd(n-1, p_i-1)$ values of $a \mod n$. The product $\gcd(n-1, p_i-1)\cdot \gcd(n-1, p_i+1)$ is less than $2(p_1+1)$, but for $\epsilon_D(p_i)=+1$, the $\gcd$s are equal, allowing both to be large. In many cases, if $n$ is a $lpsp(P,Q)$ for many values of $P$ with $\epsilon_D(n)= +1$, then might also be a $psp(a)$ for many values of $a$. The computer calculations support these observations \cite{Lucas-Baillie}.
\hspace{-1mm} Let us do a similar analysis for the strong Lucas test. Again, let $n=\prod p_i^{r_i}$ and  $\epsilon_D(n)=1$. Let $n-1=2^{\kappa}q$, $p_i- \epsilon_D(p_i)=2^{k_i}q_i$, and $p_i-1=2^{l_i}s_i$ with $q_i, s_i$ odd. The number of pairs $(P,Q)$ such that $0 \leq P, Q <n$, $D=P^2-4Q$, $\gcd(Q,n)=1$, making $n$ a $slpsp(P,Q)$ is, by Theorem \ref{SL(D,n)}, equal to $\prod_{i=1}^s (\gcd(q,q_i)-1) + \sum_{j=0}^{k_1-1}2^{js}\prod_{i=1}^s\gcd(q,p_i-\epsilon_D(p_i))$. The number of bases $a$ such that $n$ is a $spsp(a)$ is, by \cite{Monier} and \cite{Rabin} equal to $(1+\sum_{j=0}^{k_1-1}2^{js}) \prod_{i=1}^s \gcd(q,p_i-1)$. Again, the product $\gcd(q,p_i-1)\cdot\gcd(q,p_i+1) $ cannot exceed $2\gcd(q,p_i+1)$, but for $\epsilon_D(p_i)=+1$, the $\gcd$s are the same, allowing both to be large.

\section*{Acknowledgments}
Supported by the 6G research cluster funded by the Federal Ministry of Education and Research (BMBF) in the programme of  ``Souverän. Digital. Vernetzt.'' Joint project 6G-RIC, project identification number: 16KISK020K.

\end{document}